\newcommand{\mylabel}[2]{#2\def\@currentlabel{#2}\label{#1}}
\definecolor{dgreen}{rgb}{0,0.5,0}
\setlist{nosep}
\newif\ifshuffle
\newcommand{\nc}{\newcommand}
\nc{\DMO}{\DeclareMathOperator}
\nc{\st}{\star}
\nc\m[2]{m_{#1}(#2)}
\nc{\ReduceTree}{\texttt{ReduceTree}\xspace}
\nc{\PolyPriLearn}{\texttt{PolyPriLearn}\xspace}
\nc{\PPPLearn}{\texttt{PolyPriPropLearn}\xspace}
\nc{\GenericLearner}{\texttt{GenericLearner}\xspace}
\nc{\R}{\mathbb{R}}
\nc{\BR}{\mathbb{R}}
\nc{\BA}{\mathsf{A}}
\nc{\BM}{\mathbb{M}}
\nc{\BT}{\mathbb{T}}
\nc{\BN}{\mathbb{N}}
\nc{\BZ}{\mathbb{Z}}
\nc{\Z}{\mathbb{Z}}
\nc{\ep}{\varepsilon}
\DMO{\height}{ht}
\renewcommand{\epsilon}{\varepsilon}
\nc{\ra}{\rightarrow}
\DMO{\Err}{err}
\DMO{\Opt}{opt}
\DMO{\Est}{Est}
\DMO{\good}{good}
\DMO{\negpt}{neg-pt}
\DMO{\VV}{{V}}
\DMO{\LL}{{L}}
\DMO{\finsupp}{fin}
\DMO{\supp}{supp}
\nc{\fin}{{\finsupp}}
\nc{\err}[2]{\Err_{#1}(#2)}
\nc{\rca}{\mathscr{B}}
\nc{\bt}{b}
\nc{\hMLp}{\hat\ML'}
\nc{\Rprot}{R}
\nc{\Sprot}{S}
\nc{\Aprot}{A}
\nc{\Pprot}{P}
\nc{\Pdist}{D}
\nc{\Qdist}{F}
\nc{\pdist}{d}
\nc{\qdist}{f}
\nc{\DP}{differentially private\xspace}
\nc{\SD}{\mathscr{D}}
\nc{\la}{\lambda}
\DMO{\KL}{KL}
\DMO{\Unif}{Unif}
\nc{\nn}{\varnothing}
\DMO{\SOA}{SOA}
\nc{\soa}[2]{\SOA_{#1}(#2)}
\nc{\soaf}[1]{\SOA_{#1}}
\nc{\gRes}[2]{\hat\MG({#1},{#2})}
\nc{\emp}{\hat P_{S_n}}
\DMO{\Red}{red}
\DMO{\Irred}{irred}
\nc{\Ired}{I^{\Red}}
\nc{\Iirred}{I^{\Irred}}
\nc{\CS}{\mathsf{CS}}
\nc{\TV}{\mathrm{tv}}
\DMO{\ssmp}{ssmp}
\DMO{\agg}{agg}
\DMO{\final}{final}
\nc{\pp}{p}
\nc{\PP}{P}
\nc{\QQ}{Q}
\nc{\DD}{D}
\DMO{\RAPPOR}{{RAPPOR}}
\nc{\RAP}{\RAPPOR}
\DMO{\RR}{RR}
\nc{\MD}{\mathcal{D}}
\nc{\ML}{\mathcal{L}}
\nc{\di}{P}
\nc{\MO}{\mathcal{O}}
\nc{\MM}{\mathcal{M}}
\nc{\MZ}{\mathcal{Z}}
\nc{\MU}{\mathcal{U}}
\nc{\MP}{\mathcal{P}}
\nc{\MQ}{\mathcal{Q}}
\nc{\poly}{\mathrm{poly}}
\DMO{\treesum}{TreeSum}
\DMO{\lapsum}{LapSum}
\DMO{\checksum}{CheckSum}
\nc{\MDts}{\MD_{\treesum}}
\nc{\MDls}{\MD_{\lapsum}}
\nc{\MDcs}{\MD_{\checksum}}
\nc{\MC}{\mathcal{C}}
\nc{\MT}{\mathcal{T}}
\nc{\MS}{\mathcal{S}}
\nc{\MX}{\mathcal{X}}
\nc{\MY}{\mathcal{Y}}
\nc{\MA}{\mathcal{A}}
\nc{\MB}{\mathcal{B}}
\nc{\MJ}{\mathcal{J}}
\nc{\MF}{\mathcal{F}}
\nc{\MG}{\mathcal{G}}
\nc{\MR}{\mathcal{R}}
\nc{\p}{\Pr}
\nc{\E}{\mathbb{E}}
\nc{\tablesize}{s}
\DMO{\Hist}{hist}
\DMO{\Reg}{Reg}
\DMO{\prdim}{PRDim}
\nc{\eps}{\epsilon}
\nc{\hist}{\mathrm{hist}}
\nc{\heps}{\hat{\eps}}
\nc{\hdelta}{\hat{\delta}}
\nc{\teps}{\tilde{\eps}}
\nc{\tdelta}{\tilde{\delta}}
\nc{\bA}{\mathbb{A}}
\nc{\bN}{\mathbb{N}}
\nc{\ba}{{\bm a}}
\nc{\bx}{{\bm x}}
\nc{\tbx}{\tilde{\bx}}
\nc{\tx}{\tilde{x}}
\nc{\tz}{\tilde{z}}
\nc{\tmu}{\tilde{\mu}}
\nc{\bs}{{\bm s}}
\nc{\bv}{{\bm v}}
\nc{\bw}{{\bm w}}
\nc{\by}{{\bm y}}
\nc{\bz}{{\bm z}}
\nc{\ind}{\mathbf{1}}
\nc{\SIMD}{\mathrm{SIM}_{\cD}}
\nc{\SEL}{\text{SELECT}}
\DeclarePairedDelimiterX{\infdivx}[2]{(}{)}{%
  #1\;\delimsize\|\;#2%
}
\newcommand{\hsd}[3]{d_{#1}\infdivx{#2}{#3}}
\DMO{\sr}{sr}
\DMO{\Med}{Med}
\DMO{\Ber}{Ber}
\DMO{\Bin}{Bin}
\DMO{\Had}{Had}
\nc{\ME}{\mathcal{E}}
\DMO{\View}{View}
\nc{\B}{B}
\nc{\M}{M}
\nc{\ha}{\kappa}
\nc{\hk}{k}
\DMO{\pre}{pre}
\nc{\MH}{\mathcal{H}}
\DMO{\Ldim}{LDim}
\DMO{\SQdim}{SQDim}
\DMO{\Tdim}{Tdim}
\DMO{\sfat}{sfat}
\DMO{\fat}{fat}
\DMO{\vc}{VCdim}
\DMO{\FO}{FO}
\DMO{\CM}{CM}
\DMO{\NB}{NB}
\DMO{\hb}{\beta}
\nc{\MW}{\mathcal{W}}
\nc{\MV}{\mathcal{V}}
\nc{\MK}{\mathcal{K}}
\nc{\MN}{\mathcal{N}}
\nc{\cH}{\mathcal{H}}
\nc{\cZ}{\mathcal{Z}}
\nc{\cA}{\mathcal{A}}
\nc{\cD}{\mathcal{D}}
\nc{\pscr}{\mathrm{pscr}}
\nc{\scr}{\mathrm{scr}}
\nc{\trun}{\mathrm{trunc}}
\nc{\cX}{\mathcal{X}}
\nc{\cK}{\mathcal{K}}
\nc{\cL}{\mathcal{L}}
\nc{\cN}{\mathcal{N}}
\nc{\cE}{\mathcal{E}}
\DMO{\range}{range}
\DMO{\dom}{dom}
\nc{\var}{\mathrm{Var}}
\nc{\tX}{\tilde{X}}
\nc{\tS}{\tilde{S}}
\nc{\clip}{\mathrm{clip}}
\nc{\tf}{\tilde{f}}
\nc{\napprox}{\not\approx}
\nc{\fT}{\mathfrak{T}}
\nc{\npriv}{n^{\fT}_{\textrm{priv}}}
\nc{\nstat}{n^{\fT}_{\textrm{stat}}}
\nc{\PAC}{\mathrm{PAC}}
\nc{\tZ}{\tilde{Z}}
\nc{\hZ}{\hat{Z}}
\nc{\hmu}{\hat{\mu}}
\nc{\oS}{\overline{S}}
\nc{\BB}{\{0,1\}}
\nc{\bW}{{\bm W}}
\nc{\eell}{\ell}
\nc{\EELL}{L}
\nc{\q}{q}
\DMO{\size}{size}
\nc{\ts}{\tilde{s}}
\nc{\tc}{\tilde{c}}
\nc{\bone}{\mathbf{1}}
\DMO{\stat}{STAT}
\DMO{\TIME}{\mathsf{time}}
\nc{\hu}{\hat{u}}
\nc{\hz}{\hat{z}}
\nc{\tO}{\widetilde{O}}
\nc{\tOmega}{\tilde{\Omega}}
\nc{\tTheta}{\tilde{\Theta}}
\newcommand{\oeps}{\overline{\eps}}
\newcommand{\odelta}{\overline{\delta}}
\newcommand{\stable}{\mathrm{stable}}
\DeclareMathOperator{\TDLap}{\mathsf{TDLap}}
\newcommand{\set}[1]{\left \{ #1 \right \}}
\newcommand{\inparen}[1]{\left ( #1 \right )}
\newcommand{\DelStab}{\mathsf{DelStab}}
\newcommand{\PWEM}{\mathsf{ClippedPairwiseEM}}
\newtheorem*{rep@theorem}{\rep@title}
\newcommand{\newreptheorem}[2]{%
\newenvironment{rep#1}[1]{%
 \def\rep@title{#2~\ref{##1}}%
 \begin{rep@theorem}}%
 {\end{rep@theorem}}}
\newtheorem{theorem}{Theorem} %
\newtheorem{corollary}[theorem]{Corollary}
\newtheorem{lemma}[theorem]{Lemma}
\newtheorem{informal theorem}[theorem]{Informal Theorem}
\newtheorem{obs}[theorem]{Observation}
\theoremstyle{definition}
\newtheorem{defn}[theorem]{Definition}
\definecolor{Gred}{RGB}{219, 50, 54}
\definecolor{Ggreen}{RGB}{60, 186, 84}
\definecolor{Gblue}{RGB}{72, 133, 237}
\definecolor{Gyellow}{RGB}{247, 178, 16}
\definecolor{ToCgreen}{RGB}{0, 128, 0}
\definecolor{myGold}{RGB}{231,141,20}
\definecolor{myBlue}{rgb}{0.19,0.41,.65}
\definecolor{myPurple}{RGB}{175,0,124}
\title{User-Level Differential Privacy \\ With Few Examples Per User}
\author{
Badih Ghazi \\
Google Research\\
Mountain View, CA, US \\
\texttt{badihghazi@gmail.com}
\And
Pritish Kamath\\
Google Research\\
Mountain View, CA, US \\
\texttt{pritish@alum.mit.edu}
\And
Ravi Kumar\\
Google Research\\
Mountain View, CA, US \\
\texttt{ravi.k53@gmail.com}
\And
Pasin Manurangsi\\
Google Research\\
Bangkok, Thailand\\
\texttt{pasin@google.com}
\And
Raghu Meka\\
UCLA\\
Los Angeles, CA, US \\
\texttt{raghum@cs.ucla.edu}
\And
Chiyuan Zhang\\
Google Research\\
Mountain View, CA, US \\
\texttt{chiyuan@google.com}
}
\begin{document}

\maketitle

\begin{abstract}
Previous work on user-level differential privacy (DP)~\cite{GhaziKM21,stable23} obtained generic algorithms that work for various learning tasks. However, their focus was on the example\emph{-rich} regime, where the users have so many examples that each user could themselves solve the problem. In this work we consider the example-\emph{scarce} regime, where each user has only a few examples, and obtain the following results:
\begin{itemize}[leftmargin=*]
\item For approximate-DP, we give a generic transformation of any item-level DP algorithm to a user-level DP algorithm. Roughly speaking, the latter gives a (multiplicative) savings of $O_{\varepsilon,\delta}(\sqrt{m})$ in terms of the number of users required for achieving the same utility, where $m$ is the number of examples per user. This algorithm, while recovering most known bounds for specific problems, also gives new bounds, e.g., for PAC learning. 
\item For pure-DP, we present a simple technique for adapting the  exponential mechanism~\cite{McSherryT07} to the user-level setting. This gives new bounds for a variety of tasks, such as private PAC learning, hypothesis selection, and distribution learning. For some of these problems, we show that our bounds are near-optimal.
\end{itemize}

\end{abstract}

\section{Introduction}

Differential privacy (DP) \cite{DworkMNS06,dwork2006our} has become a widely popular notion of privacy quantifying a model's leakage of personal user information. It has seen a variety of industrial and governmental deployments including \cite{ greenberg2016apple,dp2017learning, ding2017collecting, abowd2018us, tf-privacy, pytorch-privacy}. Many fundamental private machine learning algorithms in literature have been devised under the (implicit) assumption that each user only contributes a single example to the training dataset, which we will refer to as \emph{item-level DP}. In practice, each user often provides multiple examples to the training data. In this \emph{user-level DP} setting, the output of the algorithm is required to remain (approximately) statistically indistinguishable even when all the items belonging to a single user are substituted (formal definitions below). Given such a strong (and preferable) privacy requirement, it is highly non-trivial for the algorithm to take advantage of the increased number of examples. Several recent studies have explored this question and showed---for a wide variety of tasks---that, while challenging, this is possible \cite{LiuSYK020, LevySAKKMS21, GhaziKM21, reproducibility, stable23}. %

To discuss further, let us define some notation.  We use $n$ to denote the number of users and $m$ to denote the number of examples per user. We use $\MZ$ to denote the universe of possible samples. We write $\bx = (x_{1, 1}, \dots, x_{1, m}, x_{2, 1}, \dots, x_{n, m}) \in \MZ^{nm}$ to denote the input to our algorithm, and $\bx_i := (x_{i, 1}, \dots, x_{i, m})$ for $i \in [n]$ to denote the $i$th user's examples. Furthermore, we write $\bx_{-i}$ to denote the input with the $i$th user's data $\bx_i$ removed. Similarly, for $S \subseteq [n]$, we write $\bx_{-S}$ to denote the input with all the examples of users in $S$ removed.

\begin{defn}[User-Level Neighbors]
Two inputs $\bx, \bx'$ are \emph{user-level neighbors}, denoted by $\bx \asymp \bx'$, iff $\bx_{-i} = \bx'_{-i}$ for some $i \in [n]$.
\end{defn}

\begin{defn}[User-Level and Item-Level DP]
For $\eps, \delta > 0$, we say that a randomized algorithm\footnote{For simplicity, we will only consider the case where the output space $\MO$ is finite throughout this work to avoid measure-theoretic issues. This assumption is often without loss of generality since even the output in continuous problems can be discretized in such a way that the ``additional error'' is negligible.} $\bA: \MZ \to \MO$ is \emph{$(\eps, \delta)$-user-level DP}
iff, for any $\bx \asymp \bx'$ and any $S \subseteq \MO$, we have $\Pr[\bA(\bx) \in S] \leq e^{\eps} \Pr[\bA(\bx') \in S] + \delta$.  When $m = 1$, we say that $\bA$ is \emph{$(\eps, \delta)$-item-level DP}.
\end{defn}

The case $\delta = 0$ is referred to as \emph{pure-DP}, whereas the case $\delta > 0$ is referred to as \emph{approximate-DP}.

To formulate statistical tasks studied in our work, we consider the setting where there is an unknown distribution $\cD$ over $\MZ$ and the input $\bx \sim \cD^{nm}$ consists of $nm$ i.i.d. samples drawn from $\cD$. A task $\fT$ (including the desired accuracy) is defined by $\Psi_{\fT}(\cD)$, which is the set of ``correct'' answers. For a parameter $\gamma$, we say that the algorithm $\bA$ is \emph{$\gamma$-useful} if $\Pr_{\bx \sim \cD^{nm}}[\bA(\bx) \in \Psi_{\fT(\cD)}] \geq \gamma$ for all valid $\cD$. Furthermore, we say that $nm$ is the \emph{sample complexity} of the algorithm, whereas $n$ is its \emph{user complexity}. For $m \in \BN$ and $\eps, \delta > 0$, let $n_m^{\fT}(\eps, \delta; \gamma)$ denote the smallest user complexity of any $(\eps, \delta)$-user-level DP algorithm that is $\gamma$-useful for $\fT$. When $\gamma$ is not stated, it is assumed to be $2/3$.

\textbf{Generic Algorithms Based on Stability.}
So far, there have been two main themes of research on user-level DP learning. The first aims to provide generic algorithms that work with many tasks.
Ghazi et al.~\cite{GhaziKM21} observed that any \emph{pseudo-globally stable} (aka \emph{reproducible}~\cite{reproducibility}) algorithm can be turned into a user-level DP algorithm. Roughly speaking, pseudo-global stability requires that the algorithm, given a random input dataset and a random string, returns a canonical output---which may depend on the random string---with a large probability (e.g., $0.5$). They then show that the current best generic PAC learning item-level DP algorithms from~\cite{GGKM20} (which are based on yet another notion of stability) can be made into pseudo-globally stable algorithms. This transforms the aforementioned item-level DP algorithms into user-level DP algorithms.

In a recent breakthrough, Bun et al.~\cite{stable23} significantly expanded this transformation by showing that \emph{any} item-level DP algorithm can be compiled into a pseudo-globally stable algorithm---albeit with some overhead in the sample complexity. Combining with~\cite{GhaziKM21}, they then get a generic transformation of any item-level DP algorithm to a user-level DP algorithm. Such results are of (roughly) the following form: If there is an item-level DP algorithm for some task with sample complexity $n$, then there is a user-level DP algorithm with user complexity $O(\log(1/\delta)/\eps)$ as long as $m \geq \widetilde{\Omega}_{\eps, \delta}(n^2)$. In other words, if each user has sufficiently many samples, the algorithm needs very few users to learn under the user-level DP constraint. However, the requirement on the number of examples per user can be prohibitive: due to the nature of the reduction, it requires each user to have enough samples to learn by themselves. This leads to the following natural question:\footnote{Of course, there is a generic transformation where each user throws away all but one example; here, we are looking for one that can reduce the user complexity compared to the item-level setting.}

\begin{center}
{\sl Is there a generic transformation from item-level to user-level DP algorithms for small $m$?}
\end{center}

\paragraph{\boldmath Algorithms for Specific Tasks \& the $\sqrt{m}$ Savings.} Meanwhile, borrowing a page from its item-level DP counterpart, another active research theme has been to study specific tasks and provide (tight) upper and lower bounds on their sample complexity for user-level DP. Problems such as mean estimation, stochastic convex optimization (SCO), and discrete distribution learning are well-understood under user-level DP (for approximate-DP); see, e.g.,~\cite{LiuSYK020,LevySAKKMS21,NarayananME22,user-level-icml23}. A pattern has emerged through this line of studies: (the privacy-dependent part of) the user complexity in the user-level DP setting is often roughly $1/\sqrt{m}$ smaller than that of the item-level DP setting. For example, for the task of discrete distribution learning on domain $[k]$ up to total variation distance of $\alpha > 0$ (denoted by $\mathrm{DD}k(\alpha)$),
the user complexity for the user-level DP is~\cite{DiakonikolasHS15,AcharyaSZ21,NarayananME22}
\begin{align*}
n^{\mathrm{DD}k(\alpha)}_m(\eps, \delta) = \tTheta_{\delta}\left(\frac{k}{\eps \alpha \sqrt{m}} + \frac{k}{\alpha^2 m}\right).
\end{align*}
We can see that the first privacy-dependent term decreases by a factor of $\sqrt{m}$ whereas the latter privacy-independent term decreases by a factor of $m$. A similar phenomenon also occurs in the other problems discussed above. As such, it is intriguing to understand this question further:

\begin{center}
{\sl Is there a common explanation for these $\sqrt{m}$ saving factors for the privacy-dependent term?}
\end{center}

\subsection{Our Contributions}

\paragraph{Approximate-DP.} %
We answer both questions by showing a generic way to transform any item-level DP algorithm into a user-level DP algorithm such that the latter saves roughly $\sqrt{m}$ in terms of the number of users required. Since the formal expression is quite involved, we state below a simplified version, which assumes that the sample complexity can be written as the sum of two terms, the privacy-independent term (i.e., $\nstat$) and the privacy-dependent term that grows linearly in $1/\eps$ and polylogarithmically in $1/\delta$ (i.e., $\left(\frac{\log(1/\delta)^{O(1)}}{\eps}\right) \cdot \npriv$). This is the most common form of sample complexity of DP learning discovered so far in the literature---in particular, this covers all the tasks we have discussed previously. The corollary below shows that the privacy-dependent term decreases by a factor of roughly $\sqrt{m}$ whereas the privacy-independent term decreases by a factor of $m$. The full version of the statement can be found in \Cref{thm:apx-dp-main}.

\begin{corollary}[Main Theorem--Simplified] \label{cor:simplified-main-theorem}
Let $\fT$ be any task and $\gamma > 0$ be a parameter. Suppose that for any sufficiently small $\eps, \delta > 0$, there is an $(\eps, \delta)$-item-level DP algorithm that is $\gamma$-useful for $\fT$ with sample complexity $\tO\left(\frac{\log(1/\delta)^c}{\eps} \cdot \npriv + \nstat\right)$ for some constant $c$. Then, for any sufficiently small $\eps, \delta > 0$, there exists an $(\eps, \delta)$-user-level DP algorithm that is $(\gamma - o(1))$-useful for $\fT$ and has user complexity
\begin{align*}\textstyle
\tO\left(\frac{1}{\sqrt{m}} \cdot \frac{\log(1/\delta)^{c + 3/2}}{\eps^2} \cdot \npriv + \frac{1}{m} \cdot \nstat + \frac{\log(1/\delta)}{\eps}\right).
\end{align*}
\end{corollary}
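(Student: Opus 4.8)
The plan is to obtain \Cref{cor:simplified-main-theorem} as a corollary of a general item-to-user transformation (the full version being \Cref{thm:apx-dp-main}); the real content is designing and analyzing that transformation, and the corollary itself is then parameter tuning. The transformation I would use is a subsample-and-aggregate scheme engineered to exploit the $m$ examples per user. Fix an \emph{internal} budget $(\eps_0,\delta_0)$, let $N_0 := N(\eps_0,\delta_0)$ be the corresponding item-level sample complexity of the given algorithm $\bA$, and form $T \approx nm/N_0$ item-level ``instances'' out of $\bx$, each of size $\approx N_0$; run $\bA$ (internally made canonical, see below) on each instance; finally output a single answer via a stability-based private histogram over the $T$ instance outputs.

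The crucial choice, and the source of the $\sqrt m$ saving, is the assignment of the $nm$ examples to the $T$ instances: each user spreads its $m$ examples across $m$ \emph{distinct} instances, contributing at most one example to any single instance. Under a user-level neighbor relation $\bx\asymp\bx'$ this changes at most one example inside each of at most $m$ instances, so each of those $\le m$ invocations of $\bA$ is $(\eps_0,\delta_0)$-indistinguishable, and by advanced composition over the $\le m$ affected coordinates the full vector of $T$ outputs is $\big(O(\sqrt{m\log(1/\delta)}\cdot\eps_0),\ m\delta_0+\delta/2\big)$-indistinguishable; the histogram step is post-processing. This is the key gain over the naive route of routing each user's examples into a \emph{single} instance, which would tolerate only an $m$-example change there and cost a factor $m$. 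It dictates $\eps_0 = \tilde\Theta\big(\eps/\sqrt{m\log(1/\delta)}\big)$ and $\delta_0 = \Theta(\delta/m)$.

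For utility, each instance is $\approx N_0$ fresh i.i.d.\ samples, so $\bA$ returns a correct answer on it with probability $\ge\gamma$; to let the histogram single one out I would first make $\bA$ pseudo-globally stable (so that, given a shared public string, its ``good'' runs return a common canonical answer with constant probability, as in \cite{GhaziKM21,stable23}), after which the stable private histogram outputs that answer with probability $\ge\gamma-o(1)$ as long as the number of instances exceeds the histogram threshold, i.e.\ $T = \tOmega(\log(1/\delta)/\eps)$; this is exactly what produces the additive $\tfrac{\log(1/\delta)}{\eps}$ term. The statistical requirement is handled separately by only asking $nm\gtrsim\nstat$, giving the $\tfrac1m\nstat$ term. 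Substituting $\delta_0 = \Theta(\delta/m)$ and $\eps_0 = \tilde\Theta(\eps/\sqrt{m\log(1/\delta)})$ into the assumed form $N_0 = \tO\big(\tfrac{\log(1/\delta_0)^c}{\eps_0}\npriv + \nstat\big)$ and combining with $T = nm/N_0 = \tOmega(\log(1/\delta)/\eps)$ yields $n = \tO\big(\tfrac1m\cdot\tfrac{\sqrt m\,\log(1/\delta)^{c+1/2}}{\eps}\cdot\tfrac{\log(1/\delta)}{\eps}\npriv + \tfrac1m\nstat + \tfrac{\log(1/\delta)}{\eps}\big)$, which is the claimed bound.

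I expect the main obstacle to be the canonicalization hiding behind ``make $\bA$ pseudo-globally stable'': a black-box transformation to pseudo-global stability can inflate the per-instance sample complexity $N_0$, and since $N_0$ reenters the bound both through $T$ and, via $\eps_0$, through the privacy accounting, a super-linear blow-up would wipe out the $\sqrt m$ gain, so one must control this overhead carefully (exploiting that an instance already pools $N_0$ users, so only an $O(1/N_0)$ fraction of it is resampled under a user-level neighbor). A second delicate point is the privacy analysis itself — keeping advanced composition over the $\le m$ affected instances, subsampling amplification within an instance, and the built-in robustness of the stable histogram \emph{interlocked} rather than naively multiplied — and, relatedly, patching the regime of very large $m$, where the ``one example per instance'' layout runs out of instances and a hybrid with the example-rich reduction is needed.
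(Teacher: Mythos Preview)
Your route is genuinely different from the paper's, and the step you yourself flag as ``the main obstacle'' is a gap you have not closed. For the histogram to pick out a single answer, the $T$ runs of $\bA$ on fresh i.i.d.\ batches must land on a common output with constant probability; for a generic task and output space this is exactly pseudo-global stability. The only black-box DP-to-pseudo-global-stability conversion available (\cite{stable23}) inflates the per-instance sample complexity roughly quadratically, $N_0\mapsto\tTheta(N_0^2)$; feeding that back through your arithmetic kills the $\sqrt m$ gain and leaves the bound quadratic in $\npriv$. Your proposed remedy---``only an $O(1/N_0)$ fraction of [an instance] is resampled under a user-level neighbor''---is a statement about privacy under the neighbor relation, not about stability under independent resampling of the input, so it does not touch the blow-up. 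The paper explicitly names this limitation as the reason stability-based reductions are confined to the example-rich regime (last paragraph of the approximate-DP part of \Cref{sec:overview}). A smaller accounting issue: even granting stability for free, your constraint $T=nm/N_0\ge\tOmega(\log(1/\delta)/\eps)$ forces the $\nstat$ share of $N_0$ to contribute $\tfrac{\log(1/\delta)}{m\eps}\cdot\nstat$, not $\tfrac1m\nstat$; each instance individually needs $\gtrsim\nstat$ samples to be $\gamma$-useful, so you cannot peel that term off.

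The paper avoids aggregation entirely. It proves, via a robust McDiarmid inequality applied directly to the privacy loss, that any $(\eps',\delta')$-item-level DP algorithm is with high probability over $\bx\sim\cD^{nm}$ already \emph{locally} user-level DP with parameter $\approx\eps'\sqrt{m\log(1/\delta')}$ (this goes through sample perfect generalization; see \Cref{thm:sam-perf-gen,thm:deletion-indist}). A propose-test-release wrapper (\Cref{alg:stab}) then privately checks whether $\bx$ is close to an input where this local guarantee fails; when the test passes it simply runs $\bA$ on the users directly---no histogram, no canonicalization, no per-instance stability. \Cref{cor:simplified-main-theorem} is then literally a one-line substitution of the assumed form of $n_1^{\fT}(\eps',\delta')$ into \Cref{thm:apx-dp-main}.
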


Our result provides a powerful and generic tool in learning with user-level DP, as it can translate any item-level DP bound to user-level DP bound. Specifically, up to polylogarithmic terms and the dependency on $\eps$, the corollary above recovers all the aforementioned user complexity bounds for mean estimation, SCO, and discrete distribution learning. In addition, it also presents new sample complexity bounds for the moderate $m$ regime for PAC learning. 

While our result is very general, it---perhaps inevitably---comes at a cost: the algorithm is computationally inefficient and the dependence on $\eps$ (and $\delta$) is not tight. We discuss this---along with other open questions---in more detail in \Cref{sec:conclusion}.

\paragraph{Pure-DP.} Surprisingly, the pure-DP setting is less explored compared to approximate-DP in the context of learning with user-level DP; the sample complexity of several of the problems mentioned above (e.g., discrete distribution learning and SCO) is well-understood under approximate-DP but not so under pure-DP.  While we do not provide a generic transformation from item-level DP algorithms to user-level DP algorithms for pure-DP, we give simple modifications of the popular pure-DP \emph{exponential mechanism}~\cite{McSherryT07} that allow it to be used in the user-level DP setting.  Consequently, we obtain improved bounds for several problems. Due to space constraints, in the main body of the paper, we will only discuss PAC learning. Results for other problems, such as hypothesis testing and distribution learning can be found in \Cref{app:pure-dp}.

In PAC learning, $\MZ = \MX \times \{0, 1\}$ and there is a concept class $\MC \subseteq \{0, 1\}^{\MX}$. An error of $c: \MZ \to \{0, 1\}$ w.r.t. $\cD$ is defined as $\Err_{\MD}(c) := \Pr_{(x, y) \sim \MD}[c(x) \ne y]$. The task $\PAC(\MC; \alpha)$ is to, for any distribution $\cD$ that has zero error on some concept $c^* \in \MC$ (aka \emph{realizable by $\MC$}), output $c: \MZ \to \{0, 1\}$ such that $\Err_{\MD}(c) \leq \alpha$. We give a tight characterization of the user complexity for PAC learning\footnote{$\prdim$ is defined in \Cref{def:prdim}.}:

\begin{theorem} \label{thm:pac-pure-main}
Let $\MC$ be any concept class with probabilistic representation dimension $d$ (i.e. $\prdim(\MC) = d$).
Then, for any sufficiently small $\alpha, \eps > 0$ and for all $m \in \BN$, we have $$\textstyle n_m^{\PAC(\MC; \alpha)}(\eps, \delta=0) = \tTheta\left(\frac{d}{\eps} + \frac{d}{\eps \alpha m}\right).$$
\end{theorem}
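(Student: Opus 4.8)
The plan is to prove the upper bound by adapting the exponential mechanism \cite{McSherryT07} to the user-level setting with a clipped, per-user score, and the lower bound by combining a reduction to the item-level case with a user-level instantiation of the probabilistic-representation machinery behind \Cref{def:prdim}. For the upper bound, first I would take a probabilistic representation $\MP$ witnessing $\prdim(\MC)=d$ at accuracy $\alpha/2$: a \emph{data-independent} distribution over hypothesis families $\MH$ with $|\MH|\le 2^d$ such that, for every $\cD$ realizable by $\MC$, with constant probability some $h^\star\in\MH$ has $\Err_\cD(h^\star)\le\alpha/2$. The algorithm draws $\MH\sim\MP$ using internal randomness only, sets the clipping threshold $T:=\lceil\max(1,4\alpha m)\rceil$, and runs the exponential mechanism over $\MH$ with score
\[
  q(\bx,h)\;:=\;-\sum_{i=1}^{n}\min\!\left(c_i(\bx,h),\,T\right),\qquad c_i(\bx,h):=\bigl|\{\,j\in[m]:h(x_{i,j})\ne y_{i,j}\,\}\bigr|.
\]
Changing one user's data alters a single summand by at most $T$, so the user-level sensitivity of $q$ is at most $T$; hence the mechanism --- and, since $\MH$ is chosen data-independently, the whole algorithm --- is $\eps$-user-level DP.

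For utility, I would fix a realizable $\cD$ and condition on the constant-probability event that some $h^\star\in\MH$ has $\Err_\cD(h^\star)\le\alpha/2$; a Chernoff bound gives $-q(h^\star)\le nm\alpha$ with high probability (using $n=\Omega(d/(\eps\alpha m))$). For any $h$ with $\Err_\cD(h)=:p>C\alpha$ ($C$ a large constant), each $c_i(\bx,h)\sim\Bin(m,p)$, so $\E[-q(h)]=n\cdot\E[\min(c_i,T)]=\Omega(n\min(pm,T))$; the key point is to concentrate this via a \emph{Bernstein} inequality --- each clipped summand lies in $[0,T]$ with variance at most $T$ times its mean, so the lower tail has probability $\exp(-\Omega(\E[-q(h)]/T))$, which is $\le\beta 2^{-d}$ exactly when $n=\Omega(d/\eps+d/(\eps\alpha m))$ --- whereas a Hoeffding bound is too weak when $\alpha m\lesssim\eps$, since then the $c_i$ are typically $0$ or $1$. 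Union-bounding over $\MH$, every $\alpha$-far $h\in\MH$ has $-q(h)\ge -q(h^\star)+\omega(Td/\eps)$, while the exponential mechanism returns $\hat h$ with $-q(\hat h)\le -q(h^\star)+O(Td/\eps)$, forcing $\Err_\cD(\hat h)\le C\alpha$; rescaling $\alpha\mapsto\alpha/C$ (which changes $d=\prdim(\MC)$ by at most a polylogarithmic factor) completes the upper bound. The threshold $T=\Theta(\max(1,\alpha m))$ is exactly what interpolates between the two regimes: for $\alpha m\le 1$ the score essentially counts the users with at least one mistake, and for $\alpha m>1$ it keeps a near-optimal $h^\star$ from being over-penalized while still separating $\alpha$-far hypotheses.

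For the lower bound, the $\tOmega(d/(\eps\alpha m))$ term follows by reduction: an $\eps$-user-level DP algorithm on $n$ users with $m$ examples is in particular $\eps$-item-level DP on $nm$ samples (replacing one example is a special case of replacing one user's data), so $nm\ge n_1^{\PAC(\MC;\alpha)}(\eps,0)=\tOmega(d/(\eps\alpha))$ by the known item-level pure-DP PAC bound. For the $\Omega(d/\eps)$ term, which does not improve with $m$, the plan is a user-level version of the representation-extraction argument: fix any reference input $\bx_0\in\MZ^{nm}$, let $\mu$ be the output law of the learner $\bA$ on $\bx_0$, and note that for every realizable $\cD$, group privacy over the at most $n$ users in which $\bx\sim\cD^{nm}$ and $\bx_0$ differ gives
\[
  \tfrac23\;\le\;\Pr_{\bx\sim\cD^{nm}}\!\left[\bA(\bx)\in G_\cD\right]\;\le\;e^{n\eps}\cdot\mu(G_\cD),\qquad G_\cD:=\{h:\Err_\cD(h)\le\alpha\},
\]
so $\mu(G_\cD)\ge\tfrac23 e^{-n\eps}$ for every realizable $\cD$. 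Drawing $2^{d'}$ hypotheses i.i.d.\ from the data-independent distribution $\mu$ is then itself a probabilistic representation of $\MC$ at accuracy $\alpha$ as soon as $2^{d'}=\Omega(e^{n\eps})$, i.e.\ $d'=O(n\eps)$, whence $\prdim(\MC)\le O(n\eps)$ and $n=\Omega(d/\eps)$.

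The main obstacle I anticipate is the utility analysis of the clipped mechanism: choosing $T=\Theta(\max(1,\alpha m))$ so that the $q$-gap between near-optimal and $\alpha$-far hypotheses beats the $O(Td/\eps)$ error of the exponential mechanism in \emph{both} regimes simultaneously, and recognizing that the concentration of the per-user clipped error counts must be established with a Bernstein-type rather than Hoeffding-type inequality, since those counts are $0$/$1$-valued when $\alpha m$ is small and only a variance bound suffices there.
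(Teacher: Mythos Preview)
Your proposal is correct and follows essentially the same strategy as the paper: exponential mechanism over a probabilistic representation with a clipped per-user error score for the upper bound, and the item-level reduction together with a representation-extraction-via-group-privacy argument for the lower bound. The paper streamlines the upper bound by first reducing to the regime $m \le 1/\alpha$ (discarding excess examples, since for $m > 1/\alpha$ the term $d/\eps$ already dominates), which lets it fix the clip threshold at $T=1$; then the score is simply $\sum_i \ind[\bz_i\text{ not realizable by }h]$, the summands are Bernoulli, and a plain Chernoff bound suffices. Your adaptive threshold $T=\Theta(\max(1,\alpha m))$ together with a Bernstein-type inequality handles both regimes in one shot, which is equally valid but slightly more work. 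One small slip: at accuracy $\Theta(\alpha)$ the probabilistic representation has size $\tO(d\log(1/\alpha))$ rather than $d$ (this is \Cref{lem:prdim-acc-boost}); you implicitly acknowledge this with your ``polylogarithmic factor'' remark, but it is worth stating cleanly. For the $\Omega(d/\eps)$ lower bound the paper simply cites~\cite{GhaziKM21}, whereas you spell out the group-privacy-plus-sampling-from-$\mu$ construction of a probabilistic representation of size $O(n\eps)$, which is exactly the argument behind that citation.
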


Previous work~\cite{GhaziKM21,stable23} achieves the tight sample complexity only for large $m$ ($\gg d^2$).

\subsubsection{Technical Overview}
\label{sec:overview}

In this section, we give a rough overview of our proofs. We will be intentionally vague; all definitions and results will be formalized later in the paper.

\textbf{Approximate-DP.}
At a high-level, our proof proceeds roughly as follows. First, we show that any $(\eps, \delta)$-\emph{item-level} DP $\bA$ with high probability satisfies a local version of user-level DP for which we only compare\footnote{In \Cref{thm:deletion-indist}, we need to compare $\bA(\bx_{-S})$ for $S$'s of small sizes as well.} $\bA(\bx_{-i})$ for different $i$, but with $\eps$ that is increased by a factor of roughly $\sqrt{m}$ (see \Cref{def:deletion-indist} and \Cref{thm:deletion-indist}). If this were to hold not only for $\bx_{-i}$ but also for all user-level neighbors $\bx'$ of $\bx$, then we would have been done since the $(\eps, \delta)$-item-level DP algorithm would have also yielded $(\eps\sqrt{m}, \delta)$-\emph{user-level} DP (which would then imply a result in favor of \Cref{cor:simplified-main-theorem}). However, this is not true in general and is where the second ingredient of our algorithm comes in: We use a propose-test-release style algorithm~\cite{DworkL09} to check whether the input is close to a ``bad'' input for which the aforementioned local condition does not hold. If so, then we output $\perp$; otherwise, we run the item-level DP algorithm. Note that the test passes w.h.p. and thus we get the desired utility.

This second step is similar to recent work of Kohli and Laskowski~\cite{KL21} and Ghazi et al.~\cite{user-level-icml23}, who used a similar algorithm for additive noise mechanisms (namely Laplace and Gaussian mechanisms). The main difference is that instead of testing for local \emph{sensitivity} of the function as in \cite{KL21,user-level-icml23}, we directly test for the privacy loss of the algorithm $\bA$.

As for the first step, we prove this via a connection to a notion of generalization called \emph{sample perfect generalization}~\cite{CummingsLNRW16} (see \Cref{defn:sample-perf-gen}). Roughly speaking, this notion measures the privacy loss when we run an $(\eps, \delta)$-item-level DP algorithm on two $\bx, \bx'$ drawn independently from $\cD^n$. We show that w.h.p. $\bA(\bx) \approx_{\eps',\delta'} \bA(\bx')$ for $\eps' = O(\sqrt{n \log(1/\delta)} \cdot \eps)$. Although ostensibly unrelated to the local version of user-level DP that we discussed above, it turns out that these are indeed related: if we view the algorithm that fixes examples of all but one user, then applying the sample perfect generalization bound on just the input of the last user (which consists of only $m$ samples) shows that the privacy loss when changing the last user is $O(\sqrt{m \log(1/\delta)} \cdot \eps)$ w.h.p. It turns out that we can turn this argument into a formal bound for our local notion of user-level DP (see \Cref{sec:lddp}.)

We remark that our results on sample perfect generalization also resolve an open question of Cummings et al.~\cite{CummingsLNRW16}; we defer this discussion to \Cref{app:perf-gen}.

Finally, we also note that, while Bun et al.~\cite{stable23} also uses the sample perfect generalization notion, their conversion from item-level DP algorithms to user-level DP algorithms require constructing pseudo-globally stable algorithms with sample complexity $m$. This is impossible when $m$ is small, which is the reason why their reduction works only in the example-rich setting.

\textbf{Pure-DP.} 
Recall that the exponential mechanism~\cite{McSherryT07} works as follows. Suppose there is a set $\cH$ of candidates we would like to select from, together with the scoring functions $(\scr_H)_{H \in \cH}$, where $\scr_H: \MZ^n \to \R$ has sensitivity at most $\Delta$. The algorithm then outputs $h$ with probability proportional to $\exp\left(\nicefrac{- \eps \cdot \scr_H(\bx)}{2\Delta}\right)$. The error guarantee of the algorithm scales with the sensitivity $\Delta$ (see \Cref{lem:pure-selection}.) It is often the case that $\scr_H$ depends on the sum of individual terms involving each item. Our approach is to clip the contribution from each user so that the sensitivity is small. We show that selecting the clipping threshold appropriately is sufficient obtain new user-level pure-DP (and in some cases optimal) algorithms.%

\section{Preliminaries}

Let $[n] = \{1, \dots, n\}$, let $[x]_+ = \max(0, x)$, and let $\tO(f)$ denote $O(f \log^c f)$ for some constant $c$.  Let $\supp(A)$ denote the support of distribution $A$.  We recall some standard tools from DP~\cite{Vadhan17}.

For two distributions $A, B$, let 
$\hsd{\eps}{A}{B} = 
\sum_{x \in \supp(A)} [A(x) - e^\eps B(x)]_+$ denote the \emph{$e^\eps$-hockey stick divergence}.
For brevity, we write $A \approx_{\eps, \delta} B$ to denote $\hsd{\eps}{A}{B} \leq \delta$ and $\hsd{\eps}{B}{A} \leq \delta$.

\begin{lemma}[``Triangle Inequality''] \label{lem:wcomp}
If $A \approx_{\eps', \delta'} B$ and $B \approx_{\eps, \delta} C$, then $A \approx_{\eps + \eps',  e^\eps \delta' + e^{\eps'} \delta} C$.
\end{lemma}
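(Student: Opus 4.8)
The plan is to directly manipulate the hockey-stick divergence $\hsd{\eps+\eps'}{A}{C} = \sum_{x} [A(x) - e^{\eps+\eps'} C(x)]_+$ and bound it by interposing $B$. First I would fix an arbitrary point $x \in \supp(A)$ and write $A(x) - e^{\eps + \eps'} C(x) = \left(A(x) - e^{\eps'} B(x)\right) + e^{\eps'}\left(B(x) - e^{\eps} C(x)\right)$, which is the natural telescoping that matches the two exponents appearing in the hypotheses. Applying the elementary inequality $[u + v]_+ \le [u]_+ + [v]_+$ (and $[\la v]_+ = \la [v]_+$ for $\la \ge 0$), I get $[A(x) - e^{\eps+\eps'}C(x)]_+ \le [A(x) - e^{\eps'} B(x)]_+ + e^{\eps'} [B(x) - e^{\eps} C(x)]_+$.

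Next I would sum over $x$. The first sum is at most $\hsd{\eps'}{A}{B} \le \delta'$ directly from the hypothesis $A \approx_{\eps',\delta'} B$. The second sum needs a little care: the hypothesis $B \approx_{\eps,\delta} C$ gives $\sum_{x \in \supp(B)} [B(x) - e^{\eps} C(x)]_+ \le \delta$, so I need to check that summing over $\supp(A)$ rather than $\supp(B)$ is harmless — but any $x$ with $B(x) = 0$ contributes $[{-e^{\eps}C(x)}]_+ = 0$, so extending or restricting the index set among points where $B(x)=0$ changes nothing, and thus $\sum_{x \in \supp(A)} [B(x) - e^{\eps}C(x)]_+ \le \sum_{x} [B(x) - e^{\eps}C(x)]_+ \le \delta$. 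Combining, $\hsd{\eps+\eps'}{A}{C} \le \delta' + e^{\eps'} \delta$.

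Finally, for the statement $A \approx_{\eps+\eps', e^{\eps}\delta' + e^{\eps'}\delta} C$ I also need the reverse direction $\hsd{\eps+\eps'}{C}{A} \le e^{\eps}\delta' + e^{\eps'}\delta$. By symmetry of the argument, interposing $B$ the other way via $C(x) - e^{\eps+\eps'} A(x) = \left(C(x) - e^{\eps} B(x)\right) + e^{\eps}\left(B(x) - e^{\eps'} A(x)\right)$ and using $\hsd{\eps}{C}{B} \le \delta$ and $\hsd{\eps'}{B}{A} \le \delta'$ yields $\hsd{\eps+\eps'}{C}{A} \le \delta + e^{\eps}\delta'$. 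Taking the max of the two bounds gives the claimed $\delta$-parameter $e^{\eps}\delta' + e^{\eps'}\delta$ (each of the two one-sided bounds is dominated by it). I do not anticipate a genuine obstacle here; the only point requiring attention is the bookkeeping over which support set the sums range, and ensuring the asymmetric final $\delta$ expression correctly upper-bounds both one-sided divergences.
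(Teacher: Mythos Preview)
Your proof is correct and is the standard argument for this well-known inequality; the paper itself does not give a proof but simply states the lemma as a standard tool from the DP literature (citing \cite{Vadhan17}). The telescoping decomposition and the handling of supports are exactly right, and your observation that each one-sided bound $\delta' + e^{\eps'}\delta$ and $\delta + e^{\eps}\delta'$ is dominated by the symmetric expression $e^{\eps}\delta' + e^{\eps'}\delta$ completes the claim cleanly.
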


\begin{lemma}[Group Privacy] \label{lem:group-privacy}
Let $\bA$ be any $(\eps, \delta)$-item-level DP algorithm and $\bx, \bx'$ be $k$-neighbors\footnote{I.e., there exists $\bx = \bx_0, \bx_1, \dots, \bx_k = \bx'$ such that $\bx_{i - 1}, \bx_i$ are item-level neighbors for all $i \in [k]$.}, 
then $\bA(\bx) \approx_{\eps', \delta'} \bA(\bx')$ where $\eps' = k\eps, \delta' = \frac{e^{k \eps} - 1}{e^\eps - 1} \delta \le k e^{k\eps} \delta$.
\end{lemma}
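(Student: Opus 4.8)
The plan is to exploit the chain structure of $k$-neighbors directly rather than to iterate the triangle inequality of \Cref{lem:wcomp}: chaining the set-probability inequalities telescopes into exactly the geometric series $\tfrac{e^{k\eps}-1}{e^\eps-1}$, whereas repeatedly composing the $\approx_{\eps,\delta}$ relation would only yield a lossier constant. Concretely, let $\bx = \bx_0, \bx_1, \dots, \bx_k = \bx'$ be a sequence witnessing that $\bx, \bx'$ are $k$-neighbors, so that $\bx_{i-1}$ and $\bx_i$ are item-level neighbors for every $i \in [k]$. Fix an arbitrary $S \subseteq \MO$. Applying the $(\eps,\delta)$-item-level DP guarantee of $\bA$ to each neighboring pair $(\bx_{i-1}, \bx_i)$ gives $\Pr[\bA(\bx_{i-1}) \in S] \le e^\eps \Pr[\bA(\bx_i) \in S] + \delta$.

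Next I would unroll this recursion. A straightforward induction on $k$ shows
\[
\Pr[\bA(\bx) \in S] \;\le\; e^{k\eps}\,\Pr[\bA(\bx') \in S] \;+\; \delta \sum_{j=0}^{k-1} e^{j\eps} \;=\; e^{k\eps}\,\Pr[\bA(\bx') \in S] \;+\; \frac{e^{k\eps}-1}{e^\eps - 1}\,\delta ,
\]
using the closed form of the geometric sum. Since $S$ was arbitrary, this is precisely $\hsd{k\eps}{\bA(\bx)}{\bA(\bx')} \le \tfrac{e^{k\eps}-1}{e^\eps-1}\delta$. Because ``$\bx_{i-1}$ and $\bx_i$ are item-level neighbors'' is a symmetric relation, the reversed sequence $\bx_k, \dots, \bx_0$ also witnesses $k$-neighborliness, so the identical argument gives $\hsd{k\eps}{\bA(\bx')}{\bA(\bx)} \le \tfrac{e^{k\eps}-1}{e^\eps-1}\delta$ as well; together these yield $\bA(\bx) \approx_{\eps', \delta'} \bA(\bx')$ with $\eps' = k\eps$ and $\delta' = \tfrac{e^{k\eps}-1}{e^\eps-1}\delta$.

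Finally, for the cleaner bound $\delta' \le k e^{k\eps}\delta$, I would bound each of the $k$ summands in $\sum_{j=0}^{k-1} e^{j\eps}$ by the largest one, so $\tfrac{e^{k\eps}-1}{e^\eps-1} = \sum_{j=0}^{k-1} e^{j\eps} \le k\, e^{(k-1)\eps} \le k\, e^{k\eps}$. There is no genuine obstacle here---this is the classical group-privacy argument---and the only point deserving a little care is accumulating the additive error terms with the correct powers of $e^\eps$, i.e.\ chaining probabilities of events directly rather than invoking \Cref{lem:wcomp}, which would be wasteful.
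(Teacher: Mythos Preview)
Your proof is correct and is exactly the classical group-privacy argument. Note, however, that the paper does not actually prove \Cref{lem:group-privacy}: it is stated in the Preliminaries as one of the ``standard tools from DP'' (with a citation to \cite{Vadhan17}) and is used as a black box. So there is no paper proof to compare against; what you wrote is the standard derivation one finds in the literature.
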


\begin{theorem}[Amplification-by-Subsampling~\cite{BalleBG18}] \label{thm:amp-by-subsampling}
Let $n, n' \in \BN$ be such that $n' \geq n$.
For any $(\eps, \delta)$-item-level DP algorithm $\bA$ with sample complexity $n$, let $\bA'$ denote the algorithm with sample complexity $n'$ that randomly chooses $n$ out of the $n'$ input samples and then runs $\bA$ on the subsample. Then $\bA'$ is $(\eps', \delta')$-DP where $\eps' = \ln\left(1 + \eta(e^\eps - 1)\right), \delta' = \eta \delta$ for $\eta = n / n'$.  
\end{theorem}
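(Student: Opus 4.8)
The plan is to prove the amplification bound by a coupling argument combined with the ``advanced joint convexity'' of the hockey-stick divergence, following Balle, Barthe, and Gaboardi. Fix item-level neighbors $\bx, \bx' \in \MZ^{n'}$ that agree in every coordinate except some coordinate $j$, and write $\eta = n/n'$. Let $W$ be the uniformly random size-$n$ subset of $[n']$ drawn by $\bA'$ (sampling without replacement), so that $\bA'(\bx)$ is distributed as $\bA(\bx_W)$ and $\bA'(\bx')$ as $\bA(\bx'_W)$. Conditioning on whether $j \in W$ (an event of probability $\eta$), I would write
\[
\bA'(\bx) \ \sim\ (1-\eta)\,\nu + \eta\,\rho, \qquad \bA'(\bx') \ \sim\ (1-\eta)\,\nu + \eta\,\rho',
\]
where $\rho$ and $\rho'$ are the laws of $\bA(\bx_W)$ and $\bA(\bx'_W)$ conditioned on $j \in W$, and $\nu$ is the law of $\bA(\bx_W)$ conditioned on $j \notin W$; crucially, $\nu$ is common to both inputs, since $\bx_W = \bx'_W$ whenever $j \notin W$.

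First I would establish the three pairwise bounds $\rho \approx_{\eps,\delta} \rho'$, $\nu \approx_{\eps,\delta} \rho$, and $\nu \approx_{\eps,\delta} \rho'$. The first is easy: for each fixed $W \ni j$, the datasets $\bx_W$ and $\bx'_W$ differ in one element and are hence item-level neighbors, so $\bA(\bx_W) \approx_{\eps,\delta} \bA(\bx'_W)$; averaging over $W$ conditioned on $j \in W$ and using joint convexity of $\hsd{\eps}{\cdot}{\cdot}$ gives $\rho \approx_{\eps,\delta}\rho'$. For $\nu \approx_{\eps,\delta}\rho$ I would use a coupling of the two conditional subsampling distributions: draw a uniformly random $(n-1)$-subset $R$ of $[n']\setminus\{j\}$ and then a uniformly random $k$ among the remaining $n'-n$ coordinates; then $\{j\}\cup R$ has the law of $W$ given $j \in W$, while $R \cup \{k\}$ has the law of $W$ given $j \notin W$, and the datasets $\bx_{\{j\}\cup R}$ and $\bx_{R\cup\{k\}}$ differ in exactly one element (value $x_j$ versus value $x_k$), hence are item-level neighbors. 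Applying the DP of $\bA$ and joint convexity yields $\nu \approx_{\eps,\delta}\rho$, and the same coupling applied to $\bx'$ gives $\nu\approx_{\eps,\delta}\rho'$ (using $\bx_{R\cup\{k\}} = \bx'_{R\cup\{k\}}$).

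Next I would invoke the advanced joint convexity identity: for distributions $\mu_0,\mu_1,\mu_1'$, any $\beta\in(0,1]$, and any $\eps\geq 0$, defining $\eps'$ by $e^{\eps'}=1+\beta(e^\eps-1)$ (so $\eps'\leq\eps$) and $\theta := e^{\eps'-\eps}\in(0,1]$, one has
\[
\hsd{\eps'}{(1-\beta)\mu_0+\beta\mu_1}{(1-\beta)\mu_0+\beta\mu_1'} \ =\ \beta\cdot\hsd{\eps}{\mu_1}{(1-\theta)\mu_0+\theta\mu_1'}.
\]
I would verify this by a pointwise computation: expanding the integrand on each side, the $\mu_1$-terms match identically, matching the $\mu_1'$-terms forces $\theta=e^{\eps'-\eps}$, and matching the $\mu_0$-terms forces $e^{\eps'}=1+\beta(e^\eps-1)$; since $\beta\geq 0$ one has $[\beta z]_+=\beta[z]_+$, so the identity survives $[\cdot]_+$ and summation. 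Instantiating with $\beta=\eta$, $\mu_0=\nu$, $\mu_1=\rho$, $\mu_1'=\rho'$, and $\eps'$ exactly as in the statement, I get
\[
\hsd{\eps'}{\bA'(\bx)}{\bA'(\bx')} \ =\ \eta\cdot\hsd{\eps}{\rho}{(1-\theta)\nu+\theta\rho'} \ \leq\ \eta\big((1-\theta)\,\hsd{\eps}{\rho}{\nu}+\theta\,\hsd{\eps}{\rho}{\rho'}\big) \ \leq\ \eta\delta,
\]
using convexity of $\hsd{\eps}{\rho}{\cdot}$ in its second argument together with the pairwise bounds from the first step. Exchanging the roles of $\bx$ and $\bx'$ (and using the other directions of the same pairwise bounds) gives the reverse inequality, so $\bA'(\bx)\approx_{\eps',\eta\delta}\bA'(\bx')$, which is the claim, with $\eps' = \ln(1+\eta(e^\eps-1))$ and $\delta' = \eta\delta$.

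The main obstacle will be the coupling in the first step: the key realization is that, conditioned on the complementary events $\{j\in W\}$ and $\{j\notin W\}$, the two induced conditional subsample distributions can be coupled so that the realized datasets always differ in only a single coordinate. This is precisely what keeps $\nu$ within privacy level $(\eps,\delta)$ of both $\rho$ and $\rho'$, rather than incurring a group-privacy blow-up that would render the mixture bound in the third step vacuous. By contrast, the advanced joint convexity identity is a purely mechanical algebraic verification, and the convexity properties of the hockey-stick divergence used above are standard.
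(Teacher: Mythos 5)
Your proof is correct: the paper itself imports \Cref{thm:amp-by-subsampling} from \cite{BalleBG18} without proof, and your argument is a faithful reconstruction of that reference's proof, with the right conditional decomposition on the event $\{j \in W\}$, the correct single-swap coupling between the two conditional subsample laws (so that $\nu$, $\rho$, $\rho'$ are pairwise within $(\eps,\delta)$ without any group-privacy loss), and a correctly verified advanced-joint-convexity identity yielding $\eps' = \ln(1+\eta(e^\eps-1))$ and $\delta' = \eta\delta$. Nothing further is needed.
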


We also need the definition of the shifted and truncated discrete Laplace distribution:

\begin{defn}[Shifted Truncated Discrete Laplace Distribution]
\label{def:tdlap}
For any $\eps, \delta > 0$, let $\kappa = \kappa(\eps, \delta) := 1 + \lceil \ln(1/\delta) / \eps \rceil$ and let $\TDLap(\eps, \delta)$ be the distribution supported on $\{0, \dots, 2\kappa\}$ with probability mass function at $x$ being proportional to $\exp\inparen{-\eps \cdot |x - \kappa|}$.
\end{defn}

\textbf{Exponential Mechanism (EM).} 
In the (generalized) \emph{selection} problem ($\SEL$), we are given a candidate set $\MH$ and scoring functions $\scr_H$ for all $H \in \MH$ whose sensitivities are at most $\Delta$. We say that an algorithm $\bA$ is $(\alpha, \beta)$-accurate iff $\Pr_{H \sim \bA(\bx)}[\scr_H(\bx) \leq \min_{H' \in \cH} \scr_{H'}(\bx) + \alpha] \geq 1 - \beta$.

\begin{theorem}[\cite{McSherryT07}] \label{lem:pure-selection}
There is an $\eps$-DP $(O(\Delta \cdot \log(|\cH|/\beta) / \eps), \beta)$-accurate algorithm for $\SEL$. %
\end{theorem}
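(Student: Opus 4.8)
The plan is to analyze the exponential mechanism of \cite{McSherryT07} directly: on input $\bx$, output each candidate $H \in \cH$ with probability $w_H(\bx)/W(\bx)$, where $w_H(\bx) := \exp\inparen{-\eps \cdot \scr_H(\bx) / (2\Delta)}$ and $W(\bx) := \sum_{H' \in \cH} w_{H'}(\bx)$. (I follow the paper's convention that $\scr_H$ is a loss to be minimized; the classical statement phrases this with a utility to be maximized, but the two are interchanged by negating $\scr_H$, which leaves the sensitivity unchanged.) Since $\cH$ is assumed finite throughout, $W(\bx)$ is a well-defined positive number and this is a bona fide distribution. I would prove privacy and accuracy as two short, independent arguments.

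For privacy, fix neighboring inputs $\bx \asymp \bx'$ (under whichever adjacency relation the sensitivity hypothesis $\Delta$ is stated for — the argument is agnostic to this, which is exactly why the clipping trick in the pure-DP section yields user-level DP) and a candidate $H$. The sensitivity bound gives $|\scr_{H'}(\bx) - \scr_{H'}(\bx')| \le \Delta$ for every $H'$, hence $w_{H'}(\bx) \in [e^{-\eps/2}, e^{\eps/2}] \cdot w_{H'}(\bx')$; summing over $H'$ gives the same multiplicative sandwich for the normalizers, $W(\bx) \in [e^{-\eps/2}, e^{\eps/2}] \cdot W(\bx')$. Therefore $\Pr[\bA(\bx) = H] = w_H(\bx)/W(\bx)$ is within a factor $e^{\eps/2} \cdot e^{\eps/2} = e^{\eps}$ of $\Pr[\bA(\bx') = H]$, which is precisely $\eps$-DP (pure-DP, so no additive slack is needed).

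For accuracy, let $\scr^\star := \min_{H' \in \cH} \scr_{H'}(\bx)$ and call $H$ \emph{bad} if $\scr_H(\bx) > \scr^\star + \alpha$. Lower-bounding $W(\bx)$ by the weight of any minimizer gives $W(\bx) \ge \exp\inparen{-\eps \scr^\star/(2\Delta)}$, while each bad $H$ has weight at most $\exp\inparen{-\eps(\scr^\star + \alpha)/(2\Delta)}$; there are at most $|\cH|$ bad candidates, so a union bound yields $\Pr[\bA(\bx)\text{ is bad}] \le |\cH| \cdot \exp\inparen{-\eps\alpha/(2\Delta)}$. Setting $\alpha := \frac{2\Delta}{\eps}\ln(|\cH|/\beta) = O\inparen{\Delta \log(|\cH|/\beta)/\eps}$ makes the right-hand side at most $\beta$, establishing $(\alpha,\beta)$-accuracy and completing the proof. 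There is no real obstacle here — this is a textbook argument; the only points worth checking are that the ``sensitivity'' in the hypothesis is measured with respect to the same adjacency relation appearing in the claimed $\eps$-DP guarantee, and that finiteness of $\cH$ (assumed throughout) legitimizes both the normalization and the union bound.
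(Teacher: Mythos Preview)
Your proof is correct and is exactly the standard analysis of the exponential mechanism. The paper does not actually prove this statement --- it is cited as a known result from \cite{McSherryT07} and stated without proof in the preliminaries --- so there is nothing to compare against beyond noting that what you wrote is the textbook argument one would find in the original reference or in standard DP surveys.
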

\section{Approximate-DP}
\label{sec:apx-dp-main}

In this section, we prove our main result on approximate-DP user-level learning, which is stated in \Cref{thm:apx-dp-main} below. Throughout the paper we say that a parameter is ``sufficiently small'' if it is at most an implicit absolute constant.
\begin{theorem}[Main Theorem] \label{thm:apx-dp-main}
Let $\fT$ be any task and $\gamma > 0$ be a parameter. Then, for any
sufficiently small $\eps, \delta > 0$ and $m \in \BN$, there exists $\eps' = \frac{\eps^2}{\log(1/\delta) \sqrt{m \log(m/\delta)}}$ and $\delta' = \Theta\left(\delta \cdot \frac{\eps}{m \log(1/\delta)}\right)$ such that
\begin{align*}
n^{\fT}_m(\eps, \delta; \gamma - o(1)) 
&\textstyle
~\leq~ \tO\left(\frac{\log(1/\delta)}{\eps} + \frac{1}{m} \cdot n^{\fT}_1\left(\eps', \delta'; \gamma\right)\right).
\end{align*}
\end{theorem}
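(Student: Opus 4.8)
The plan is to build the user-level DP algorithm in two layers, following the technical overview. First I would establish what I will call a \emph{local user-level indistinguishability} property: if $\bA$ is $(\eps,\delta)$-item-level DP, then on a ``typical'' input $\bx \sim \cD^{nm}$, deleting any single user's block $\bx_i$ (or a small set $S$ of user blocks) changes the output distribution by at most $(\oeps,\odelta)$ where $\oeps \approx \sqrt{m\log(1/\delta)}\cdot\eps$. The natural route to this, as hinted, is through sample perfect generalization: fix all users other than $i$, and view the residual map $\bx_i \mapsto \bA(\bx)$ as an $(\eps,\delta)$-item-level DP algorithm on $m$ samples; a perfect-generalization bound then says that $\bA$ run on $\bx_i \sim \cD^m$ versus $\bA$ run on an independent fresh draw are $(\oeps,\odelta)$-indistinguishable with probability $1-o(1)$ over $\bx_i$. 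Combining the ``fresh draw'' against the ``empty block'' via group privacy / a coupling argument, and union-bounding over users (and over small subsets $S$), gives the local condition on all but an $o(1)$-fraction of inputs. I would cite \Cref{thm:deletion-indist} and the sample-perfect-generalization machinery for the quantitative version.

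Next I would wrap this in a propose-test-release (PTR) outer algorithm. Define a ``badness'' score $\mathrm{dist}(\bx)$ = the user-Hamming distance from $\bx$ to the nearest input that \emph{violates} the local indistinguishability condition (with parameters $(\oeps, \odelta)$). This score has user-level sensitivity $1$, so I can privately estimate it by adding noise from $\TDLap$ (\Cref{def:tdlap}), with $\kappa = \Theta(\log(1/\delta')/\eps)$: if the noisy score exceeds $\kappa$ (equivalently $\mathrm{dist}(\bx) \gtrsim \log(1/\delta)/\eps$) we proceed, otherwise output $\perp$. On a typical input $\mathrm{dist}(\bx)$ is large — indeed linear in $n$ — so the test passes with probability $1-o(1)$, preserving utility up to the claimed $\gamma - o(1)$. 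When the test passes, we run $\bA$ (on a suitable number of samples). Privacy follows because: (i) the PTR test itself is $(\eps,\delta')$-DP by the discrete-Laplace argument; and (ii) conditioned on passing, $\bx$ is at user-distance $> \kappa$ from any ``bad'' input, so along any length-$\kappa$ chain of user-neighbors from $\bx$ every intermediate input still satisfies the local $(\oeps,\odelta)$ condition; chaining these via the triangle inequality (\Cref{lem:wcomp}) over $\kappa$ steps, together with the fact that a single user-neighbor move is covered by the local condition, bounds the privacy loss of the $\bA$-branch by roughly $(\kappa \oeps, \kappa e^{\kappa\oeps}\odelta)$, which must come out to $(\eps,\delta)$ after choosing parameters.

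To make the bookkeeping close, I would set the inner item-level algorithm to run with privacy parameter $\eps'$ small enough that after the $\sqrt{m}$-blowup from perfect generalization \emph{and} the $\kappa$-step group-privacy chaining we land at total budget $\eps$; tracing through, $\oeps \approx \kappa^{-1}\eps$ and $\eps' \approx \oeps/\sqrt{m\log(1/\delta)} \approx \eps^2/(\log(1/\delta)\sqrt{m\log(m/\delta)})$, matching the statement, and similarly $\delta'$ absorbs the $\kappa e^{\kappa\oeps}$ and $\TDLap$-truncation factors to give $\delta' = \Theta(\delta\eps/(m\log(1/\delta)))$. The sample-complexity accounting is the easy part: the inner algorithm needs $n^{\fT}_1(\eps',\delta';\gamma)$ \emph{items}, and since each of our $n$ users supplies $m$ items, $n = O\big(\tfrac1m n^{\fT}_1(\eps',\delta';\gamma)\big)$ suffices, plus the additive $\tO(\log(1/\delta)/\eps)$ users needed so that $\mathrm{dist}(\bx) \gg \kappa$ with high probability and the PTR test is meaningful. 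The main obstacle, I expect, is the first step done carefully: getting sample perfect generalization to yield the \emph{local} user-level condition uniformly enough (including the subsets $S$ of small size that \Cref{thm:deletion-indist} apparently needs) and with the right high-probability quantifier order, so that the union bound over users only costs $o(1)$ in utility while the $\sqrt{m}$ saving is genuinely realized rather than washed out by the $\log$ factors; the PTR/chaining layer, by contrast, is fairly standard once the local condition is in hand.
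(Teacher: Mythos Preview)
Your overall architecture is right — sample perfect generalization to get a local user-level condition, then PTR to certify it — and your parameter arithmetic lands on the correct $\eps',\delta'$. But two linked pieces of the mechanism are not as you describe, and the gap matters.

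First, the local condition. You phrase it as ``deleting user $i$ changes $\bA$'s output by at most $(\oeps,\odelta)$,'' i.e., $\bA(\bx)\approx_{\oeps,\odelta}\bA(\bx_{-i})$, and you propose to bridge the fresh-resample comparison (which perfect generalization \emph{does} give) to the empty block via group privacy. But group privacy across the $m$ items of user $i$ costs $m\eps$, not $\sqrt{m}\,\eps$ — this would erase the saving. The paper's local condition (\Cref{def:deletion-indist}) instead compares \emph{two} deletions $\bA(\bx_{-S})\approx\bA(\bx_{-S'})$ with $|S|=|S'|=r$; here the symmetric difference is filled on both sides by genuinely i.i.d.\ blocks, so perfect generalization applies directly with no group-privacy bridge (\Cref{thm:deletion-indist}).

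Second, and as a consequence, the released object cannot simply be ``$\bA(\bx)$.'' The algorithm outputs $\bA(\bx_{-T})$ for a uniformly random $T$ of size $4\kappa$ (\Cref{alg:stab}). The privacy proof (\Cref{lem:approx-dp-privacy}) then goes: for user-neighbors $\bx,\bx'$ differing at $i$, pick a common $J^*\subseteq(\overline{S}\cap\overline{S'})\setminus\{i\}$ of size $n-4\kappa$ and apply LDDP \emph{twice} — once from the $\bx$-side subset to $J^*$, once from $J^*$ to the $\bx'$-side — for total loss $2\oeps$ with $\oeps=\eps/3$. There is no length-$\kappa$ chain between neighbors (they are at user-distance $1$). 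The factor of $\kappa$ you correctly anticipate in $\eps'$ enters not through chaining but because the test must certify $(r,\oeps,\odelta)$-LDDP for $r=\Theta(\kappa)$, and the union bound over $\binom{n}{r}^2$ subset pairs inside \Cref{thm:deletion-indist} is what turns $\eps'\sqrt{rm}$ into $\eps' r\sqrt{m}$. A minor additional point you omit: because $\eps'$ in \Cref{lem:approx-dp-util} depends (logarithmically) on $n$, the final step uses a short subsampling argument to decouple this circularity.
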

Note that \Cref{cor:simplified-main-theorem} follows directly from \Cref{thm:apx-dp-main} by plugging in the expression $n^{\fT}_1(\eps', \delta'; \gamma) = \tO\left(\frac{\log(1/\delta')^{c}}{\eps'} \cdot \npriv + \nstat\right)$ into the bound.

While it might be somewhat challenging to interpret the bound in~\Cref{thm:apx-dp-main}, given that the $\epsilon$ values on the left and the right hand sides are not the same, a simple subsampling argument (similar to one in~\cite{stable23}) allows us to make the LHS and RHS have the same $\eps$. In this case, we have that the user complexity is reduced by a factor of roughly $1/\sqrt{m}$, as stated below. We remark that the $\delta$ values are still different on the two sides; however, if we assume that the dependence on $1/\delta$ is only polylogarithmic, this results in at most a small gap of at most polylogarithmic factor in $1/\delta$.

\begin{theorem}[Main Theorem--Same $\eps$ Version] \label{thm:apx-dp-main-same-eps}
Let $\fT$ be any task and $\gamma > 0$ be a parameter. Then, for any sufficiently small $\eps, \delta > 0$, and $m \in \BN$, there exists $\delta' = \Theta\left(\delta \cdot \frac{\eps}{m \log(1/\delta)}\right)$ such that
\begin{align*}
n^{\fT}_m(\eps, \delta; \gamma - o(1)) 
&\leq \tO\left(\frac{\log(1/\delta)}{\eps} + \frac{1}{\sqrt{m}} \cdot \frac{\log(1/\delta)^{1.5}}{\eps} \cdot n^{\fT}_1(\eps, \delta'; \gamma)\right).
\end{align*}
\end{theorem}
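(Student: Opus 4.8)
The plan is to derive \Cref{thm:apx-dp-main-same-eps} from \Cref{thm:apx-dp-main} by a single application of amplification-by-subsampling (\Cref{thm:amp-by-subsampling}) to the \emph{item-level} algorithm whose sample complexity appears on the right-hand side of \Cref{thm:apx-dp-main} --- trading sample complexity for privacy, in the spirit of~\cite{stable23}. Write $\eps_0$ and $\delta_0$ for the values of $\eps'$ and $\delta'$ appearing in \Cref{thm:apx-dp-main}, i.e.\ $\eps_0 = \frac{\eps^2}{\log(1/\delta)\sqrt{m\log(m/\delta)}}$ and $\delta_0 = \Theta\!\big(\delta\eps/(m\log(1/\delta))\big)$, so that we start from
\begin{align*}
n^{\fT}_m(\eps,\delta;\gamma-o(1)) \;\le\; \tO\!\left(\tfrac{\log(1/\delta)}{\eps} \;+\; \tfrac1m\, n^{\fT}_1(\eps_0,\delta_0;\gamma)\right).
\end{align*}
Since $\eps$ is sufficiently small we have $\eps_0<\eps$, and the whole task reduces to replacing $n^{\fT}_1(\eps_0,\delta_0;\gamma)$ by an expression of the form $O(\eps/\eps_0)\cdot n^{\fT}_1(\eps,\cdot;\gamma)$.

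The heart of the argument is the claim that, for $\delta_1 := \Theta\!\big(\tfrac{\eps}{\eps_0}\delta_0\big) = \Theta\!\big(\delta\sqrt{\log(m/\delta)/m}\big)$, one has $n^{\fT}_1(\eps_0,\delta_0;\gamma) \le O\!\big(\tfrac{\eps}{\eps_0}\big)\cdot n^{\fT}_1(\eps,\delta_1;\gamma)$. To prove this I would fix an $(\eps,\delta_1)$-item-level DP, $\gamma$-useful algorithm $\bA$ of sample complexity $N := n^{\fT}_1(\eps,\delta_1;\gamma)$, and let $\bA'$ be the algorithm that, on $n' := \Theta(\tfrac{\eps}{\eps_0}N)$ i.i.d.\ samples, draws a uniformly random size-$N$ subsample and runs $\bA$ on it. The subsample is again $N$ i.i.d.\ draws from $\cD$, so $\bA'$ remains $\gamma$-useful; and by \Cref{thm:amp-by-subsampling} with rate $\eta = N/n' = \Theta(\eps_0/\eps)$, the algorithm $\bA'$ is $\big(\ln(1+\eta(e^\eps-1)),\,\eta\delta_1\big)$-DP. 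Using $e^x-1=\Theta(x)$ for $x\in(0,1]$, one can choose the constant hidden in $n'$ so that $\ln(1+\eta(e^\eps-1))\le\eps_0$ and $\eta\delta_1\le\delta_0$ hold simultaneously; then $\bA'$ is $(\eps_0,\delta_0)$-DP, establishing the claim.

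Substituting the claim into the displayed inequality and using $\tfrac1m\cdot\tfrac{\eps}{\eps_0}=\tfrac{\log(1/\delta)\sqrt{\log(m/\delta)}}{\eps\sqrt m}$ (with the additive $O(1/m)\le O(\log(1/\delta)/\eps)$ from rounding $n'$ absorbed into the first term) would yield
\begin{align*}
n^{\fT}_m(\eps,\delta;\gamma-o(1)) \;\le\; \tO\!\left(\tfrac{\log(1/\delta)}{\eps} \;+\; \tfrac{\log(1/\delta)\sqrt{\log(m/\delta)}}{\eps\sqrt m}\cdot n^{\fT}_1(\eps,\delta_1;\gamma)\right).
\end{align*}
Two cosmetic steps then bring this to the stated form. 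First, $\sqrt{\log(m/\delta)}\le\sqrt{\log m}+\sqrt{\log(1/\delta)}$, and the $\sqrt{\log m}$ factor is polylogarithmic in the instance size, hence absorbed by $\tO(\cdot)$ (or, if one prefers, assume the mild $m\le\poly(1/\delta)$); this turns the coefficient into $\tO(\log(1/\delta)^{1.5}/(\eps\sqrt m))$. Second, $\delta_1\ge\delta':=\Theta(\delta\eps/(m\log(1/\delta)))$ --- indeed $\delta_1/\delta'\gg1$ --- so by monotonicity of $n^{\fT}_1(\eps,\cdot;\gamma)$ in its privacy parameter we may replace $\delta_1$ by $\delta'$, only enlarging the right-hand side. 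The result is exactly \Cref{thm:apx-dp-main-same-eps}. I expect the only real difficulty to be this bookkeeping of the several $\delta$'s, which do not line up across the two statements: the argument has to route through the auxiliary $\delta_1$ and then appeal to monotonicity and to absorbing $\log m$ into $\tO$ --- which is precisely the ``small polylogarithmic gap in $1/\delta$'' flagged in the remark just before the theorem.
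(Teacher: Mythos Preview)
Your approach is correct and is essentially the same as the paper's: both reduce \Cref{thm:apx-dp-main-same-eps} to \Cref{thm:apx-dp-main} by subsampling an $(\eps,\cdot)$-item-level DP learner down to $(\eps_0,\delta_0)$-DP, and then plug in. The only difference is that your detour through the auxiliary $\delta_1$ is unnecessary: since subsampling with rate $\eta<1$ takes an $(\eps,\delta')$-DP algorithm to an $(\eps_0,\eta\delta')$-DP one, and $\eta\delta'\le\delta'$, you can start directly from an $(\eps,\delta')$-DP algorithm and land at $(\eps_0,\delta')$-DP, obtaining $n^{\fT}_1(\eps_0,\delta')\le O(\eps/\eps_0)\cdot n^{\fT}_1(\eps,\delta')$ with the same $\delta'$ on both sides---no monotonicity step needed.
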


Even though \Cref{thm:apx-dp-main-same-eps} might be easier to interpret, we note that it does not result in as sharp a bound as \Cref{thm:apx-dp-main} in some scenarios. For example, if we use \Cref{thm:apx-dp-main-same-eps} under the assumption in \Cref{cor:simplified-main-theorem}, then the privacy-independent part would be $\frac{1}{\sqrt{m}} \cdot \frac{\log(1/\delta)^{1.5}}{\eps} \cdot \nstat$, instead of $\frac{1}{m} \cdot \nstat$ implied by \Cref{thm:apx-dp-main}. (On the other hand, the privacy-dependent part is the same.)

The remainder of this section is devoted to the proof of \Cref{thm:apx-dp-main}. The high-level structure follows the overview in \Cref{sec:overview}: first we show in \Cref{sec:dp-v-perf-gen} that any item-level DP algorithm satisfies sample perfect generalization. \Cref{sec:lddp} then relates this to a local version of user-level DP. We then describe the full algorithm and its guarantees in \Cref{sec:ptr}. Since we fix a task $\fT$ throughout this section, we will henceforth discard the superscript $\fT$ for brevity.

\subsection{DP Implies Sample Perfect Generalization}
\label{sec:dp-v-perf-gen}

We begin with the definition of \emph{sample perfect generalization} algorithms.

\begin{defn}[\cite{CummingsLNRW16}] \label{defn:sample-perf-gen}
For $\beta, \eps, \delta > 0$, an algorithm $\bA: \MZ^n \to \MO$ is said to be \emph{$(\beta, \eps, \delta)$-sample perfectly generalizing} iff, for any distribution $\cD$ (over $\MZ$), $\Pr_{\bx, \bx' \sim \cD^n}[\bA(\bx) \approx_{\eps, \delta} \bA(\bx')] \geq 1 - \beta$.
\end{defn}

The main result of this subsection is that any $(\eps, \delta)$-item-level DP algorithm is $(\beta, O(\eps\sqrt{n \log(\nicefrac{1}{\beta\delta})}), O(n\delta))$-sample perfectly generalizing:

\begin{theorem}[DP $\Rightarrow$ Sample Perfect Generalization] \label{thm:sam-perf-gen}
Suppose that $\bA: \MZ^n \to \MO$ is $(\eps, \delta)$-item-level DP, and assume that $\eps, \delta, \beta, \eps\sqrt{n \log(\nicefrac{1}{\beta\delta})}>0$ are sufficiently small. Then, $\bA$ is $(\beta, \eps', \delta')$-sample perfectly generalizing, where $\eps' \leq O(\eps\sqrt{n \log(\nicefrac{1}{\beta\delta})})$ and $\delta' = O(n\delta)$.
\end{theorem}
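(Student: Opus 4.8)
The plan is to fix an arbitrary distribution $\cD$ over $\MZ$ and bound, with probability $1-\beta$ over independent draws $\bx,\bx'\sim\cD^n$, the hockey-stick divergences $\hsd{\eps'}{\bA(\bx)}{\bA(\bx')}$ and $\hsd{\eps'}{\bA(\bx')}{\bA(\bx)}$. The heart of the argument is a hybrid that interpolates between $\bx$ and $\bx'$ by resampling the coordinates one at a time, together with an advanced-composition–style concentration bound on the accumulated privacy loss. The place where the randomness of the two samples is essential is precisely this concentration step: for two worst-case datasets at Hamming distance $n$ one can only hope for the $n\eps$ bound of group privacy (\Cref{lem:group-privacy}), so the $\sqrt{n}$ saving must come from the fact that the interpolating changes are to fresh i.i.d.\ samples.

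First I would pass from $(\eps,\delta)$-DP to a cleaner per-step handle on the privacy loss. Using the standard decomposition of $(\eps,\delta)$-indistinguishability, for any pair of item-level neighbors $D,D'$ one can write $\bA(D)$ and $\bA(D')$ so that, after discarding an event of small probability (order $\delta$) on each side, the two-sided privacy loss $\ell(o)=\ln\tfrac{\Pr[\bA(D)=o]}{\Pr[\bA(D')=o]}$ has magnitude $O(\eps)$, conditional expectation (under $\bA(D)$) at most $O(\eps^2)$, and controlled moment generating function $\E_{o\sim\bA(D)}[e^{\lambda\ell(o)}]$ for $\lambda$ up to $\approx 1/\eps$. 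This is the usual ``$(\eps,\delta)$-DP implies bounded privacy-loss moments'' fact (note $\KL(\bA(D)\|\bA(D'))=O(\eps^2)$ already follows), and it is where the additive $O(n\delta)$ loss in $\delta'$ originates.

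Next I would set up the hybrid: let $\bx^{(j)}$ agree with $\bx'$ on the first $j$ coordinates and with $\bx$ on the rest, so $\bx^{(0)}=\bx$, $\bx^{(n)}=\bx'$, and consecutive hybrids are item-level neighbors. For a fixed output $o$, the privacy loss between $\bA(\bx)$ and $\bA(\bx')$ telescopes as $\sum_{j=1}^{n}\ell_j(o)$ with $\ell_j$ the per-step loss at hybrid $j$. Taking $o\sim\bA(\bx)$ (one direction; the other is symmetric), I would argue that, conditioned in the appropriate order, the partial sums of $\ell_j(o)$ form a martingale-difference–type sequence — here one crucially uses that the resampled value of coordinate $j$ is a fresh i.i.d.\ draw from $\cD$, so each increment has conditional mean $O(\eps^2)$ and conditional magnitude/MGF control $O(\eps)$, except on a bad event of total mass $O(n\delta)$ (union bound over the $n$ cleaning steps). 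An Azuma/Bernstein bound then yields $\bigl|\sum_j\ell_j(o)\bigr|\le O(n\eps^2)+O\bigl(\eps\sqrt{n\log(1/(\beta\delta))}\bigr)=O\bigl(\eps\sqrt{n\log(1/(\beta\delta))}\bigr)$ except with probability $O(\beta)+O(n\delta)$ over $(\bx,\bx',o)$, using that $\eps\sqrt{n\log(1/(\beta\delta))}$ is sufficiently small so the linear-in-$n$ term is dominated. By Fubini, for all but a $\beta$-fraction of pairs $(\bx,\bx')$ the privacy loss of $\bA(\bx)$ against $\bA(\bx')$ exceeds $\eps':=O(\eps\sqrt{n\log(1/(\beta\delta))})$ only on an event of $\bA(\bx)$-mass $\delta':=O(n\delta)$, i.e.\ $\bA(\bx)\approx_{\eps',\delta'}\bA(\bx')$; repeating with the roles of $\bx,\bx'$ swapped gives the other direction.

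The main obstacle I anticipate is making the martingale step fully rigorous: since $(\eps,\delta)$-DP gives no uniform pointwise bound on the privacy loss, one must juggle the ``cleaned-up'' events from the first step together with the conditioning order needed to turn $\{\ell_j(o)\}_j$ into genuine martingale differences with simultaneously controlled conditional mean ($O(\eps^2)$) and conditional MGF (like that of an $O(\eps)$-bounded variable), all in a manner symmetric enough to deliver $\approx_{\eps',\delta'}$ in both directions rather than one-sided indistinguishability. Carefully tracking the $\delta$-losses through the $n$ hybrid steps and through the ``$1-\beta$ over the sample'' guarantee is what pins down the constants $\eps'=O(\eps\sqrt{n\log(1/(\beta\delta))})$ and $\delta'=O(n\delta)$. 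An essentially equivalent route is to feed the hybrid sequence into off-the-shelf advanced-composition machinery, but the resampling-induced near-independence still has to be verified by hand, so I would keep the argument self-contained.
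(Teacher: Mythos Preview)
The obstacle you flag in your last paragraph is a genuine gap, not a technicality. The $O(\eps^2)$ conditional-mean bound you want for $\ell_j(o)=\ln\tfrac{\Pr[\bA(\bx^{(j-1)})=o]}{\Pr[\bA(\bx^{(j)})=o]}$ holds when the expectation is over $o\sim\bA(\bx^{(j-1)})$; it does \emph{not} hold for $o\sim\bA(\bx)=\bA(\bx^{(0)})$, which is what you actually draw. Freshness of $x'_j$ does not supply the centering either: conditioned on $(\bx,o,x'_1,\dots,x'_{j-1})$ one has $\E_{x'_j}[\ell_j(o)]=g(x_j)-\E_{z\sim\cD}[g(z)]$, where $g(z)$ is the log-probability of output $o$ when coordinate $j$ is set to $z$, and this is $O(\eps)$ in magnitude with no reason to be $O(\eps^2)$ for the already-fixed value $x_j$ --- the $x_j\leftrightarrow x'_j$ symmetry is broken precisely by having sampled $o$ from a distribution that depends on $x_j$. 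With only $O(\eps)$ increments and no centering you are back to the $n\eps$ group-privacy rate; advanced composition does not apply because there is a single output, not $n$ independently drawn ones.

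The paper sidesteps this by reversing what is random. For each fixed $o\in\MO$ it studies $f^o(\bx):=\Pr[\bA(\bx)=o]$ as a function of the \emph{sample} $\bx\sim\cD^n$; the Doob martingale $Y_j=\E[f^o(\bx)\mid x_1,\dots,x_j]$ then has bounded ratios $Y_j/Y_{j-1}\in[e^{-\eps},e^\eps]$ (up to a $\delta$-defect removed by a coordinatewise correction, \Cref{lem:func-concen-with-err}), so Azuma on $\ln Y_j$ delivers the $\sqrt{n}$ concentration. Summing over $o$ yields the expected-divergence bound $\E_{\bx,\bx'}[\hsd{\eps''}{\bA(\bx)}{\bA(\bx')}]\le O(n\delta)$ (\Cref{lem:low-prob}). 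A second, separate stage --- Markov to a probability-$\tfrac12$ good set of pairs, then Talagrand's isoperimetric inequality (\Cref{thm:talagrand}) combined with group privacy --- boosts this to probability $1-\beta$ for arbitrary $\beta$. The two-stage structure is not incidental: an expected-divergence bound alone cannot reach $\beta<\delta'$, so even a corrected concentration-plus-Fubini argument would still need a Talagrand-type step on top.
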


\subsubsection{Bounding the Expected Divergence}

As a first step, we upper bound the expectation of the hockey stick divergence $\hsd{\eps'}{\bA(\bx)}{\bA(\bx')}$:

\begin{lemma} \label{lem:low-prob}
Suppose that $\bA: \MZ^n \to \MO$ is an $(\eps, \delta)$-item-level DP algorithm. Further, assume that $\eps, \delta$, and $\eps\sqrt{n \log(\nicefrac{1}{\delta})} > 0$ are sufficiently small. Then, for  $\eps' = O(\eps\sqrt{n \log(\nicefrac{1}{\delta})})$, we have
\begin{align*}
\E_{\bx, \bx' \sim \cD^n}[\hsd{\eps'}{\bA(\bx)}{\bA(\bx')}] \leq O(n\delta).
\end{align*}
\end{lemma}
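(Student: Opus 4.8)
The plan is to reduce the two-sample statement to an ordinary group-privacy / composition statement about $\bA$ by conditioning on the coordinates where $\bx$ and $\bx'$ differ. Draw $\bx, \bx' \sim \cD^n$ independently. Let $K = |\{i \in [n] : x_i \ne x_i'\}|$ be the number of disagreeing coordinates. Since each coordinate pair $(x_i, x_i')$ agrees with probability $\sum_{z} \cD(z)^2 =: 1 - p$, the variable $K$ is stochastically dominated by $\Bin(n, 1)$ and in particular $\E[K] \le n$, and by a Chernoff bound $K \le t := O(n + \log(1/\delta))$ except with probability $O(\delta)$ (any polynomial tail bound suffices here; we only need the failure probability to be $O(\delta)$, which we can absorb). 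On the event $K = k$, the inputs $\bx$ and $\bx'$ are $k$-neighbors in the item-level sense, so \Cref{lem:group-privacy} gives $\bA(\bx) \approx_{k\eps,\, k e^{k\eps}\delta} \bA(\bx')$, i.e. $\hsd{k\eps}{\bA(\bx)}{\bA(\bx')} \le k e^{k\eps}\delta$.

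The issue with applying group privacy naively at $k$ up to the worst-case value $t = \Theta(n)$ is that it would cost a factor $\eps' = \Theta(\eps n)$, not $\eps' = O(\eps\sqrt{n\log(1/\delta)})$. To get the $\sqrt n$ improvement we instead fix a single threshold $\eps' = C\eps\sqrt{n\log(1/\delta)}$ (equivalently a threshold $k_0 = \eps'/\eps = C\sqrt{n\log(1/\delta)}$ on the number of disagreements) and split the expectation according to whether $K \le k_0$ or not:
\begin{align*}
\E_{\bx,\bx'}\!\left[\hsd{\eps'}{\bA(\bx)}{\bA(\bx')}\right]
&\le \E\!\left[\mathbf{1}[K \le k_0]\cdot \hsd{k_0\eps}{\bA(\bx)}{\bA(\bx')}\right]
  + \P[K > k_0].
\end{align*}
For the first term, on $\{K = k\}$ with $k \le k_0$ we have $\hsd{\eps'}{\bA(\bx)}{\bA(\bx')} \le \hsd{k\eps}{\bA(\bx)}{\bA(\bx')} \le k e^{k\eps}\delta \le k_0 e^{k_0\eps}\delta = k_0 e^{\eps'}\delta$, and since $\eps'$ is assumed sufficiently small this is $O(k_0 \delta) = O(\sqrt{n\log(1/\delta)}\,\delta) \le O(n\delta)$ (using $\log(1/\delta) \le O(n)$ in the relevant regime, or more simply bounding $\sqrt{n\log(1/\delta)}\cdot \delta$ directly — this term is dominated by the tail term below in any case). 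For the second term, $\P[K > k_0]$ where $k_0 = C\sqrt{n\log(1/\delta)} \gg \E[K]\cdot(\text{const})$ only when $\log(1/\delta) \gg n$; in general $k_0$ may be smaller than $\E[K] = \Theta(n)$, so the clean Chernoff bound $\P[K > k_0] \le \delta$ need not hold. This is the point that needs care.

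The main obstacle, then, is controlling $\E[\mathbf{1}[K = k]\cdot k e^{k\eps}\delta]$ for $k$ ranging all the way up to $n$: we cannot simply clip at $k_0 = \sqrt{n\log(1/\delta)}$ and throw away the rest, because $K$ is typically of order $n$. The right fix is to keep the $e^{\eps'}$-divergence at the fixed level $\eps'$ but bound the contribution of each value $k$ by $\hsd{\eps'}{\bA(\bx)}{\bA(\bx')}$ — which, when $k\eps > \eps'$, is \emph{not} controlled by group privacy at parameter $\eps'$ directly, so instead we use group privacy at $k\eps$ together with the monotonicity $\hsd{\eps'}{\cdot}{\cdot}\le 1$ always, splitting into the ``good'' range $k \le k_0$ (handled as above, contributing $O(n\delta)$) and the ``bad'' range $k > k_0$ where we just pay $\P[K > k_0]$. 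So everything reduces to showing $\P[K > k_0] \le O(n\delta)$ for $k_0 = \Theta(\sqrt{n\log(1/\delta)})$. When $\sqrt{n\log(1/\delta)} \ge \Omega(n)$, i.e. $\log(1/\delta) \ge \Omega(n)$, this is a genuine Chernoff bound and gives $\P[K > k_0]\le \delta$. When $\log(1/\delta) \ll n$, we have $k_0 \ll n$ and the bound $\P[K>k_0]\le O(n\delta)$ is \emph{false} in general — so in that regime we cannot afford to discard $k > k_0$, and must instead get a genuinely better tail estimate on $\hsd{\eps'}{\bA(\bx)}{\bA(\bx')}$ when $k \gg k_0$. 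The way to do this, and what I expect the author's proof does, is a more refined argument: condition on the \emph{positions} of the disagreements but then observe that within the disagreeing block the two halves $(x_i)$ and $(x_i')$ are i.i.d., so one can apply an advanced-composition / secrecy-of-the-sample style bound (à la amplification by subsampling, \Cref{thm:amp-by-subsampling}, or the ``each changed coordinate is a fresh independent sample'' trick behind sample perfect generalization) to conclude that the privacy loss over $k$ independent substitutions concentrates at scale $\sqrt{k\log(1/\delta)}\,\eps \le \sqrt{n\log(1/\delta)}\,\eps = \eps'$ rather than $k\eps$. Concretely: for fixed disagreement set $T$ with $|T| = k$, both $\bx$ and $\bx'$ restricted to $T$ are drawn i.i.d. from (a tilted copy of) $\cD$, so by advanced composition of the per-coordinate $(\eps,\delta)$ guarantees across the $k$ independent coordinates, $\bA(\bx)\approx_{\eps'',\, \delta''}\bA(\bx')$ with $\eps'' = O(\eps\sqrt{k\log(1/\delta'')} + k\eps^2) = O(\eps\sqrt{n\log(1/\delta)})=\eps'$ and $\delta'' = O(k\delta)$, for the right choice of constant $C$; taking expectations over $T$ and then over the randomness of $\bx,\bx'$, and using $\E[k] \le n$ for the $\delta''$ term, yields $\E[\hsd{\eps'}{\bA(\bx)}{\bA(\bx')}] \le O(n\delta)$. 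So the crux is establishing the advanced-composition bound across the disagreeing coordinates — i.e., that changing $k$ independently-resampled coordinates costs only $\sqrt{k}$, not $k$, in the privacy parameter — and then summing the resulting $\delta$-terms against the distribution of $K$.
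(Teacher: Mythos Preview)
Your proposal correctly diagnoses why naive group privacy fails (typically $K=\Theta(n)$, so the $k\eps$ loss is too large), but the proposed fix---``advanced composition of the per-coordinate $(\eps,\delta)$ guarantees across the $k$ independent coordinates''---does not work. Advanced composition is a statement about running $k$ separate $(\eps,\delta)$-DP mechanisms and observing all $k$ outputs; it is \emph{not} a statement about changing $k$ coordinates of the input to a \emph{single} mechanism. The latter is exactly group privacy, and group privacy genuinely costs $k\eps$, not $\sqrt{k}\,\eps$: there is no general $\sqrt{k}$ phenomenon for input changes. Moreover, if you try to formalize your last paragraph---fix the disagreement set $T$, view $\bA'(\by):=\bA(\bx_{-T},\by)$, and argue that for independent $\by,\by'\sim\cD^{|T|}$ we have $\bA'(\by)\approx_{O(\eps\sqrt{|T|\log(1/\delta)}),\,O(|T|\delta)}\bA'(\by')$---you are asserting precisely the lemma you are trying to prove (with $n$ replaced by $|T|$), so the argument is circular. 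The subsampling/secrecy-of-the-sample references do not rescue this: amplification by subsampling reduces $\eps$ when the mechanism only touches a random subset of the data, which is a different situation from substituting a random subset of coordinates.

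The paper's proof takes a completely different route. It fixes an output $o\in\MO$, sets $f^o(\bx):=\Pr[\bA(\bx)=o]$, and proves a ``robust'' multiplicative McDiarmid inequality (\Cref{lem:func-concen-with-err}) of the form
\[
\E_{\bx,\bx'}\big[[f(\bx)-e^{\eps'}f(\bx')]_+\big]
\;\le\; O(\mu_{\cD^n}(f)\cdot\delta)
\;+\;O\Big(\textstyle\sum_{i=1}^n \E_{\bx,\tbx}\big[f(\bx)-e^{\eps}f(\tbx^{(i)})\big]_+\Big),
\]
with $\eps'=O(\eps\sqrt{n\log(1/\delta)})$. The second term on the right is exactly $\hsd{\eps}{\bA(\bx)}{\bA(\tbx^{(i)})}\le\delta$ summed over $i$, giving $O(n\delta)$; summing the left side over $o$ gives $\E[\hsd{\eps'}{\bA(\bx)}{\bA(\bx')}]$. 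The $\sqrt{n}$ arises not from composition but from martingale concentration: one builds a Doob-type martingale $Y_i=g(x_1,\dots,x_i)$ with bounded \emph{ratios} $Y_i\approx_\eps Y_{i-1}$ (after a correction step, \Cref{lem:rv-correction-scalar}, that absorbs the approximate-DP slack into the additive error), and applies Azuma to $\sum_i\ln(Y_i/Y_{i-1})$. This is where the $\sqrt{n\log(1/\delta)}$ comes from, and it is the missing idea in your proposal.
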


This proof follows an idea from~\cite{CummingsLNRW16}, who prove  a similar statement for \emph{pure-DP} algorithms (i.e., $\delta = 0$). For every output $o \in \MO$, they consider the function $g^o: \MZ^n \to \R$ defined by $g^o(\bx) := \ln(\Pr[\bA(\bx) = o])$. The observation here is that, in the pure-DP case, this function is $\eps$-Lipschitz (i.e., changing a single coordinate changes its value by at most $\eps$). They then apply  McDiarmid's inequality on $g^o$ to show that the function values are well-concentrated with a variance of $O(\sqrt{n} \cdot \eps)$. This suffices to prove the above statement for the case where the algorithm is pure-DP (but the final bound still contains $\delta$).
To adapt this proof to the approximate-DP case, we prove a ``robust'' version of McDiarmid's inequality that works even for the case where the Lipschitz property is violated on some pairs of neighbors (\Cref{lem:func-concen-with-err}). This inequality is shown by arguing that we may change the function ``slightly'' to make it multiplicative-Lipschitz, which then allows us to apply standard techniques. Due to space constraints, we defer the full proof, which is technically involved, to \Cref{app:proof-sam-perf-gen-low-prob}.

\subsubsection{Boosting the Probability: Proof of \texorpdfstring{\Cref{thm:sam-perf-gen}}{Theorem~\ref{thm:sam-perf-gen}}}

While \Cref{lem:low-prob} bounds the expectation of the hockey stick divergence, it is insufficient to get arbitrarily high probability bound (i.e., for any $\beta$ as in \Cref{thm:sam-perf-gen}); for example, using Markov's inequality would only be able to handle $\beta > \delta'$. However, we require very small $\beta$ in a subsequent step of the proofs (in particular \Cref{thm:deletion-indist}). Fortunately, we observe that it is easy to ``boost'' the $\beta$ to be arbitrarily small, at an  additive cost of $\eps\sqrt{n \log(\nicefrac{1}{\beta})}$ in the privacy loss.

To do so, we will need the following concentration result of Talagrand\footnote{This is a substantially simplified version compared to the one in \cite{talagrand1995concentration}, but is sufficient for our proof.} for product measures:
\begin{theorem}[\cite{talagrand1995concentration}] \label{thm:talagrand}
For any distribution $\cD$ over $\MZ$, $\cE \subseteq \MZ^n$, and $t > 0$, if $\Pr_{\bx \sim \cD^n}[\bx \in \cE] \geq 1/2$, then $\Pr_{\bx \sim \cD^n}[\bx \in \cE_{\leq t}] \geq 1 - 0.5 \exp\left({\nicefrac{-t^2}{4n}}\right)$,
where $\cE_{\leq t} := \{\bx' \in \MZ^n \mid \exists \bx \in \cE, \|\bx' - \bx\|_0 \leq t\}$.
\end{theorem}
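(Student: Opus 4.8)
The plan is to obtain \Cref{thm:talagrand} as a direct consequence of Talagrand's convex distance inequality — the actual theorem of \cite{talagrand1995concentration}, of which the stated form is a weak corollary. Recall that for $\bx \in \MZ^n$ and $\cE \subseteq \MZ^n$ the convex distance is $d_T(\bx, \cE) := \sup\{\min_{\by \in \cE} \sum_{i \in [n]} \alpha_i \ind[x_i \ne y_i] : \alpha \in \R^n_{\ge 0},\ \|\alpha\|_2 \le 1\}$, and that Talagrand's inequality asserts $\E_{\bx \sim \cD^n}[\exp(d_T(\bx, \cE)^2/4)] \le 1/\Pr_{\bx \sim \cD^n}[\bx \in \cE]$ for every product measure.

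The first step is the elementary observation that $d_T$ dominates a scaled Hamming distance: plugging the feasible vector $\alpha = (1/\sqrt{n}, \dots, 1/\sqrt{n})$ into the supremum shows $d_T(\bx, \cE) \ge \tfrac{1}{\sqrt{n}} \min_{\by \in \cE} \|\bx - \by\|_0 = \tfrac{1}{\sqrt{n}} d_H(\bx, \cE)$, where $d_H(\bx, \cE)$ denotes the Hamming distance from $\bx$ to $\cE$. Since $\bx \notin \cE_{\le t}$ is exactly the event $d_H(\bx, \cE) > t$, combining this with Talagrand's inequality and the hypothesis $\Pr[\bx \in \cE] \ge 1/2$ gives, by Markov's inequality applied to the nonnegative random variable $\exp(d_H(\bx, \cE)^2/(4n))$,
\[
\Pr_{\bx \sim \cD^n}[\bx \notin \cE_{\le t}] = \Pr_{\bx \sim \cD^n}\!\left[\exp\!\left(\tfrac{d_H(\bx,\cE)^2}{4n}\right) > \exp\!\left(\tfrac{t^2}{4n}\right)\right] \le \frac{\exp(-t^2/(4n))}{\Pr[\bx \in \cE]} \le 2\exp\!\left(-\tfrac{t^2}{4n}\right),
\]
which is the asserted conclusion up to the value of the leading constant.

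I do not expect any genuine obstacle here; the only cosmetic point is that this derivation yields prefactor $2$ rather than the $1/2$ written in the statement (and the bound is in any case vacuous for very small $t$). Since \Cref{thm:talagrand} is only ever invoked with $t = \Theta(\sqrt{n\log(1/(\beta\delta))})$, where the stated and derived bounds differ by an absolute constant that is absorbed into the $O(\cdot)$ in \Cref{thm:sam-perf-gen}, this discrepancy is immaterial. Alternatively, one gets a self-contained proof (with exponent $\Omega(t^2/n)$ and any constant prefactor) by applying McDiarmid's bounded-differences inequality to the $1$-Lipschitz map $\bx \mapsto d_H(\bx, \cE)$: the hypothesis $\Pr[d_H(\bx,\cE) = 0] \ge 1/2$ together with the lower tail forces $\E[d_H(\bx,\cE)] = O(\sqrt{n})$, and the upper tail then controls $\Pr[d_H(\bx,\cE) > t]$.
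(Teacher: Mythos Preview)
The paper does not prove \Cref{thm:talagrand}; it is quoted as a black box from \cite{talagrand1995concentration} (the footnote even remarks that the stated form is a substantially simplified version of what appears there). Your derivation from Talagrand's convex distance inequality is a standard and correct way to recover the Hamming-distance corollary, and your observation about the prefactor ($2$ versus $0.5$) is accurate and, as you note, harmless for the only downstream use in the proof of \Cref{thm:sam-perf-gen}. The alternative route via McDiarmid applied to $\bx \mapsto d_H(\bx,\cE)$ also works and is arguably more self-contained for readers not already familiar with the convex distance.
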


\begin{proof}[Proof of \Cref{thm:sam-perf-gen}]
We consider two cases based on whether $\beta \geq 2^{-n}$ or not. Let us start with the case that $\beta \geq 2^{-n}$.
From Lemma~\ref{lem:low-prob}, for $\eps'' = O(\eps\sqrt{n \log(\nicefrac{1}{\delta})})$, we have $$\E_{\bx, \bx' \sim \cD^n}[\hsd{\eps''}{\bA(\bx)}{\bA(\bx')}] \leq O(n\delta) =: \delta''.$$

Let $\cE \subseteq \MZ^n \times \MZ^n$ denote the set of $(\tbx, \tbx')$ such that $\bA(\tbx) \approx_{\eps'', 4\delta''} \bA(\tbx')$.
By Markov's inequality, we have that $\Pr_{\bx, \bx' \sim \cD^n}[\hsd{\eps''}{\bA(\tbx)}{\bA(\tbx')} \leq 4\delta''] \geq 3/4$ and $\Pr_{\bx, \bx' \sim \cD^n}[\hsd{\eps''}{\bA(\tbx')}{\bA(\tbx)} \leq 4\delta''] \geq 3/4$ and hence $\Pr_{\bx, \bx' \sim \cD^n}[(\bx, \bx') \in \cE] \geq 1/2$. By \Cref{thm:talagrand}, for $t = O(\sqrt{n\log(1/\beta)})$, we have
\begin{align*}
\Pr_{\bx, \bx' \sim \cD^n}[(\bx, \bx') \in \cE_{\leq t}] \geq 1 - \beta.
\end{align*}
Next, consider any $(\bx, \bx') \in \cE_{\leq t}$. Since there exists  $(\tbx, \tbx') \in \cE$ such that $\|\bx - \tbx\|_0, \|\bx' - \tbx'\|_0 \leq t$, we may apply \Cref{lem:group-privacy} to conclude that
$\hsd{\eps t}{\bA(\bx)}{\bA(\tbx)}, \hsd{\eps t}{\bA(\tbx')}{\bA(\bx')} \leq O(t \delta)$.
Furthermore, since $\hsd{\eps''}{\bA(\tbx)}{\bA(\tbx')} \leq 4\delta''$, we may apply \Cref{lem:wcomp} to conclude that $\hsd{\eps'}{\bA(\bx)}{\bA(\bx')} \leq \delta'$ where $\eps' = \eps'' + 2t\eps = O(\eps\sqrt{n \log(\nicefrac{1}{\beta\delta})})$ and $\delta' = O(\delta'' + t\delta) = O(n \delta)$.
Similarly, we also have $\hsd{\eps'}{\bA(\bx'}{\bA(\bx)} \leq \delta'$. In other words, $\bA(\bx) \approx_{\eps', \delta'} \bA(\bx')$. Combining this and the above inequality implies that $\bA$ is $(\beta, \eps', \delta')$-sample perfectly generalizing as desired.

For the case $\beta < 2^{-n}$, we may immediately apply group privacy (\Cref{lem:group-privacy}). Since any $\bx, \bx' \sim \cD^n$ satisfies $\|\bx - \bx'\|_0 \leq n$, we have $\bA(\bx) \approx_{\eps', \delta'} \bA(\bx')$ for $\eps' = n\eps$ and $\delta' = O(n\delta)$.
This implies that $\bA$ is $(\beta, \eps', \delta')$-sample perfectly generalizing (in fact, $(0, \eps', \delta')$-sample perfectly generalizing), which concludes our proof.
\end{proof}

\subsection{From Sample Perfect Generalization to User-Level DP}

Next, we will use the sample perfect generalization result from the previous subsection to show that our algorithm satisfies a certain definition of a local version of user-level DP. We then finally turn this intuition into an algorithm and prove \Cref{thm:apx-dp-main}.

\subsubsection{Achieving Local-Deletion DP}
\label{sec:lddp}

We start by defining \emph{local-deletion DP} (for user-level DP). As alluded to earlier, this definition only considers $\bx_{-i}$ for different $i$'s. We also define the multi-deletion version where we consider $\bx_{-S}$ for all subsets $S$ of small size.

\begin{defn}[Local-Deletion DP]\label{def:deletion-indist}
An algorithm $\bA$ is \emph{$(\eps, \delta)$-local-deletion DP} (abbreviated \emph{LDDP}) \emph{at input $\bx$} if $\bA(\bx_{-i}) \approx_{\eps, \delta} \bA(\bx_{-i'})$ for all $i, i' \in [n]$. Furthermore, for $r \in \BN$, an algorithm $\bA$ is \emph{$(r, \eps, \delta)$-local-deletion DP at $\bx$} if $\bA(\bx_{-S}) \approx_{\eps, \delta} \bA(\bx_{-S'})$ for all $S, S' \subseteq [n]$ such that $|S| = |S'| = r$.
\end{defn}

Below we show that, with high probability (over the input $\bx$), any item-level DP algorithm satisfies LDDP with the privacy loss increased by roughly $r \sqrt{m}$. The proof uses the crucial observation that relates sample perfect generalization to user-level DP.

\begin{theorem} \label{thm:deletion-indist}
Let $n, r \in \BN$ with $r \leq n$, and let $\bA : \cX^{(n - r)m} \to \MO$ be any $(\eps', \delta')$-item-level DP algorithm.
Further, assume that $\eps', \delta', \eps' r \sqrt{m \log(\nicefrac{n}{\beta\delta'})} > 0$ are sufficiently small. Then, for any distribution $\cD$,
\begin{align*}
\Pr_{\bx \sim \cD^{nm}}[\bA \text{ is } (r, \eps, \delta)\text{-LDDP at } \bx] \geq 1 - \beta,
\end{align*}
where $\eps = O(\eps' r \sqrt{m \log(\nicefrac{n}{\beta \delta'})})$ and $\delta = O(r m\delta')$.
\end{theorem}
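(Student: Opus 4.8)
The plan is to reduce Theorem~\ref{thm:deletion-indist} to Theorem~\ref{thm:sam-perf-gen} by viewing the ``fix all users outside a small set'' operation as producing an algorithm that takes only a few users' worth of data as input. Concretely, fix a distribution $\cD$, and for a subset $T \subseteq [n]$ of size $2r$ write $\bx_T$ for the $2rm$ samples belonging to users in $T$. Conditioned on the data of users outside $T$, we want to compare $\bA(\bx_{-S})$ and $\bA(\bx_{-S'})$ for $S, S' \subseteq T$ with $|S| = |S'| = r$; it suffices to handle the case where $S$ and $S'$ are disjoint (a general pair of $r$-subsets sits inside a common $2r$-subset and can be split into two disjoint ones, losing only a factor of $2$ in $r$, or one uses the triangle-inequality Lemma~\ref{lem:wcomp} through an intermediate). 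So it is enough to fix disjoint $S, S'$ of size $r$, let $T = S \cup S'$, and consider the algorithm $\bA^{\ba}: \cX^{rm} \to \MO$ that, on an input of $r$ users' data, plugs that data into the $S$-slots, plugs the fixed data $\ba$ into the complement of $T$, and runs $\bA$; this $\bA^{\ba}$ is still $(\eps', \delta')$-item-level DP (it is a post-processing/restriction of $\bA$), now on $rm$ samples.

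The key observation is that $\bA(\bx_{-S})$ with $S$'s data resampled and $\bA(\bx_{-S'})$ with $S'$'s data resampled are exactly $\bA^{\ba}(\by)$ and $\bA^{\ba}(\by')$ for $\by, \by' \sim \cD^{rm}$ independent --- i.e., the sample-perfect-generalization experiment for $\bA^{\ba}$ with parameter $n \leftarrow rm$. Applying Theorem~\ref{thm:sam-perf-gen} to $\bA^{\ba}$ gives, for each fixed $\ba$, that with probability $\ge 1 - \beta_0$ over $\by, \by'$ we have $\bA^{\ba}(\by) \approx_{\eps_0, \delta_0} \bA^{\ba}(\by')$ with $\eps_0 = O(\eps' \sqrt{rm \log(1/(\beta_0 \delta'))})$ and $\delta_0 = O(rm\delta')$. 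Taking expectation over $\ba \sim \cD^{(n-2r)m}$ (or just invoking it pointwise), the same holds over the full draw $\bx \sim \cD^{nm}$: the pair $(\bx_{-S}\text{-resampled}, \bx_{-S'}\text{-resampled})$ is $\approx_{\eps_0,\delta_0}$-close except with probability $\beta_0$. Then union-bound over all $\binom{n}{r}^2 \le n^{2r}$ relevant pairs $(S, S')$, choosing $\beta_0 = \beta / n^{2r}$, so that $\log(1/\beta_0) = O(r \log n + \log(1/\beta))$, which is absorbed into the $\log(n/(\beta\delta'))$ factor; this gives $\eps = O(\eps' \sqrt{rm} \cdot \sqrt{r\log(n/(\beta\delta'))}) = O(\eps' r \sqrt{m \log(n/(\beta\delta'))})$ and $\delta = O(rm\delta')$, as claimed. (To pass from $S, S'$ both being compared to the canonical resampling to the statement ``$\bA(\bx_{-S}) \approx \bA(\bx_{-S'})$'' one chains two closeness statements via Lemma~\ref{lem:wcomp}, at the cost of constant factors in $\eps_0, \delta_0$; the non-disjoint case is handled by first moving to a disjoint pair inside a size-$2r$ superset, which is where the second factor of $r$ --- hidden in the $O(\cdot)$ --- and the swap $r \leftrightarrow 2r$ inside the $\sqrt{\cdot}$ come from, still giving $\eps = O(\eps' r\sqrt{m\log(n/(\beta\delta'))})$.)

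There is one subtlety worth spelling out: Theorem~\ref{thm:sam-perf-gen} is about comparing $\bA$ on two independent fresh samples, whereas LDDP at $\bx$ concerns the \emph{single} realized input $\bx$ with only one user block (or $r$ blocks) deleted/substituted. The bridge is that $\bx_{-S}$ for $\bx \sim \cD^{nm}$ has the same distribution as the data of the non-$S$ users together with fresh i.i.d.\ data in the $S$-slots, and this is precisely why sample perfect generalization --- a statement about \emph{resampling} a block --- translates into a deletion/substitution statement. One must be slightly careful that the event ``$\bA$ is $(r,\eps,\delta)$-LDDP at $\bx$'' is a property of $\bx$ alone and does not involve resampling; but since $\bA(\bx_{-S})$ only depends on the non-$S$ users' data, and we are proving closeness between $\bA(\bx_{-S})$ and $\bA(\bx_{-S'})$ where each depends on a disjoint-from-the-other block that is distributed i.i.d., the per-$\ba$ SPG statement does give exactly a high-probability-over-$\bx$ guarantee.

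I expect the main obstacle to be bookkeeping rather than a conceptual gap: getting the quantifiers right when we fix the ``outside'' data $\ba$ versus average over it, making sure the union bound over $(S,S')$ pairs is taken \emph{before} fixing $\ba$ so that $\beta_0$ is chosen uniformly, and correctly handling non-disjoint $S, S'$ (and the ``compare to canonical resampling then chain'' step) so that the exponent of $r$ and the content of the logarithm come out as stated. One also has to verify that the ``sufficiently small'' hypotheses of Theorem~\ref{thm:sam-perf-gen} (now with $n \leftarrow 2rm$ and $\beta \leftarrow \beta_0$) are implied by the hypothesis $\eps' r\sqrt{m\log(n/(\beta\delta'))}$ being sufficiently small that is assumed here; this is a routine check since $\eps_0 \lesssim \eps' r \sqrt{m \log(n/(\beta\delta'))}$ up to constants after the union-bound choice of $\beta_0$.
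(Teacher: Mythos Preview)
Your proposal is correct and follows essentially the same approach as the paper: fix the data outside $T = S \cup S'$, view what remains as an item-level DP algorithm on a small number of samples, apply Theorem~\ref{thm:sam-perf-gen} to that restricted algorithm, and union-bound over pairs $(S,S')$.

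Two small simplifications the paper makes are worth noting. First, the paper handles arbitrary (possibly overlapping) $S, S'$ directly, with no case split or triangle-inequality detour: setting $U = S' \smallsetminus S$ and $U' = S \smallsetminus S'$ (so $|U| = |U'| =: q \le r$), one has $\bA(\bx_{-S}) = \bA'_{\bx_{-T}}(\bx_U)$ and $\bA(\bx_{-S'}) = \bA'_{\bx_{-T}}(\bx_{U'})$, and $\bx_U, \bx_{U'}$ are independent $\cD^{qm}$ draws once $\bx_{-T}$ is fixed. So Theorem~\ref{thm:sam-perf-gen} applies directly with $n \leftarrow qm$, giving $\eps = O(\eps'\sqrt{qm\log(1/(\beta'\delta'))}) \le O(\eps' r\sqrt{m\log(n/(\beta\delta'))})$ without any reduction to the disjoint case. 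Second, and relatedly, no ``compare to a canonical resampling then chain'' step is needed: sample perfect generalization (Definition~\ref{defn:sample-perf-gen}) is already a two-sample statement $\bA(\by) \approx_{\eps,\delta} \bA(\by')$, so the comparison $\bA(\bx_{-S}) \approx \bA(\bx_{-S'})$ falls out immediately with no intermediate anchor. Your exposition also has a minor slip: the algorithm $\bA^{\ba}$ should plug the fresh $r$ users' data into the $T \smallsetminus S$ slots (not the ``$S$-slots''), since $\bA(\bx_{-S})$ uses the $n - r$ users outside $S$, namely $\bx_{-T}$ together with $\bx_{S'}$.
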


\begin{proof}
Let $\beta' = \nicefrac{\beta}{\binom{n}{r}^2}$. Consider any fixed $S, S' \subseteq [n]$ such that $|S| = |S'| = r$. Let $T = S \cup S', t = |T|, U = S' \smallsetminus S, U' = S \smallsetminus S', q = |U| = |U'|$, and let $\bA'_{\bx_{-T}}$ denote the algorithm defined by $\bA'_{\bx_{-T}}(\by) = \bA(\bx_{-T} \cup \by)$. 

Thus, for any fixed $\bx_{-T} \in \MZ^{(n - t)m}$, \Cref{thm:sam-perf-gen} implies that
\begin{align} \label{eq:fixed-other-samples}
\Pr_{\bx_U, \bx_{U'} \sim \cD^{q m} }\left[\bA'_{\bx_{-T}}(\bx_U) \napprox_{\eps, \delta} \bA'_{\bx_{-T}}(\bx_{U'})\right] \leq \beta',
\end{align}
where $\eps = O(\eps'\sqrt{qm \log(\nicefrac{1}{\beta'\delta'})}) \leq O(\eps' r \sqrt{m \log(\nicefrac{n}{\beta\delta'})})$ and $\delta = O(r m\delta')$. %

Notice that $\bA'_{\bx_{-T}}(\bx_U) = \bA(\bx_{-S})$ and $\bA'_{\bx_{-T}}(\bx_{U'}) = \bA(\bx_{-S'})$. Thus, we have
\begin{align*}
\Pr_{\bx \sim \cD^{nm}}[\bA(\bx_{-S}) \napprox_{\eps, \delta} \bA(\bx_{-S'})] &= \E_{\bx_{-T} \sim \cD^{(n - t)m}} \left[ \Pr_{\bx_U, \bx_{U'} \sim \cD^{q m} }\left[\bA'_{\bx_{-T}}(\bx_U) \napprox_{\eps, \delta} \bA'_{\bx_{-T}}(\bx_{U'})\right] \right] \\
&\overset{\eqref{eq:fixed-other-samples}}{\leq}  \E_{\bx_{-T} \sim \cD^{(n - t)m}}\left[\beta'\right] = \beta'.
\end{align*}
Hence, by taking a union bound over all possible $S, S' \subseteq [n]$ such that $|S| = |S'| = r$, we have $\Pr_{\bx \sim \cD^{nm}}[\bA \text{ is not } (r, \eps, \delta)\text{-LDDP at } \bx] \leq \beta$ as desired.
\end{proof}

\subsubsection{From LDDP to DP via Propose-Test-Release}
\label{sec:ptr}

Our user-level DP algorithm is present in \Cref{alg:stab}. Roughly speaking, we try to find a small subset $S$ such that $\bx_{-S}$ is LDDP with appropriate parameters. (Throughout, we assume that $n \geq 4\kappa$ where $\kappa$ is from \Cref{def:tdlap}.) The target size of $S$ is noised using the discrete Laplace distribution. As stated in \Cref{sec:overview}, the algorithm is very similar to that of \cite{user-level-icml23}, with the main difference being that (i) $\cX^{R_1}_{\stable}$ is defined based on LDDP instead of the local deletion sensitivity and (ii) our output is $\bA(\bx_{-T})$, whereas their output is the function value with Gaussian noise added. It is not hard to adapt their proof to show that our algorithm is $(\eps, \delta)$-user-level DP, as stated below. (Full proof is deferred to \Cref{app:privacy-analysis}.)

\begin{lemma}[Privacy Guarantee] \label{lem:approx-dp-privacy}
$\DelStab_{\eps, \delta, \bA}$ is $(\eps, \delta)$-user-level DP.
\end{lemma}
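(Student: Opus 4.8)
The algorithm $\DelStab_{\eps,\delta,\bA}$ follows the propose-test-release paradigm: on input $\bx$, it samples a noisy target size $\hat r$ from a shifted truncated discrete Laplace distribution, checks whether there is a subset $T$ of the appropriate size so that $\bx_{-T}$ lies in the "stable" region $\cX^{R_1}_{\stable}$ (the set of inputs at which $\bA$ is LDDP with the relevant parameters), and if so outputs $\bA(\bx_{-T})$, otherwise outputs $\perp$. I would prove $(\eps,\delta)$-user-level DP by the standard PTR two-part argument: (i) the distance-to-instability function is $1$-Lipschitz under user-level neighbors, so adding discrete Laplace noise to the threshold makes the "pass/fail" decision $(\eps/2,\delta/2)$-indistinguishable (this is where the choice $\kappa = 1+\lceil \ln(1/\delta)/\eps\rceil$ and the truncation to $\{0,\dots,2\kappa\}$ enter — truncation costs the $\delta/2$, and the geometric tails of the discrete Laplace give the $e^{\eps/2}$ ratio); and (ii) conditioned on passing with a particular noisy size $\hat r = R_1$, the released value $\bA(\bx_{-T})$ is $(\eps/2,\delta/2)$-indistinguishable between user-level neighbors $\bx \asymp \bx'$, because on the stable region the $(R_1,\eps_{\mathrm{loc}},\delta_{\mathrm{loc}})$-LDDP property lets us move between the subsets $T$ chosen for $\bx$ and for $\bx'$ — both obtained by deleting sets of the same size from inputs that agree except on one user — paying the per-comparison LDDP cost. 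Composing (i) and (ii) via the triangle inequality for hockey-stick divergence (\Cref{lem:wcomp}) gives $(\eps,\delta)$.

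More concretely, for step (i), define $f(\bx)$ to be the least $r$ such that some $T\subseteq[n]$, $|T|=r$, has $\bx_{-T}\in\cX^{R_1}_{\stable}$. I would check that $f$ changes by at most $1$ when one user's data is replaced (replacing a user can always be simulated by deleting that one additional index), i.e. $f$ has sensitivity $1$ under $\asymp$. Then the test "$\hat r \geq f(\bx)$" with $\hat r \sim \TDLap$ translates, via the standard discrete-Laplace-threshold computation, into the claim that the indicator of passing, together with the conditional distribution of the accepted size, is $(\eps/2,\delta/2)$-close across neighbors; the $\delta/2$ term is exactly the mass the truncation to $\{0,\dots,2\kappa\}$ shifts when the shift by $1$ pushes support off the truncated window, and $n \geq 4\kappa$ guarantees the relevant sizes stay within the legitimate range $[0,n]$.

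For step (ii), fix the noisy size $R_1$ and condition on the event that $\bx$ passes; let $T$ be the witnessing subset for $\bx$ and $T'$ the one the algorithm would pick for $\bx'$ (I would assume the algorithm picks canonically, e.g. lexicographically least). Since $\bx \asymp \bx'$ differ only in user $j$, both $\bx_{-T}$ and $\bx'_{-T'}$ are obtained by deleting size-$R_1$ (or size-$(R_1\pm 1)$) subsets from a common ground input, and the LDDP guarantee at these inputs — which is exactly what $\cX^{R_1}_{\stable}$ encodes, with parameters $\eps_{\mathrm{loc}},\delta_{\mathrm{loc}}$ chosen in the algorithm so that $\eps_{\mathrm{loc}} \le \eps/2$ and the $\delta$'s sum appropriately — gives $\bA(\bx_{-T}) \approx_{\eps/2,\delta/2} \bA(\bx'_{-T'})$ after possibly chaining through one intermediate deletion. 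Here one must be slightly careful that $\bx'$ also passes whenever $\bx$ does with enough slack; this is handled because $f(\bx')\le f(\bx)+1$ and the noisy threshold was chosen with a one-step margin built in.

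\textbf{Main obstacle.} The delicate point is the bookkeeping in step (ii): matching the subset $T$ chosen for $\bx$ with the subset $T'$ chosen for $\bx'$ when the two inputs differ on a user that may or may not lie in $T$. If $j\notin T$ then $\bx_{-T}=\bx'_{-T}$ up to that user's data still being present, so one needs $T'$ that also excludes the differing coordinate, or one chains through $\bx_{-(T\cup\{j\})}$ and invokes the multi-deletion $(R_1+1,\cdot,\cdot)$-LDDP clause of \Cref{def:deletion-indist} — this is precisely why \Cref{thm:deletion-indist} was stated for general $r$, not just $r=1$. Getting the parameters $\eps_{\mathrm{loc}}, \delta_{\mathrm{loc}}$ (which scale like $\eps' r\sqrt{m\log(\cdot)}$ and $rm\delta'$ from \Cref{thm:deletion-indist}) to compose with the PTR noise budget and land at a clean $(\eps,\delta)$ is the computation I would defer, but it is routine once the case analysis on whether $j \in T$ is organized correctly.
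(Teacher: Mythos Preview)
Your PTR-style plan is essentially the paper's approach: the proof in the appendix directly couples $\bx$ at noise value $R_1=r_1$ with $\bx'$ at $R_1=r_1+1$ (your ``one-step margin''), uses the observation that $\cX^{r_1}_{\stable}(\bx')\ne\emptyset \Rightarrow \cX^{r_1+1}_{\stable}(\bx)\ne\emptyset$ (your sensitivity-$1$ claim for $f$), and then chains through an intermediate set $J^*\subseteq (\oS\cap\oS')\setminus\{i\}$ using two applications of LDDP---exactly your ``chaining through one intermediate deletion.''

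Two points where your write-up diverges from the algorithm as stated. First, you conflate the stable witness $S$ (of random size $R_1$) with the output deletion set $T$ (always of size $4\kappa$): the algorithm tests for $S\in\cX^{R_1}_{\stable}$ but then outputs $\bA(\bx_{-T})$ for a uniformly random $T\supseteq S$ with $|T|=4\kappa$. The $(4\kappa-R_1,\oeps,\odelta)$-LDDP at $\bx_{-S}$ is precisely what makes \emph{all} such $T$ give indistinguishable outputs, which is why no canonical or lexicographic choice is needed and an arbitrary coupling of $(S,S')$ across neighbors suffices. Second, your budget split $(\eps/2,\delta/2)+(\eps/2,\delta/2)$ undercounts: there is one $\TDLap$ shift and \emph{two} LDDP applications (one on each side of the intermediate $J^*$), so the paper sets $\oeps=\eps/3$ and $\odelta=\delta/(e^{2\oeps}+e^{\oeps}+2)$ to land exactly at $(\eps,\delta)$. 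Once you correct these two bookkeeping items, your argument and the paper's coincide.
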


\begin{algorithm}[ht]
\caption{$\DelStab_{\eps, \delta, \bA}(\bx)$}
\label{alg:stab}
\begin{algorithmic}[1]
\STATE \textbf{Input: } Dataset $\bx \in \MZ^{nm}$ 
\STATE \textbf{Parameters: } Privacy parameters $\eps, \delta$, Algorithm $\bA : \cX^{(n - 4\kappa)m} \to \MO$
\STATE $\oeps \gets \frac{\eps}{3}$, $\odelta \gets \frac{\delta}{e^{2\oeps} + e^{\eps} + 2}$, $\kappa \gets \kappa(\oeps, \odelta)$.
\STATE Sample $R_1 \sim \TDLap(\oeps, \odelta)$ \hfill \COMMENT{\small See \Cref{def:tdlap}}
\STATE $\cX^{R_1}_{\stable} \gets \set{S \subseteq [n] : |S| = R_1, \bA \text{ is } (4\kappa - R_1, \oeps, \odelta)\text{-LDDP at } \bx_{-S}}$ \hfill \COMMENT{\small See \Cref{def:deletion-indist}}\label{line:stable-set-def} %
\IF{$|\cX^{R_1}_{\stable}| = \emptyset$}
\RETURN $\perp$
\ENDIF
\STATE Choose $S \in \cX^{R_1}_{\stable}$ uniformly at random \label{line:choose-stable-set}
\STATE Choose $T \supseteq S$ of size $4\kappa$ uniformly at random \label{line:choose-subset}
\RETURN $\bA(\bx_{-T})$
\end{algorithmic}
\end{algorithm}

Next, we prove the utility guarantee under the assumption that $\bA$ is item-level DP:

\begin{lemma}[Utility Guarantee] \label{lem:approx-dp-util}
Let $\eps, \delta > 0$ be sufficiently small.
Suppose that $\bA$ is $(\eps', \delta')$-item-level DP such that $\eps' = \Theta\left(\frac{\eps}{\kappa\sqrt{m \log(nm/\delta)}}\right)$ and $\delta' = \Theta\left(\frac{\delta}{\kappa m}\right)$. If $\bA$ is $\gamma$-useful for any task $\fT$ with $(n - 4\kappa)m$ samples, then $\DelStab_{\eps, \delta, \bA}$ is $(\gamma - o(1))$-useful for $\fT$ (with $nm$ samples).
\end{lemma}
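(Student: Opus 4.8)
The plan is to show that with high probability over $\bx \sim \cD^{nm}$, the algorithm reaches the final line and returns $\bA(\bx_{-T})$ for a set $T$ of size $4\kappa$, and moreover that the resulting distribution over outputs is $\gamma - o(1)$-useful. The key realization is that $\DelStab$ fails to return a ``good'' answer only in two ways: (i) the noised target size $R_1$ is far from $0$, or (ii) the stable set $\cX^{R_1}_{\stable}$ is empty. Both are low-probability events, and conditioned on neither occurring, the output $\bA(\bx_{-T})$ is distributed as $\bA$ run on $(n - 4\kappa)m$ i.i.d.\ samples from $\cD$ (since $\bx_{-T}$ with $|T| = 4\kappa$ consists of $(n-4\kappa)m$ i.i.d.\ draws), which is $\gamma$-useful by assumption. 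We then combine the failure probabilities via a union bound.

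First I would handle event (i): since $R_1 \sim \TDLap(\oeps, \odelta)$ is supported on $\{0, \dots, 2\kappa\}$ with mass proportional to $\exp(-\oeps|x - \kappa|)$, the probability that $R_1 > \kappa/2$ (say) is at most $\exp(-\Omega(\oeps \kappa)) = \exp(-\Omega(\log(1/\odelta))) = \poly(\odelta)$, which is $o(1)$ given $\delta$ sufficiently small; in fact we just need $R_1 \le 4\kappa$ always (automatic) and $R_1$ not too large so that $\cX^{R_1}_{\stable}$ can plausibly be nonempty. Second, and this is the crux, I would handle event (ii) by applying \Cref{thm:deletion-indist}. The point is that for the relevant value $R_1 = \rho$, we want to exhibit \emph{some} $S$ with $|S| = \rho$ such that $\bA$ restricted to $\bx_{-S}$ is $(4\kappa - \rho, \oeps, \odelta)$-LDDP at $\bx_{-S}$. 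Note $\bA$ here acts on $(n - 4\kappa)m$ samples, and $\bx_{-S}$ has $(n-\rho)m$ samples with $n - \rho \ge n - 4\kappa$; the LDDP condition at $\bx_{-S}$ with deletion parameter $4\kappa - \rho$ compares $\bA((\bx_{-S})_{-S''})$ over $|S''| = 4\kappa - \rho$, i.e.\ exactly the setting of \Cref{thm:deletion-indist} applied with ``$n$'' $:= n - \rho$ and ``$r$'' $:= 4\kappa - \rho \le 4\kappa$. Plugging in $\eps' = \Theta(\eps / (\kappa \sqrt{m \log(nm/\delta)}))$ and $\delta' = \Theta(\delta/(\kappa m))$, \Cref{thm:deletion-indist} gives that a \emph{uniformly random} $\bx_{-S}$ is $(4\kappa-\rho, \oeps, \odelta)$-LDDP except with probability $o(1)$, provided $\eps' \cdot 4\kappa \cdot \sqrt{m \log(n / (\beta \odelta'))} = O(\oeps)$ for the target $\oeps = \eps/3$ and $4\kappa m \delta' = O(\odelta)$ — which is exactly why the constants in $\eps', \delta'$ were chosen this way. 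Actually I must be slightly careful: \Cref{thm:deletion-indist} says a random $\bx_{-S}$ (equivalently, the full random $\bx$) makes $\bA$ LDDP; so with probability $1 - o(1)$ over $\bx$, taking $S = \emptyset$ — or rather, $S$ matching the sampled $R_1$ — the empty-deletion witness works. More precisely, conditioning on $R_1 = \rho$, I would take the first $\rho$ users as $S$; then $\bx_{-S}$ is still $(n-\rho)m$ i.i.d.\ samples, and \Cref{thm:deletion-indist} (with the shifted user count) shows it is LDDP with the right parameters except with probability $\beta = o(1)$. Hence $\cX^{R_1}_{\stable} \ne \emptyset$ w.h.p.

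Finally, conditioned on both good events, the algorithm picks some $S \in \cX^{R_1}_{\stable}$, extends it to $T$ of size $4\kappa$, and outputs $\bA(\bx_{-T})$. Since $T$ has size exactly $4\kappa$ and $\bx$ is i.i.d., $\bx_{-T} \sim \cD^{(n-4\kappa)m}$, so $\Pr[\bA(\bx_{-T}) \in \Psi_{\fT(\cD)}] \ge \gamma$ by hypothesis. Integrating over the randomness of $R_1$, $S$, $T$ and the conditioning, the total useful probability is at least $\gamma - o(1)$. The main obstacle is bookkeeping the parameters: verifying that the choices $\eps' = \Theta(\eps/(\kappa\sqrt{m\log(nm/\delta)}))$ and $\delta' = \Theta(\delta/(\kappa m))$, together with $\kappa = \kappa(\oeps,\odelta) = \tilde\Theta(\log(1/\delta)/\eps)$, make the hypotheses of \Cref{thm:deletion-indist} hold for the target output parameters $(\oeps, \odelta)$ with $\beta = o(1)$ (indeed $\beta$ can be taken as small as, say, $\delta$, absorbing polylog factors into $\tilde\Theta$), and checking the ``sufficiently small'' side conditions $\eps' r \sqrt{m\log(n/(\beta\delta'))}$ small. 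None of this is conceptually hard, but it is where the specific exponents in the theorem statement get pinned down.
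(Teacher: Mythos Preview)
There is a genuine gap in the final step. You argue that, conditioned on the good events, the algorithm picks some $S \in \cX^{R_1}_{\stable}$, extends to $T$ of size $4\kappa$, and then ``since $T$ has size exactly $4\kappa$ and $\bx$ is i.i.d., $\bx_{-T} \sim \cD^{(n-4\kappa)m}$.'' This is not correct: the set $\cX^{R_1}_{\stable}$ depends on $\bx$, so the sampled $S$ (and hence $T$) is data-dependent. Once you condition on $\cX^{R_1}_{\stable}$ being nonempty and sample $T$ through it, $\bx_{-T}$ need not be distributed as $(n-4\kappa)m$ fresh i.i.d.\ draws, and you cannot directly invoke $\gamma$-usefulness of $\bA$. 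Your earlier remark that ``the first $\rho$ users'' could serve as $S$ establishes nonemptiness but does not control what the algorithm actually samples.

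The paper's proof sidesteps this by a coupling argument: it compares $\DelStab$ to a reference algorithm $\bA_0$ that replaces Line~\ref{line:stable-set-def} with $\cX^{R_1}_{\stable} \gets \binom{[n]}{R_1}$. In $\bA_0$ the set $T$ is uniformly random and independent of $\bx$, so $\bx_{-T}$ genuinely has law $\cD^{(n-4\kappa)m}$ and $\bA_0$ is $\gamma$-useful. The two algorithms coincide whenever $\cX^{R_1}_{\stable}(\bx) = \binom{[n]}{R_1}$ (the \emph{full} set, not merely nonempty). That full-set event is implied by $\bA$ being $(4\kappa, \oeps, \odelta)$-LDDP at $\bx$ itself, which is exactly what a single application of \Cref{thm:deletion-indist} with $r = 4\kappa$ and $\beta = o(1)$ gives under the stated choice of $\eps', \delta'$. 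So the fix is to strengthen your event (ii) from ``$\cX^{R_1}_{\stable} \ne \emptyset$'' to ``$\cX^{R_1}_{\stable} = \binom{[n]}{R_1}$'' and argue via the coupling; this also makes your event (i) about the size of $R_1$ unnecessary, since any $R_1 \le 2\kappa$ is handled uniformly.
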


\begin{proof}
Consider another algorithm $\bA_0$ defined in the same way as $\DelStab_{\eps,\delta,\bA}$ except that on Line~\ref{line:stable-set-def}, we simply let $\cX^{R_1}_{\stable} \gets \binom{[n]}{R_1}$. Note that $\bA_0(\bx)$ is equivalent to randomly selecting $n - 4\kappa$ users and running $\bA$ on it. 
Therefore, we have $\Pr_{\bx \sim \cD^{nm}}[\bA_0(\bx) \in \Psi_{\fT}] = \Pr_{\bx \sim \cD^{(n - 4\kappa)m}}[\bA(\bx) \in \Psi_{\fT}] \geq \gamma$

Meanwhile, $\bA_0$ and $\DelStab_{\eps,\delta,\bA}$ are exactly the same as long as $\cX^{R_1}_{\stable}(\bx) = \binom{[n]}{R_1}$ in the latter. In other words, we have
\begin{align*}
&\textstyle \Pr\limits_{\bx \sim \cD^{nm}}[\DelStab_{\eps,\delta,\bA}(\bx) \in \Psi_{\fT}] \quad \geq \Pr\limits_{\bx \sim \cD^{nm}}[\bA_0(\bx) \in \Psi_{\fT}] - \Pr\limits_{\bx \sim \cD^{nm}}\left[\cX^{R_1}_{\stable}(\bx) \ne \binom{[n]}{R_1}\right] \\
& \geq \gamma - \Pr_{\bx \sim \cD^{nm}}[\bA \text{ is not } (4\kappa, \oeps, \odelta)\text{-LDDP at } \bx] 
\qquad \geq \gamma - o(1),
\end{align*}
where in the last inequality we apply \Cref{thm:deletion-indist} with, e.g.,  $\beta = \nicefrac{0.1}{nm}$.
\end{proof}

\paragraph{Putting Things Together: Proof of \Cref{thm:apx-dp-main}.} \Cref{thm:apx-dp-main} is now ``almost'' an immediate consequence of \Cref{lem:approx-dp-privacy,lem:approx-dp-util} (using $\kappa = O\left(\log(1/\delta)/\eps\right)$). The only challenge here is that $\eps'$ required in \Cref{lem:approx-dp-util} depends polylogarithmically on $n$. Nonetheless, we show below via a simple subsampling argument that this only adds polylogarithmic overhead to the user/sample complexity.

\begin{proof}[Proof of \Cref{thm:apx-dp-main}]
Let $\eps' = \eps'(n), \delta'$ be as in \Cref{lem:approx-dp-util}. Furthermore, let $\eps'' = \frac{\eps^2}{\log(1/\delta) \sqrt{m \log(m/\delta)}}$. Notice that $\eps'(n)/\eps'' \leq (\log n)^{O(1)}$. Let $N \in \BN$ be the  smallest number such that $N$ is divisible by $m$ and
\begin{align*}\textstyle
\left({e^{\eps''} - 1}\right) \cdot \frac{n_1(\eps'', \delta')}{N} \leq {e^{\eps'(N/m + 4\kappa)} - 1},
\end{align*}
Observe that $N \leq \tO\left(m \kappa + n_1(\eps'', \delta')\right)$.

From the definition of $n_1$, there exists an $(\eps'', \delta')$-item-level DP algorithm $\bA_0$ that is $\gamma$-useful for $\fT$ with sample complexity $n_1(\eps'', \delta')$. We start by constructing an algorithm $\bA$ on $N$ samples as follows: randomly sample (without replacement) $n_1(\eps'', \delta')$ out of $N$ items, then run $\bA_0$ on this subsample. By amplification-by-subsampling (\Cref{thm:amp-by-subsampling}), we have that $\bA$ is $(\eps'(n), \delta')$-item-level DP for $n = N/m + 4\kappa$. Thus, by \Cref{lem:approx-dp-privacy} and \Cref{lem:approx-dp-util}, we have that $\DelStab_{\eps, \delta, \bA}$ is $(\eps, \delta)$-DP and $(\gamma - o(1))$-useful for $\fT$. Its user complexity is
\begin{align*}
n = N/m + 4\kappa &= \tO\left(\frac{\log(1/\delta)}{\eps} + \frac{1}{m} \cdot n_1\left(\frac{\eps^2}{\log(1/\delta) \sqrt{m \log(m/\delta)}}, \delta'\right)\right). &\qedhere
\end{align*}
\end{proof}%

\section{Pure-DP: PAC Learning}
\label{sec:pure-dp}

In this section we prove \Cref{thm:pac-pure-main}.  Let $\bz$ denote the input and for $i \in [n]$, let $\bz_i = ((x_{i, 1}, y_{i, 1}), \dots, (x_{i, m}, y_{i, m}))$  denote the input to the $i$th user. A {\em concept class} is a set of functions of the form $\cX \to \{0, 1\}$. For $T \subseteq \cX \times \{0, 1\}$, we say that it is \emph{realizable} by $c: \cX \to \{0, 1\}$ iff $c(x) = y$ for all $(x, y) \in T$. Recall the definition of $\prdim$:

\begin{defn}[\cite{BeimelNS19}] \label{def:prdim}
A distribution $\MP$ on concept classes is an \emph{$(\alpha, \beta)$-probabilistic representation} (abbreviated as \emph{$(\alpha, \beta)$-PR}) of a concept class $\MC$ if for every $c \in \MC$ and for every distribution $\cD_{\MX}$ on $\MX$, with probability $1 - \beta$ over $\cH \sim \MP$, there exists $h \in \cH$ such that $\Pr_{x \sim \MD_{\MX}}[c(x) \ne h(x)] \leq \alpha$. 

The \emph{$(\alpha, \beta)$-PR dimension} of $\MC$ is defined as $\prdim_{\alpha, \beta}(\MC) := \min_{(\alpha, \beta)\text{-PR } \MP \text{ of } \MC} \size(\MP)$,
where $\size(\MP) := \max_{\MH \in \supp(\MP)} \log |\MH|$. Furthermore, let the \emph{probabilistic representation dimension} of $\MC$ be $\prdim(\MC) := \prdim_{1/4,1/4}(\MC)$.
\end{defn}
\begin{lemma}[\cite{BeimelNS19}] \label{lem:prdim-acc-boost}
For any $\MC$ and $\alpha \in (0, 1/2)$, $\prdim_{\alpha, 1/4}(\MC) \leq \tO\left(\prdim(\MC) \cdot \log(1/\alpha)\right)$.
\end{lemma}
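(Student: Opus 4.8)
The plan is to treat a $(1/4,1/4)$-probabilistic representation of $\MC$ as a \emph{weak learner} and run a standard boosting argument entirely at the level of probabilistic representations, as in \cite{BeimelNS19}. Fix a minimal $(1/4,1/4)$-PR $\MP$ of $\MC$, so that $\size(\MP) = d := \prdim(\MC)$; also fix an arbitrary target $c \in \MC$ and distribution $\cD_{\MX}$ on $\MX$. The crucial point is that the ``for every distribution $\cD_{\MX}$'' quantifier in \Cref{def:prdim} is exactly what a boosting scheme requires of its weak learner: for \emph{any} reweighting $\cD'$ of $\cD_{\MX}$, a sample $\cH \sim \MP$ contains, with probability $\ge 3/4$, some $h$ with $\Pr_{x \sim \cD'}[c(x) \ne h(x)] \le 1/4$, i.e.\ a hypothesis with constant advantage $\gamma = 1/4$ over random guessing on $\cD'$.

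First I would amplify the confidence: let $\MP_k$ be the distribution that draws $\cH_1,\dots,\cH_k \sim \MP$ i.i.d.\ and outputs $\cH_1 \cup \dots \cup \cH_k$; then for every $c$ and every $\cD'$, the union contains a $(1/4)$-close hypothesis with probability $\ge 1 - 4^{-k}$, while $\size(\MP_k) \le d + \log k$ since $|\cH_1 \cup \dots \cup \cH_k| \le k\,2^d$. Next I would run a standard ``boost-by-majority'' procedure for $t = O(\log(1/\alpha))$ rounds with edge $\gamma = 1/4$: given hypotheses $h_1,\dots,h_{j-1}$ already chosen, form the reweighted distribution $\cD^{(j)}$ prescribed by the boosting scheme (a reweighting of $\cD_{\MX}$ determined by $h_1,\dots,h_{j-1}$), draw $\cG^{(j)} \sim \MP_k$, and pick $h_j \in \cG^{(j)}$ with error at most $1/4$ on $\cD^{(j)}$. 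Choosing $k = O(\log t) = O(\log\log(1/\alpha))$ ensures, by a union bound, that all $t$ rounds succeed simultaneously with probability $\ge 3/4$, and conditioned on this event the boosting guarantee yields $\Pr_{x \sim \cD_{\MX}}[\mathrm{Maj}(h_1,\dots,h_t)(x) \ne c(x)] \le \alpha$. I would then define $\MP'$ to be the distribution that samples $\cG^{(1)},\dots,\cG^{(t)} \sim \MP_k$ i.i.d.\ (equivalently, $tk$ i.i.d.\ draws from $\MP$) and outputs the concept class $\cH' := \{\mathrm{Maj}(g_1,\dots,g_t) : g_j \in \cG^{(j)}\text{ for all }j \in [t]\}$. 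Since the particular hypothesis $\mathrm{Maj}(h_1,\dots,h_t)$ produced above always lies in $\cH'$, the distribution $\MP'$ is an $(\alpha,1/4)$-PR of $\MC$; and $\size(\MP') \le \log \prod_{j \in [t]} |\cG^{(j)}| \le t\,(d + \log k) = O(\log(1/\alpha)) \cdot \big(\prdim(\MC) + O(\log\log\log(1/\alpha))\big) = \tO\big(\prdim(\MC)\cdot\log(1/\alpha)\big)$, as claimed (using $\prdim(\MC) \ge 1$, the case $|\MC| \le 1$ being trivial).

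The part that needs the most care is the interface between boosting and the PR definition, rather than any single calculation. Two points: (i) boosting selects the distributions $\cD^{(j)}$ \emph{adaptively} from the previously chosen hypotheses, so one must use that the PR property of \Cref{def:prdim} holds for \emph{every} distribution on $\MX$, including these data-dependent reweightings, and that it is stated in a ``distribution-access'' form so no empirical approximation of $\cD^{(j)}$ is needed; and (ii) the per-round failure probabilities must be union-bounded over the $t$ rounds, which is the sole reason for the mild confidence-amplification step, and one should check that its cost ($\log k$ per round) is only a lower-order additive term absorbed by $\tO(\cdot)$. The accuracy-vs-rounds bound $t = O(\gamma^{-2}\log(1/\alpha)) = O(\log(1/\alpha))$ for constant edge $\gamma$ is the classical boosting theorem and can be quoted directly.
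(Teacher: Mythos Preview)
The paper does not prove this lemma; it merely cites it from \cite{BeimelNS19} and uses it as a black box in the proof of \Cref{thm:pac-pure-main}. Your boosting argument is exactly the approach of \cite{BeimelNS19} (treating the $(1/4,1/4)$-PR as a distribution-free weak learner, amplifying confidence by independent draws, running $O(\log(1/\alpha))$ rounds of boost-by-majority, and taking the class of all possible majorities) and is correct as written; the two points you flag about adaptive reweighting and the union bound over rounds are precisely the ones that need attention, and you handle both appropriately.
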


\begin{proof}[Proof of \Cref{thm:pac-pure-main}]
The lower bound $\tOmega\left(d/\eps\right)$ was shown in~\cite{GhaziKM21} (for any value of $m$). The lower bound $\tOmega\left(\frac{d}{\eps\alpha m}\right)$ follows from the lower bound of $\tOmega\left(\frac{d}{\eps\alpha}\right)$ on the sample complexity in the item-level setting by~\cite{BeimelNS19} and the observation that any $(\eps, \delta)$-user-level DP algorithm is also $(\eps, \delta)$-item-level DP with the same sample complexity. Thus, we may focus on the upper bound.

It suffices to only prove the upper bound $\tO\left(\frac{d}{\eps} + \frac{d}{\eps \alpha m}\right)$ assuming\footnote{If $m$ is larger, then we can simply disregard the extra examples (as the first term dominates the bound).} $m \leq 1/\alpha$.  %
Let $\MP$ denote a $(0.01\alpha, 1/4)$-PR of $\MC$; by \Cref{lem:prdim-acc-boost}, there exists such $\MP$ with $\size(\MP) \leq \tO\left(d \cdot \log(1/\alpha)\right)$. Assume that the number $n$ of users is at least $\frac{\kappa \cdot \size(\MP)}{\eps \alpha m} = \tO\left(\frac{d}{\eps \alpha m}\right)$,
where $\kappa$ is a sufficiently large constant. Our algorithm works as follows: Sample $\MH \sim \MP$ and then run the $\eps$-DP EM (\Cref{lem:pure-selection}) on candidate set $\cH$ with the scoring function $\scr_h(\bz) := \sum_{i \in [n]} \ind[\bz_i \text{ is not realizable by } h].$

Observe that the sensitivity of $\scr_h$ is one. Indeed, this is where our ``clipping'' has been applied: a standard item-level step would sum up the error across all the samples, resulting in the sensitivity as large as $m$, while our scoring function above ``clips'' the contribution of each user to just one.

A simple concentration argument (deferred to \Cref{app:pure-dp-missing}) gives us the following:

\begin{lemma} \label{lem:concen-pac}
With probability 0.9 (over the input), the following hold for all $h \in \MH$:
\begin{enumerate}[(i)]
\item If $\Err_{\cD}(h) \leq 0.01\alpha$, then $\scr_h(\bz) \leq 0.05 \alpha nm$.
\item If $\Err_{\cD}(h) > \alpha$, then $\scr_h(\bz) > 0.1 \alpha nm$.
\end{enumerate}
\end{lemma}

We now assume that the event in \Cref{lem:concen-pac} holds and finish the proof. Since a $\MP$ is a $(0.01\alpha, 1/4)$-PR of $\MC$, with probability $3/4$, we have that there exists $h^* \in \MH$ such that $\Err_{\cD}(h^*) \leq 0.01\alpha$. From \Cref{lem:concen-pac}(i), we have $\scr_{h^*}(\bz) \leq 0.05  \alpha n m$. Thus, by \Cref{lem:pure-selection}, with probability $2/3$, the algorithm outputs $h$ that satisfies $\scr_h(\bz) \leq 0.05 \alpha n m + O\left(\size(\MP)/\eps\right)$, which, for a sufficiently large $\kappa$, is at most $0.1 \alpha nm$. By \Cref{lem:concen-pac}(ii), this implies that $\Err_{\cD}(h) \leq \alpha$ as desired.
\end{proof}

\section{Conclusion and Open Questions}
\label{sec:conclusion}

We presented generic techniques to transform item-level DP algorithms to user-level DP algorithms. 

There are several open questions. Although our upper bounds are demonstrated to be tight for many tasks discussed in this work, it is not hard to see that they are not always tight\footnote{E.g., for PAC learning, \Cref{thm:pac-pure-main} provides a better user complexity than applying the approximate-DP transformation (\Cref{thm:apx-dp-main}) to item-level approximate-DP learners for a large regime of parameters and concept classes; the former has user complexity that decreases with $1/m$ whereas the latter decreases with $1/\sqrt{m}$.}; thus, it remains an important research direction to get a better understanding of the tight user complexity bounds for different learning tasks. On a more technical front, our transformation for approximate-DP (\Cref{thm:apx-dp-main}) has an item-level privacy loss (roughly) of the form $\frac{\eps^2}{\sqrt{m \log(1/\delta)^3}}$ while it seems plausible that $\frac{\eps}{\sqrt{m \log(1/\delta)}}$ can be achieved; an obvious open question is to give such an improvement, or show that it is impossible. We note that the extra factor of $\nicefrac{\log(1/\delta)}{\eps}$ comes from the fact that \Cref{alg:stab} uses LDDP with distance $O(\kappa) = O\left(\nicefrac{\log(1/\delta)}{\eps}\right)$. It is unclear how one could avoid this, given that the propose-test-release framework requires checking input at distance $O\left(\nicefrac{\log(1/\delta)}{\eps}\right)$.

Another limitation of our algorithms is their computational inefficiency, as their running time grows linearly with the size of the possible outputs. In fact, the approximate-DP algorithm could even be slower than this since we need to check whether $\bA$ is LDDP at a certain input $\bx$ (which involves computing $\hsd{\eps}{\bA_{-S}}{\bA_{-S'}}$). Nevertheless, given the generality of our results, it is unlikely that a generic efficient algorithm exists with matching user complexity; for example, even the algorithm for user-level DP-SCO in~\cite{user-level-icml23} has a worst case running time that is not polynomial. %
Thus, it remains an interesting direction to devise more efficient algorithms for specific tasks.

Finally, while our work has focused on the central model of DP, where the algorithm gets the access to the raw input data, other models of DP have also been studied in the context of user-level DP, such as the local DP model~\cite{GhaziKM21,ALS22}. Here again,~\cite{GhaziKM21,stable23} give a generic transformation that sometimes drastically reduces the user complexity in the example-rich case. 
Another interesting research direction is to devise a refined transformation for the example-sparse case (like one presented in our paper for the central model) but in the local model.%

\bibliographystyle{alpha}
\bibliography{main.bbl}

%

\appendix

\section{Summary of Results}

Our results for pure-DP are summarized in \Cref{table:pure-dp}.

\renewcommand{\arraystretch}{1.3}

\begin{table*}[h!]
\begin{center}
{\small
\begin{tabular}{| c |c|c|c|}
\hline
Problem & Item-Level DP & \multicolumn{2}{c|}{User-Level DP} \\
\cline{3-4}
& (Previous Work) & (Previous Work) & (Our Work) \\ \hline
\multirow{2}{*}{PAC Learning} & $\tTheta\left(\frac{d}{\eps \alpha}\right)$ & $\tTheta\left(\frac{d}{\eps}\right)$ when $m \geq \tTheta\left(\frac{d^2}{\eps^2\alpha^2}\right)$ & $\tTheta\left(\frac{d}{\eps} + \frac{d}{\eps \alpha m}\right)$ \\
 & \cite{BeimelNS19} & \cite{GhaziKM21,stable23} & \Cref{thm:pac-pure-main} \\
\hline
\multirow{2}{*}{\shortstack{Agnostic PAC \\ Learning}} & $\tTheta\left(\frac{d}{\alpha^2} + \frac{d}{\eps \alpha}\right)$ & - & $\tO\left(\frac{d}{\eps} + \frac{d}{\eps\alpha\sqrt{m}} + \frac{d}{\alpha^2 m}\right)$ \\
& \cite{BeimelNS19} & & \Cref{thm:pure-agn-pac} \\
\hline
\multirow{2}{*}{\shortstack{Learning Discrete \\ Distribution}} & $\tTheta\left(\frac{k}{\eps\alpha} + \frac{k}{\alpha^2}\right)$ & - & $\tTheta\left(\frac{k}{\eps} + \frac{k}{\eps\alpha \sqrt{m}} + \frac{k}{\alpha^2 m}\right)$ \\
& \cite{BunKSW19,AcharyaSZ21} & & \Cref{thm:discrete-dist} \\
\hline
\multirow{2}{*}{\shortstack{Learning Product\\ Distribution}} & $\tTheta\left(\frac{kd}{\eps\alpha} + \frac{kd}{\alpha^2}\right)$ & - & $\tTheta\left(\frac{kd}{\eps} + \frac{kd}{\eps\alpha \sqrt{m}} + \frac{kd}{\alpha^2 m}\right)$ \\
& \cite{BunKSW19,AcharyaSZ21} & & \Cref{thm:prod-dist} \\
\hline
\multirow{2}{*}{\shortstack{Learning Gaussian \\ (Known Covariance)}} & $\tTheta\left(\frac{d}{\eps\alpha} + \frac{d}{\alpha^2}\right)$ & - & $\tTheta\left(\frac{d}{\eps} + \frac{d}{\eps\alpha  \sqrt{m}} + \frac{d}{\alpha^2 m}\right)$ \\
& \cite{BunKSW19,AcharyaSZ21} & & \Cref{thm:gaussian-known-cov} \\
\hline
\multirow{2}{*}{\shortstack{Learning Gaussian \\ (Bounded Cov.)}} & $\tTheta\left(\frac{d^2}{\eps\alpha} + \frac{d^2}{\alpha^2}\right)$ & - & $\tTheta\left(\frac{d^2}{\eps} + \frac{d^2}{\eps\alpha\sqrt{m}} + \frac{d^2}{\alpha^2 m}\right)$ \\
& \cite{BunKSW19,KamathLSU19} & & \Cref{thm:gaussian-bounded-cov} \\
\hline
\end{tabular}
}
\end{center}
\caption{\textbf{Summary of results on pure-DP.} For (agnostic) PAC learning, $d$ denotes the probabilistic representation dimension (\Cref{def:prdim}) of the concept class. To the best of our knowledge, none of the distribution learning problems has been studied explicitly under pure user-level DP in previous work, although some bounds can be derived using previous techniques for approximate-DP.} \label{table:pure-dp}
\end{table*}

For approximate-DP, our transformation (\Cref{thm:apx-dp-main}) is very general and can be applied to any statistical tasks. Due to this, we do not attempt to exhaustively list its corollaries. Rather, we give only a few examples of consequences of \Cref{thm:apx-dp-main} in \Cref{table:apx-dp}. We note that in all cases, we either improve upon previous results or match them up to $\frac{\log(1/\delta)^{O(1)}}{\eps}$ factor for all values of $m$.

\begin{table*}[h!]
\begin{center}
\begin{tabular}{| c |c|c|c|}
\hline
Problem & Item-Level DP & \multicolumn{2}{c|}{User-Level DP} \\
\cline{3-4}
& (Previous Work) & (Previous Work) & (Our Work) \\ \hline
\multirow{2}{*}{PAC Learning} & $\tO\left(\frac{d_L^6}{\eps \alpha^2}\right)$ & $\tO\left(\frac{1}{\eps}\right)$ when $m \geq \tTheta\left(\frac{d_L^{12}}{\eps^2 \alpha^4}\right)$ & $\tO\left(\frac{1}{\eps} + \frac{d_L^6}{\eps^2 \alpha^2 \sqrt{m}}\right)$ \\
 & \cite{GGKM20} & \cite{GGKM20,stable23} & \Cref{thm:apx-dp-main} \\
\hline
\multirow{2}{*}{\shortstack{Learning Discrete \\ Distribution}}& $\tTheta\left(\frac{k}{\eps\alpha} + \frac{k}{\alpha^2}\right)$ & $\tTheta\left(\frac{1}{\eps} + \frac{k}{\eps\alpha \sqrt{m}} + \frac{k}{\alpha^2 m}\right)$ & $\tO\left(\frac{1}{\eps} + \frac{k}{\eps^2\alpha \sqrt{m}} + \frac{k}{\alpha^2 m}\right)$ \\
& \cite{BunKSW19,AcharyaSZ21} & \cite{NarayananME22} & \Cref{thm:apx-dp-main} \\
\hline
\multirow{2}{*}{\shortstack{Learning Product\\ Distribution}} & $\tO\left(\frac{kd}{\eps\alpha} + \frac{kd}{\alpha^2}\right)$ & - & $\tO\left(\frac{1}{\eps} + \frac{kd}{\eps^2\alpha \sqrt{m}} + \frac{kd}{\alpha^2 m}\right)$ \\
& \cite{BunKSW19} & & \Cref{thm:apx-dp-main} \\
\hline
\end{tabular}
\end{center}
\caption{\textbf{Example results on approximate-DP.} Each of our results is a combination of \Cref{thm:apx-dp-main} and the previously known item-level DP bound. For PAC learning, $d_L$ denotes the Littlestone's dimension of the concept class. We assume that $\delta \ll \eps$ and use $\tO, \tTheta$ to also hide polylogarithmic factors in $1/\delta$.} \label{table:apx-dp}
\end{table*}%

\section{Additional Preliminaries}

In this section, we list some additional material that is required for the remaining proofs.

For $a, b \in \R$ such that $a \leq b$, let $\clip_{a, b}: \R \to \R$ denote the function \begin{align*}
\clip_{a, b}(x) =
\begin{cases}
a & \text{ if } x < a, \\
b & \text{ if } x > b, \\
x & \text{ otherwise.}
\end{cases}
\end{align*}

For two distributions $A, B$, we use $d_{\TV}(A, B) := \frac{1}{2} \sum_{x \in \supp(A) \cup \supp(B)} |A(x) - B(x)|$ to denote their total variation distance, $d_{\KL}(A~\|~B) := \sum_{x \in \supp(A)} A(x) \cdot \log(A(x) / B(x))$ to denote their Kullback–Leibler divergence, and $d_{\chi^2}(A~\|~B) := \sum_{x \in \supp(A)} \frac{(A(x) - B(x))^2}{B(x)}$ to denote the $\chi^2$-divergence. We use $A^k$ to denote the distribution of $k$ independent samples drawn from $A$.
The following three well-known facts will be useful in our proofs.

\begin{lemma}
\label{lem:distrib}
For any two distributions $A$ and $B$,
\begin{enumerate}[nosep]
\item[(i)] (Pinsker's inequality) 
 $d_{\TV}(A, B) \leq \sqrt{\frac{1}{2} \cdot d_{\KL}(A~\|~B)}$.
\item[(ii)]
 $d_{\KL}(A^k~\|~B^k) = k \cdot d_{\KL}(A~\|~B)$ for $k \in \BN$.
\item[(iii)]
 $d_{\KL}(A~\|~B) \leq d_{\chi^2}(A~\|~B)$.
 \end{enumerate}
\end{lemma}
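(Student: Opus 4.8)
The plan is to prove the three items one at a time; each is a standard manipulation of the definitions, and the only one requiring more than a couple of lines is Pinsker's inequality~(i). Throughout, $\log$ denotes the natural logarithm, and I note that if $\supp(A)\not\subseteq\supp(B)$ then $d_{\KL}(A\|B)$ (and the corresponding quantity in~(ii)) is $+\infty$ and there is nothing to prove, so I may assume $\supp(A)\subseteq\supp(B)$. \textbf{Item (iii).} I would apply the elementary bound $\log t\le t-1$ (valid for all $t>0$) with $t=A(x)/B(x)$, multiply through by $A(x)\ge 0$, and sum over $x$: this gives $d_{\KL}(A\|B)=\sum_x A(x)\log\tfrac{A(x)}{B(x)}\le\sum_x\big(\tfrac{A(x)^2}{B(x)}-A(x)\big)=\sum_x\tfrac{A(x)^2}{B(x)}-1$, using $\sum_x A(x)=1$. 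Expanding $(A(x)-B(x))^2=A(x)^2-2A(x)B(x)+B(x)^2$ and summing shows $d_{\chi^2}(A\|B)=\sum_x\tfrac{A(x)^2}{B(x)}-2+1=\sum_x\tfrac{A(x)^2}{B(x)}-1$, which is exactly the upper bound just derived, so $d_{\KL}(A\|B)\le d_{\chi^2}(A\|B)$.

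\textbf{Item (ii).} I would write $A^k(\bm{x})=\prod_{j=1}^k A(x_j)$ and likewise for $B^k$, so that $\log\tfrac{A^k(\bm{x})}{B^k(\bm{x})}=\sum_{j=1}^k\log\tfrac{A(x_j)}{B(x_j)}$. Substituting into the definition of $d_{\KL}(A^k\|B^k)$ and exchanging the order of the two summations, the $j$-th summand is $\sum_{\bm{x}}\big(\prod_{\ell=1}^k A(x_\ell)\big)\log\tfrac{A(x_j)}{B(x_j)}$; marginalizing out each coordinate $\ell\ne j$ contributes a factor $\sum_{x_\ell}A(x_\ell)=1$, leaving $\sum_{x_j}A(x_j)\log\tfrac{A(x_j)}{B(x_j)}=d_{\KL}(A\|B)$. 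Summing the $k$ identical terms yields $d_{\KL}(A^k\|B^k)=k\,d_{\KL}(A\|B)$.

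\textbf{Item (i).} I would reduce to two outcomes. Let $E:=\{x:A(x)\ge B(x)\}$ and set $p:=A(E)$, $q:=B(E)$, so that $d_{\TV}(A,B)=A(E)-B(E)=p-q$. Applying the log-sum inequality $\sum_i a_i\log\tfrac{a_i}{b_i}\ge\big(\sum_i a_i\big)\log\tfrac{\sum_i a_i}{\sum_i b_i}$ separately on $E$ and on $E^c$ and adding (equivalently, the data-processing inequality for KL under the binary quantization $\{E,E^c\}$) gives $d_{\KL}(A\|B)\ge p\log\tfrac{p}{q}+(1-p)\log\tfrac{1-p}{1-q}=d_{\KL}(\Ber(p)~\|~\Ber(q))$. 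It then remains to verify the scalar inequality $p\log\tfrac{p}{q}+(1-p)\log\tfrac{1-p}{1-q}\ge 2(p-q)^2$ for $p,q\in(0,1)$: fixing $q$ and letting $\phi(p)$ be the left side minus the right side, one computes $\phi(q)=0$, $\phi'(q)=0$, and $\phi''(p)=\tfrac{1}{p(1-p)}-4\ge 0$ since $p(1-p)\le\tfrac14$, so $\phi\ge 0$ on $(0,1)$. Hence $d_{\KL}(A\|B)\ge 2\,d_{\TV}(A,B)^2$, and taking square roots gives~(i).

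The main obstacle — such as it is, since all three are textbook facts — is the reduction to the binary case in~(i); its self-contained justification is precisely the two applications of the log-sum inequality on $E$ and $E^c$ noted above, combined with the one-variable calculus estimate for binary KL. Items~(ii) and~(iii) are direct consequences of the definitions and the inequality $\log t\le t-1$.
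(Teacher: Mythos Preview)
Your proofs of all three items are correct and standard. The paper itself does not prove this lemma: it is introduced in the Additional Preliminaries appendix as a collection of ``well-known facts'' and is stated without any proof, so there is no argument in the paper to compare against. Your self-contained derivations (the $\log t\le t-1$ bound for~(iii), the direct tensorization for~(ii), and the reduction to the binary case via log-sum plus the second-derivative estimate for~(i)) are exactly the textbook arguments one would supply.
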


We use $\|u\|_0$ to denote the \emph{Hamming ``norm''}\footnote{This is not actually a norm, but often called as such.} of $u$, i.e., the number of non-zero coordinates of $u$. The following is a well-known fact (aka \emph{constant-rate constant-distance error correcting codes}), which will be useful in our lower bound proofs.

\begin{theorem} \label{thm:ecc}
For $d \in \BN$, there are $u_1, \dots, u_W \in \{0, 1\}^d$ for $W = 2^{\Omega(d)}$ such that $\|u_i - u_j\|_0 \geq \Omega(d)$ for all distinct $i, j \in [W]$.
\end{theorem}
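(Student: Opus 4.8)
The plan is to prove Theorem~\ref{thm:ecc} by the probabilistic method (the Gilbert--Varshamov argument), whose only ingredient is the standard volume bound for Hamming balls.

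First I would record the volume bound: for any constant $\gamma \in (0, 1/2)$, the number of points of $\{0,1\}^d$ within Hamming distance $\gamma d$ of a fixed point is $\sum_{k = 0}^{\lfloor \gamma d \rfloor} \binom{d}{k} \le 2^{H_2(\gamma)\, d}$, where $H_2(\gamma) = -\gamma \log_2 \gamma - (1-\gamma)\log_2(1-\gamma)$ is the binary entropy function and $H_2(\gamma) < 1$ strictly whenever $\gamma < 1/2$. Equivalently, if $v$ is uniform over $\{0,1\}^d$ and $u$ is any fixed point, then $\Pr[\|u - v\|_0 \le \gamma d] \le 2^{-(1 - H_2(\gamma))\, d}$.

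Next, fix a small constant $\gamma \in (0, 1/2)$ (say $\gamma = 1/8$) and set $c := 1 - H_2(\gamma) > 0$. Sample $W' := 2^{cd/2}$ points $v_1, \dots, v_{W'}$ independently and uniformly from $\{0,1\}^d$. By the volume bound together with a union bound over the at most $W'^2 = 2^{cd}$ ordered pairs, the expected number of pairs $(i,j)$ with $i \ne j$ and $\|v_i - v_j\|_0 < \gamma d$ is at most $2^{cd} \cdot 2^{-cd} \le W'/2$ (for $d$ larger than an absolute constant; small $d$ is handled trivially by taking $W = 1$, absorbed into the $\Omega(\cdot)$). Hence there is a choice of the $v_i$ with at most $W'/2$ such ``bad'' pairs; deleting one endpoint of each leaves $W := W'/2 = 2^{\Omega(d)}$ points whose pairwise Hamming distances are all at least $\gamma d = \Omega(d)$, which is exactly the claim. (Equivalently one can argue greedily: keep adding points of $\{0,1\}^d$ at distance $> \gamma d$ from all previously chosen ones; when the process halts with $W$ points, the radius-$\gamma d$ balls around them cover $\{0,1\}^d$, so $W \ge 2^d / 2^{H_2(\gamma) d} = 2^{\Omega(d)}$.)

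There is no real obstacle: the entire content is the inequality $\binom{d}{\le \gamma d} \le 2^{H_2(\gamma) d}$ and the fact $H_2(\gamma) < 1$ for $\gamma < 1/2$, plus routine bookkeeping of the absolute constants hidden in the two occurrences of $\Omega(\cdot)$. I would also note in passing that the statement is tight up to these constants, being precisely the Gilbert--Varshamov bound, and that for the paper's purposes only the qualitative ``constant rate, constant relative distance'' conclusion is needed.
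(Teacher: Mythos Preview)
Your proof is correct: the Gilbert--Varshamov argument via the Hamming-ball volume bound (or the equivalent greedy covering variant you also sketch) is the standard and complete proof of this fact. The paper does not actually prove \Cref{thm:ecc}; it merely states it as a well-known fact about constant-rate constant-distance codes, so there is nothing to compare against beyond noting that you have supplied exactly the classical argument the citation points to.
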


\subsection{Concentration Inequalities}

Next, we list a few concentration inequalities in convenient forms for subsequent proofs.

\paragraph{Chernoff bound.} We start with the standard Chernoff bound.

\begin{lemma}[Chernoff Bound] \label{thm:chernoff}
Let $X_1, \dots, X_n$ denote i.i.d. Bernoulli random variables with success probability $p$. Then, for all $\delta > 0$, we have
\begin{align*}
\Pr[X_1 + \cdots + X_n \leq (1 - \delta)pn] \leq \exp\left(\frac{-\delta^2 pn}{2}\right),
\end{align*}
and,
\begin{align*}
\Pr[X_1 + \cdots + X_n \geq (1 + \delta)pn] \leq \exp\left(\frac{-\delta^2 pn}{2 + \delta}\right).
\end{align*}
\end{lemma}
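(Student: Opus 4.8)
The statement to prove is the Chernoff bound (\Cref{thm:chernoff}). The plan is to derive both the lower-tail and upper-tail bounds via the standard Bernstein/exponential moment (a.k.a.\ Markov-on-the-MGF) method, specialized to sums of i.i.d.\ Bernoulli random variables. Write $X = X_1 + \cdots + X_n$, so $\E[X] = pn =: \mu$. For the upper tail, fix $t > 0$ and apply Markov's inequality to $e^{tX}$: for any threshold $a$, $\Pr[X \geq a] \leq e^{-ta}\,\E[e^{tX}]$. By independence, $\E[e^{tX}] = \prod_{i=1}^n \E[e^{tX_i}] = (1 - p + p e^t)^n$. Using the elementary inequality $1 - p + p e^t \leq \exp(p(e^t - 1))$ (which follows from $1 + x \leq e^x$ with $x = p(e^t - 1)$), we get $\E[e^{tX}] \leq \exp(\mu(e^t - 1))$.

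For the upper tail with $a = (1 + \delta)\mu$, substitute the optimal choice $t = \ln(1 + \delta)$ (found by minimizing $\mu(e^t - 1) - t(1+\delta)\mu$ over $t$), which yields the classical bound $\Pr[X \geq (1+\delta)\mu] \leq \left(\frac{e^\delta}{(1+\delta)^{1+\delta}}\right)^\mu$. The remaining step is to show this is at most $\exp\!\left(\frac{-\delta^2 \mu}{2 + \delta}\right)$, i.e.\ that $\delta - (1+\delta)\ln(1+\delta) \leq \frac{-\delta^2}{2 + \delta}$ for all $\delta > 0$; this is a standard one-variable calculus fact, verified by checking that the difference of the two sides vanishes at $\delta = 0$ and has the right monotonicity (e.g.\ via $\ln(1+\delta) \geq \frac{2\delta}{2+\delta}$ for $\delta \geq 0$, itself provable by comparing derivatives). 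For the lower tail, proceed symmetrically with $e^{-tX}$: $\Pr[X \leq (1-\delta)\mu] \leq e^{t(1-\delta)\mu}\E[e^{-tX}] \leq e^{t(1-\delta)\mu}\exp(\mu(e^{-t} - 1))$, optimize at $t = -\ln(1 - \delta)$ to obtain $\left(\frac{e^{-\delta}}{(1-\delta)^{1-\delta}}\right)^\mu$ (valid for $0 < \delta < 1$; for $\delta \geq 1$ the event is empty and the bound is trivial), and then use the inequality $-\delta - (1-\delta)\ln(1-\delta) \leq -\delta^2/2$, again a routine one-variable estimate (e.g.\ from the series $-(1-\delta)\ln(1-\delta) = -\delta + \sum_{k \geq 2} \frac{\delta^k}{k(k-1)} \le -\delta + \delta^2/2 \cdot \frac{1}{1-\delta}$, handled with care, or more simply by a derivative comparison).

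The main obstacle is not conceptual but is the careful verification of the two scalar inequalities $\delta - (1+\delta)\ln(1+\delta) \leq -\delta^2/(2+\delta)$ and $-\delta - (1-\delta)\ln(1-\delta) \leq -\delta^2/2$, and ensuring the edge cases ($\delta \geq 1$ in the lower tail, and the degenerate cases $p \in \{0,1\}$) are handled cleanly. These are all standard and appear in textbook treatments, so I would simply cite the derivative-comparison argument or refer to a standard reference rather than grinding through it. No step here interacts with the rest of the paper; this is a self-contained auxiliary lemma.
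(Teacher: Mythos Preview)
Your proof is correct and is the standard MGF-based derivation of the Chernoff bound. The paper does not actually prove this lemma; it is stated in the preliminaries as a well-known fact without proof, so there is no ``paper's own proof'' to compare against, and your approach is exactly the textbook one a reader would supply if needed.
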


\paragraph{McDiarmid's inequality.}
For a distribution $\cD$ on $\cZ$, and a function $g: \cZ^n \to \R$, let $\mu_{\cD^n}(g) := \E_{\bx \sim \cD^n}[g(\bx)]$. We say that $g$ is a \emph{$c$-bounded difference} function iff $|g(\bx) - g(\bx')| \leq c$ for all $\bx, \bx' \in \cZ^n$ that differ on a single coordinate (i.e., $\|\bx - \bx'\|_0 = 1$).

\begin{lemma}[McDiarmid's inequality] \label{thm:mcdiarmid}
Let $g: \cZ^n \to \R$ be any $c$-bounded difference function. Then, for all $t > 0$
\begin{align*}
\Pr[|g(\bx) - \mu_{\cD^n}(g)| > t] \leq 2\exp\left(\frac{-2t^2}{n c^2}\right).
\end{align*}
\end{lemma}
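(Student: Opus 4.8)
The plan is to prove \Cref{thm:mcdiarmid} by the classical Doob-martingale / method-of-bounded-differences argument (Azuma--Hoeffding). Fix i.i.d.\ samples $X_1, \dots, X_n \sim \cD$ and introduce the Doob martingale $Z_k := \E[g(X_1, \dots, X_n) \mid X_1, \dots, X_k]$ for $k = 0, 1, \dots, n$, so that $Z_0 = \mu_{\cD^n}(g)$ and $Z_n = g(\bx)$, with $(Z_k)_k$ a martingale for the filtration generated by $X_1, \dots, X_k$. Write $\phi(x_1, \dots, x_k) := \E_{X_{k+1}, \dots, X_n}[g(x_1, \dots, x_k, X_{k+1}, \dots, X_n)]$, which inherits the $c$-bounded-difference property in each coordinate from $g$ (expectation over the remaining coordinates can only shrink the variation).

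The key structural step is to show that the increment $D_k := Z_k - Z_{k-1}$ is, conditionally on $X_1, \dots, X_{k-1}$, supported in an interval of width at most $c$. Set $L := \inf_{z} \phi(X_1, \dots, X_{k-1}, z)$ and $U := \sup_{z} \phi(X_1, \dots, X_{k-1}, z)$; then $U - L \le c$ by the bounded-difference property of $\phi$ in the $k$-th coordinate. Since $Z_{k-1} = \E[\phi(X_1, \dots, X_{k-1}, X_k) \mid X_1, \dots, X_{k-1}] \in [L, U]$ and $Z_k = \phi(X_1, \dots, X_k) \in [L, U]$, the increment $D_k$ lies in the interval $[L - Z_{k-1}, U - Z_{k-1}]$, which has width $U - L \le c$.

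Next I would invoke Hoeffding's lemma: for any random variable $Y$ with $\E[Y] = 0$ taking values in an interval of width $w$ and any $s \in \R$, $\E[e^{sY}] \le \exp(s^2 w^2 / 8)$ (a standard convexity estimate on the cumulant generating function). Applied conditionally — using $\E[D_k \mid X_1, \dots, X_{k-1}] = 0$ and the width bound above — this gives $\E[e^{s D_k} \mid X_1, \dots, X_{k-1}] \le \exp(s^2 c^2 / 8)$ almost surely. Iterating via the tower property over $k = n, n-1, \dots, 1$ yields $\E[e^{s(Z_n - Z_0)}] \le \exp(n s^2 c^2 / 8)$. By Markov's inequality, $\Pr[Z_n - Z_0 > t] \le \exp(n s^2 c^2 / 8 - s t)$ for all $s > 0$; optimizing at $s = 4t/(n c^2)$ gives $\Pr[g(\bx) - \mu_{\cD^n}(g) > t] \le \exp(-2 t^2 / (n c^2))$. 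Applying the same argument to $-g$ (also $c$-bounded-difference) and taking a union bound produces the two-sided bound with the factor $2$, as claimed.

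The main obstacle is the conditional increment bound in the second paragraph: one must check carefully that the bounded-difference property descends to the partially averaged function $\phi$ and that the conditional range of $D_k$ is governed by $U - L$ rather than anything larger; the remaining pieces (Hoeffding's lemma, the tower-property iteration, and the Markov-plus-optimize step) are routine. An essentially equivalent alternative is to bound $D_k$ directly via an independent-copy coupling, $D_k = \phi(X_1, \dots, X_{k-1}, X_k) - \E_{X_k'}[\phi(X_1, \dots, X_{k-1}, X_k')]$, but the $\inf$/$\sup$ formulation is cleanest for feeding into Hoeffding's lemma.
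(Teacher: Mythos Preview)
Your proposal is a correct, standard proof of McDiarmid's inequality via the Doob martingale and Hoeffding's lemma. Note, however, that the paper does not actually prove this statement: \Cref{thm:mcdiarmid} is listed in the ``Additional Preliminaries'' as a classical concentration inequality and is invoked without proof, so there is no paper-side argument to compare against.
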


We will also use the following consequence of Azuma's inequality.

\begin{lemma}\label{thm:azuma} %
Let $C_1, \dots, C_n$ be sequence of random variables. Suppose that $|C_i| \leq \alpha$ almost surely and that $\gamma \leq \E[C_i \mid C_1=c_1, \dots, C_{i-1}=c_{i-1}] \leq 0$ for any $c_1, \dots, c_{i - 1}$. Then, for every $G > 0$, we have $\Pr\left[n\gamma - \alpha \sqrt{n G}  \leq \sum_{i \in [n]} C_i \leq \alpha \sqrt{n G}\right] \geq 1 - 2\exp\left(-G / 2\right)$.
\end{lemma}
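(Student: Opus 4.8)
This is a one-sided Azuma--Hoeffding estimate; the only twist relative to the textbook statement is that the conditional means $\E[C_i\mid C_1,\dots,C_{i-1}]$ need not be zero but are merely pinned to the interval $[\gamma,0]$, and we exploit this by keeping them on the ``harmless'' side of each tail. Let $\mathcal F_{i-1}$ denote the $\sigma$-algebra generated by $C_1,\dots,C_{i-1}$, write $\mu_i := \E[C_i\mid\mathcal F_{i-1}]$ (an $\mathcal F_{i-1}$-measurable random variable with $\gamma\le\mu_i\le 0$), and set $D_i := C_i - \mu_i$, so that $(D_i)$ is a martingale difference sequence with respect to $(\mathcal F_i)$. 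The key point is that, conditioned on $\mathcal F_{i-1}$, the variable $D_i$ takes values in the interval $[-\alpha-\mu_i,\ \alpha-\mu_i]$, which has width exactly $2\alpha$.

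First I would establish a conditional exponential-moment bound: by Hoeffding's lemma applied conditionally on $\mathcal F_{i-1}$ to the centered, bounded variable $D_i$, for every $s\in\R$ we get $\E[e^{sD_i}\mid\mathcal F_{i-1}]\le \exp\!\big(s^2(2\alpha)^2/8\big)=\exp\!\big(s^2\alpha^2/2\big)$. Iterating this with the tower rule over $i=n,n-1,\dots,1$ gives $\E\big[\exp(s\sum_{i\in[n]}D_i)\big]\le\exp(n s^2\alpha^2/2)$, and the same holds with $-s$ in place of $s$. (If one prefers to invoke the standard Azuma inequality as a black box rather than redo this computation, one must use the version that controls increments by the \emph{width} $2\alpha$ of their range, not the cruder bound $|D_i|\le\alpha+|\gamma|$; otherwise the constant degrades.)

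Next I would split into the two tails. For the \emph{upper} tail, since $\mu_i\le 0$ we have $\sum_{i}C_i = \sum_i D_i + \sum_i \mu_i \le \sum_i D_i$, so for any $s>0$, $\Pr[\sum_i C_i > \alpha\sqrt{nG}] \le \Pr[\sum_i D_i > \alpha\sqrt{nG}] \le e^{-s\alpha\sqrt{nG}}\,\E[e^{s\sum_i D_i}] \le \exp\!\big(-s\alpha\sqrt{nG} + ns^2\alpha^2/2\big)$; choosing $s=\sqrt{G/n}/\alpha$ makes the exponent $-G/2$. For the \emph{lower} tail, since $\mu_i\ge\gamma$ we have $\sum_i\mu_i\ge n\gamma$, hence the event $\{\sum_iC_i < n\gamma - \alpha\sqrt{nG}\}$ is contained in $\{\sum_i D_i < -\alpha\sqrt{nG}\}$; applying the same Chernoff/Markov argument to $-\sum_i D_i$ (using the $-s$ version of the moment bound) yields $\Pr[\sum_i C_i < n\gamma - \alpha\sqrt{nG}] \le \exp(-G/2)$. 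A union bound over these two events gives $\Pr[n\gamma-\alpha\sqrt{nG}\le\sum_iC_i\le\alpha\sqrt{nG}]\ge 1-2\exp(-G/2)$, as claimed.

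I do not expect a genuine obstacle: the statement is essentially bookkeeping around Azuma--Hoeffding. The two places to be careful are (i) keeping the drift terms $\sum_i\mu_i$ on the correct side for each tail -- they help in the upper tail since they are $\le 0$, and they are absorbed into the $n\gamma$ shift in the lower tail since they are $\ge n\gamma$ -- and (ii) using the width-$2\alpha$ (rather than $\ell_\infty$-ball) form of Hoeffding's lemma so that the final constant comes out to exactly $G/2$.
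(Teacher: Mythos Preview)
Your proposal is correct. The paper does not actually give a proof of this lemma; it simply states it as ``the following consequence of Azuma's inequality'' and moves on. Your argument supplies exactly the missing details in the expected way: center the increments to form a martingale difference sequence, use the width-$2\alpha$ form of Hoeffding's lemma conditionally, and absorb the drift $\sum_i\mu_i\in[n\gamma,0]$ on the correct side of each tail before taking the union bound. There is nothing to compare against, and your bookkeeping (in particular the optimization $s=\sqrt{G/n}/\alpha$ yielding exponent $-G/2$) is clean.
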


\subsection{Observations on Item-level vs User-level DP}

Here we list a couple of trivial observations on the user / sample complexity in the item vs user-level DP settings.

\begin{obs} \label{obs:item-to-user}
Let $\fT$ be any task and $\gamma > 0$ be a parameter. Then, for any $\eps, \delta \geq 0$, and $m \in \BN$, it holds that
\begin{enumerate}[leftmargin=5mm]
\item $n^{\fT}_m(\eps, \delta; \gamma) \leq n^{\fT}_1(\eps, \delta; \gamma)$, and
\item $n^{\fT}_m(\eps, \delta; \gamma) \geq \lceil n^{\fT}_1(\eps, \delta; \gamma) / m \rceil$
\end{enumerate}
\end{obs}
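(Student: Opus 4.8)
The plan is to prove both inequalities by exhibiting the two obvious reductions between item-level and user-level DP algorithms, and checking that each reduction preserves both the privacy guarantee and the usefulness.

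For part 1, I would set $n := n^{\fT}_1(\eps,\delta;\gamma)$, take an $(\eps,\delta)$-item-level DP algorithm $\bA$ on $\MZ^n$ that is $\gamma$-useful for $\fT$, and define a user-level algorithm $\bA'$ on $\MZ^{nm}$ that discards all but the first example of each user and runs $\bA$ on $(x_{1,1}, \dots, x_{n,1})$. Privacy: if $\bx \asymp \bx'$ via $\bx_{-i} = \bx'_{-i}$, then the retained $n$-tuples agree in every coordinate except possibly the $i$th, so they are item-level neighbors and the item-level DP of $\bA$ gives $\bA'(\bx) \approx_{\eps,\delta} \bA'(\bx')$; hence $\bA'$ is $(\eps,\delta)$-user-level DP. Usefulness: when $\bx \sim \cD^{nm}$, the retained sub-tuple $(x_{1,1},\dots,x_{n,1})$ is distributed exactly as $\cD^n$, so $\Pr_{\bx \sim \cD^{nm}}[\bA'(\bx) \in \Psi_{\fT}] = \Pr_{\bz \sim \cD^n}[\bA(\bz) \in \Psi_{\fT}] \geq \gamma$. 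This yields a $\gamma$-useful $(\eps,\delta)$-user-level DP algorithm with user complexity $n$, giving $n^{\fT}_m(\eps,\delta;\gamma) \leq n^{\fT}_1(\eps,\delta;\gamma)$.

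For part 2, I would set $n := n^{\fT}_m(\eps,\delta;\gamma)$, take an $(\eps,\delta)$-user-level DP algorithm $\bA$ on $\MZ^{nm}$ that is $\gamma$-useful for $\fT$, and define an item-level algorithm $\bA''$ on $\MZ^{nm}$ that reinterprets its $nm$ input examples as $n$ users with $m$ examples each and runs $\bA$. Any two item-level neighbors $\bz, \bz' \in \MZ^{nm}$ differ in a single coordinate, hence differ only within the block of the user owning that coordinate, hence are user-level neighbors; so the user-level DP of $\bA$ gives $\bA''(\bz) \approx_{\eps,\delta} \bA''(\bz')$, i.e., $\bA''$ is $(\eps,\delta)$-item-level DP with sample complexity $nm$. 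Its usefulness is identical to that of $\bA$ on $\cD^{nm}$, hence at least $\gamma$. Therefore $n^{\fT}_1(\eps,\delta;\gamma) \leq nm$, and since $n^{\fT}_m(\eps,\delta;\gamma)$ is an integer, $n^{\fT}_m(\eps,\delta;\gamma) \geq \lceil n^{\fT}_1(\eps,\delta;\gamma)/m \rceil$.

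There is no real obstacle here; the only two points needing (minor) care are that the ``discard $m-1$ examples per user'' map preserves the i.i.d.\ structure of the sample, so the transformed algorithm's success probability on $\cD^{nm}$ equals the original's on $\cD^n$, and that item-level neighbors form a special case of user-level neighbors under the grouping map, so a user-level DP guarantee downgrades immediately to an item-level one. Both are direct from the definitions of user-level neighbors and of $n^{\fT}_m$.
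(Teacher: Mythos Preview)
Your proposal is correct and follows essentially the same approach as the paper: part 1 via the ``keep one example per user'' reduction and part 2 via the ``group $m$ items into a super-user'' reduction, each checking privacy and usefulness directly from the definitions. Your write-up is in fact slightly more explicit than the paper's in verifying both directions.
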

\begin{proof}
Both parts follow by simple reductions as given below.
\begin{enumerate}[leftmargin=*]
\item Let $\bA$ be an $(\eps, \delta)$-item-level DP $\gamma$-useful algorithm for $\fT$. Then, in the user-level DP algorithm, each user can simply throw away all-but-one example and run $\bA$. Clearly, this algorithm is $(\eps, \delta)$-user-level DP and is $\gamma$-useful for $\fT$. The user complexity remains the same.
\item Let $\bA$ be an $(\eps, \delta)$-user-level DP $\gamma$-useful algorithm for $\fT$ with user complexity $n^{\fT}_m(\eps, \delta; \gamma)$. Then, in the item-level DP algorithm, if there are $N = m \cdot n^{\fT}_m(\eps, \delta; \gamma)$ users, we can just run $\bA$ by grouping each $m$ examples together into a ``super-user''. Clearly, this algorithm is also $(\eps, \delta)$-item-level DP and has user complexity $m \cdot n^{\fT}_m(\eps, \delta; \gamma)$. Thus, we have $m \cdot n^{\fT}_m(\eps, \delta; \gamma) \leq n^{\fT}_1(\eps, \delta; \gamma)$.\qedhere
\end{enumerate}
\end{proof}%

\section{Missing Details from \texorpdfstring{\Cref{sec:apx-dp-main}}{Section~\ref{sec:apx-dp-main}}}

\subsection{DP Implies Perfect Generalization}
\label{app:perf-gen}

We start by recalling the definition of \emph{perfectly generalizing} algorithms.

\begin{defn}[Perfect generalization~\cite{CummingsLNRW16,BassilyF16}]
For $\beta, \eps, \delta > 0$, an algorithm $\bA: \MZ^n \to \MO$ is said to be \emph{$(\beta, \eps, \delta)$-perfectly generalizing} iff, for any distribution $\cD$ (over $\MZ$), there exists a distribution $\SIMD$ such that
\begin{align*}
\Pr_{\bx \sim \cD^n}[\bA(\bx) \approx_{\eps, \delta} \SIMD] \geq 1 - \beta.
\end{align*}
\end{defn}

It is known that perfect generalization and sample perfect generalization (\Cref{defn:sample-perf-gen}) are equivalent:
\begin{lemma}[\cite{CummingsLNRW16,stable23}]
Any $(\beta, \eps, \delta)$-perfectly generalizing algorithm is also $(\beta, 2\eps, 3\delta)$%
-sample perfectly generalizing.
Any $(\beta, \eps, \delta)$-sample perfectly generalizing algorithm is also $(\sqrt{\beta}, \eps, \delta + \sqrt{\beta})$-perfectly generalizing; furthermore, this holds even when we set $\SIMD$ to be the distribution of $\bA(\bx)$ where $\bx \sim \cD^m$.
\end{lemma}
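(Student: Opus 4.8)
The plan is to prove the two implications separately; both reduce to the triangle inequality \Cref{lem:wcomp} together with two convexity properties of the hockey–stick divergence and the trivial bound $\hsd{\eps}{A}{B}\le 1$.

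\textbf{Perfect generalization $\Rightarrow$ sample perfect generalization.} Suppose $\bA$ is $(\beta,\eps,\delta)$-perfectly generalizing and let $\SIMD$ be the guaranteed simulator, so $\Pr_{\bx\sim\cD^n}[\bA(\bx)\approx_{\eps,\delta}\SIMD]\ge 1-\beta$. I would draw $\bx,\bx'\sim\cD^n$ independently; by a union bound, with probability at least $1-2\beta$ both $\bA(\bx)\approx_{\eps,\delta}\SIMD$ and $\bA(\bx')\approx_{\eps,\delta}\SIMD$ hold. On this event, applying \Cref{lem:wcomp} in each direction with $\SIMD$ as the middle distribution gives $\bA(\bx)\approx_{2\eps,\,2e^{\eps}\delta}\bA(\bx')$, and $2e^{\eps}\delta\le 3\delta$ since $\eps$ is sufficiently small. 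This is the stated $(\,\cdot\,,2\eps,3\delta)$ bound, modulo that the union bound nominally costs a factor of $2$ in the failure probability (so one really gets $(2\beta,2\eps,3\delta)$; the $\beta$ in the statement should be read up to this constant).

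\textbf{Sample perfect generalization $\Rightarrow$ perfect generalization.} Now suppose $\bA$ is $(\beta,\eps,\delta)$-sample perfectly generalizing (\Cref{defn:sample-perf-gen}) and take the candidate simulator $\SIMD=\E_{\bx'\sim\cD^n}[\bA(\bx')]$, i.e.\ the law of $\bA$ on a fresh independent dataset. The structural fact I would use is that $\hsd{\eps}{\cdot}{\cdot}$ is convex in each argument separately (because $t\mapsto[t-c]_+$ and $t\mapsto[c-t]_+$ are convex), so Jensen's inequality gives, for every fixed $\bx$, $\hsd{\eps}{\bA(\bx)}{\SIMD}\le \E_{\bx'}[\hsd{\eps}{\bA(\bx)}{\bA(\bx')}]$ and symmetrically $\hsd{\eps}{\SIMD}{\bA(\bx)}\le \E_{\bx'}[\hsd{\eps}{\bA(\bx')}{\bA(\bx)}]$. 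Whenever $\bA(\bx)\approx_{\eps,\delta}\bA(\bx')$ both integrands are at most $\delta$, and otherwise each is at most $1$; hence each right–hand side is at most $\delta+g(\bx)$, where $g(\bx):=\Pr_{\bx'}[\bA(\bx)\napprox_{\eps,\delta}\bA(\bx')]$ (and $\napprox_{\eps,\delta}$ is symmetric, so the same $g$ controls both directions). The sample–perfect–generalization hypothesis says $\E_{\bx}[g(\bx)]=\Pr_{\bx,\bx'}[\bA(\bx)\napprox_{\eps,\delta}\bA(\bx')]\le\beta$, so Markov's inequality gives $\Pr_{\bx}[g(\bx)>\sqrt{\beta}]\le\sqrt{\beta}$. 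Thus with probability at least $1-\sqrt\beta$ over $\bx\sim\cD^n$ we get $\hsd{\eps}{\bA(\bx)}{\SIMD}\le\delta+\sqrt\beta$ and $\hsd{\eps}{\SIMD}{\bA(\bx)}\le\delta+\sqrt\beta$, i.e.\ $\bA(\bx)\approx_{\eps,\delta+\sqrt\beta}\SIMD$, which is exactly $(\sqrt\beta,\eps,\delta+\sqrt\beta)$-perfect generalization with this choice of $\SIMD$.

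\textbf{Main obstacle.} The first implication is essentially just the triangle inequality, so the real content is the second, and there the crux is the passage from ``$\bA(\bx)$ close to the \emph{mixture} $\SIMD$'' to ``$\bA(\bx)$ close to the individual samples $\bA(\bx')$'': this is precisely where convexity of $\hsd{\eps}{\cdot}{\cdot}$ in each argument is used, combined with the trivial bound $\hsd{\eps}{\cdot}{\cdot}\le 1$ to absorb the (Markov-small) fraction of ``bad'' $\bx'$ and with $\sqrt{\cdot}$ to trade expectation for high probability. The remaining items are routine bookkeeping: checking symmetry of $\napprox_{\eps,\delta}$ so one function $g$ suffices, and checking that $\eps$ sufficiently small turns $2e^{\eps}\delta$ into $3\delta$ in the first implication.
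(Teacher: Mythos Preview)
The paper does not supply its own proof of this lemma---it is simply quoted from \cite{CummingsLNRW16,stable23}---so there is nothing to compare against directly. Your argument is the standard one from those references and is correct in substance: triangle inequality through $\SIMD$ for the first direction, and convexity of $\hsd{\eps}{\cdot}{\cdot}$ in each argument plus Markov for the second, with $\SIMD$ taken to be the mixture $\E_{\bx'\sim\cD^n}[\bA(\bx')]$ as stated.

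Two small remarks. First, your observation that the union bound in the forward direction gives $2\beta$ rather than $\beta$ is accurate; the discrepancy with the lemma as printed is an artifact of how the constants are quoted from the source papers (the argument there is the same as yours), not a gap in your reasoning. Second, your reliance on ``$\eps$ sufficiently small'' to turn $2e^{\eps}\delta$ into $3\delta$ is indeed needed with \Cref{lem:wcomp} as stated; this is consistent with the standing assumption throughout the paper and with how the lemma is used downstream, so it is not a genuine restriction here. The second implication is handled cleanly and with the exact constants claimed.
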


Thus, by plugging the above into \Cref{thm:sam-perf-gen}, we immediately arrive at the following theorem, which shows that any $(\eps, \delta)$-DP algorithm is $(\beta, \tO(\eps\sqrt{n}), O(n\delta))$-perfectly generalizing. This resolves an open question of~\cite{CummingsLNRW16}.

\begin{theorem}[DP implies perfect generalization] \label{thm:perf-gen}
Suppose that $\bA$ is an $(\eps, \delta)$-item-level DP algorithm. Further, assume that $\eps, \delta, \beta$, and $\eps\sqrt{n \log(\nicefrac{1}{\beta\delta})}$ are sufficiently small. Then, $\bA$ is $(\beta, \eps', \delta')$-perfectly generalizing, where $\eps' \leq O(\eps\sqrt{n \log(\nicefrac{1}{\beta\delta})})$ and $\delta' = O(n\delta)$.

Furthermore, this holds even when we set $\SIMD$ to be the distribution of $\bA(\bx)$ where $\bx \sim \cD^n$.
\end{theorem}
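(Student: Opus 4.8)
The plan is to obtain Theorem~\ref{thm:perf-gen} as an essentially immediate corollary of Theorem~\ref{thm:sam-perf-gen} together with the (well-known) fact that sample perfect generalization upgrades to perfect generalization once we take the simulator to be $\bA$ run on an independent sample. Fix a distribution $\cD$ over $\MZ$ and set $\SIMD$ to be the law of $\bA(\bx')$ for $\bx'\sim\cD^n$. First I would invoke Theorem~\ref{thm:sam-perf-gen} with failure-probability parameter $\hat\beta:=\beta^2$ (the squaring is there to absorb the loss from the Markov step below). This yields that $\bA$ is $(\beta^2,\hat\eps,\hat\delta)$-sample perfectly generalizing with $\hat\eps=O(\eps\sqrt{n\log(\nicefrac{1}{\beta^2\delta})})=O(\eps\sqrt{n\log(\nicefrac{1}{\beta\delta})})$ and $\hat\delta=O(n\delta)$; that is, $\Pr_{\bx,\bx'\sim\cD^n}[\bA(\bx)\approx_{\hat\eps,\hat\delta}\bA(\bx')]\geq 1-\beta^2$.

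Next, by Markov over $\bx$, for all but a $\beta$-fraction of $\bx\sim\cD^n$ (the \emph{good} $\bx$) we have $\Pr_{\bx'\sim\cD^n}[\bA(\bx)\napprox_{\hat\eps,\hat\delta}\bA(\bx')]\leq\beta$. Fix a good $\bx$. Using convexity of the hockey-stick divergence in its second argument, $\hsd{\hat\eps}{\bA(\bx)}{\SIMD}=\hsd{\hat\eps}{\bA(\bx)}{\E_{\bx'}\bA(\bx')}\leq\E_{\bx'}\insquare{\hsd{\hat\eps}{\bA(\bx)}{\bA(\bx')}}\leq\hat\delta+\beta$, since the integrand is at most $\hat\delta$ with probability $\geq 1-\beta$ over $\bx'$ and at most $1$ always; the symmetric bound $\hsd{\hat\eps}{\SIMD}{\bA(\bx)}\leq\hat\delta+\beta$ follows identically. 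Hence every good $\bx$ satisfies $\bA(\bx)\approx_{\hat\eps,\hat\delta+\beta}\SIMD$, so $\bA$ is $(\beta,\hat\eps,\hat\delta+\beta)$-perfectly generalizing with the stated choice of $\SIMD$. Setting $\eps':=\hat\eps$ and folding the additive $\sqrt{\hat\beta}=\beta$ contribution into $\delta':=O(n\delta)$ gives the claimed parameters. (Equivalently, one may simply cite the stated equivalence lemma of~\cite{CummingsLNRW16,stable23} with $\beta\mapsto\beta^2$ in place of carrying out this Markov-plus-convexity argument.)

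The main point to check is parameter bookkeeping rather than any new idea: one must verify that replacing $\beta$ by $\beta^2$ inside Theorem~\ref{thm:sam-perf-gen} changes the logarithmic factor only by a constant (so $\eps'$ stays $O(\eps\sqrt{n\log(\nicefrac{1}{\beta\delta})})$, using also $\hat\delta=O(n\delta)$ so that $\log(\nicefrac{1}{\hat\delta})=O(\log(\nicefrac1\delta))$), and that the extra additive $\beta$ in $\delta'$ is dominated by $O(n\delta)$ in the regimes of interest. The substantive work — Theorem~\ref{thm:sam-perf-gen} itself, with its robust McDiarmid inequality and the Talagrand-based boosting of the failure probability — has already been done; everything here is just the passage from the sample to the non-sample version of perfect generalization, which is the only reason Theorem~\ref{thm:perf-gen} needs a separate statement at all.
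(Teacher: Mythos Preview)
Your proposal is correct and mirrors the paper's argument exactly: invoke Theorem~\ref{thm:sam-perf-gen} with the failure probability squared, then apply the equivalence lemma (which you unfold as Markov plus convexity of the hockey-stick divergence) to pass from sample perfect generalization to perfect generalization with the averaged simulator $\SIMD$. The one loose end you flag---absorbing the additive $\beta$ into $\delta'=O(n\delta)$---is the same loose end present in the paper's one-line derivation via the cited equivalence lemma, so your bookkeeping matches theirs.
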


We remark that Bun et al. \cite{stable23} showed that any $(\eps, \delta)$-DP algorithm can be ``repackaged'' to an $(\beta, \eps', \delta')$-perfectly generalizing algorithm with similar parameters as our theorem above. However, unlike our result, their proof does \emph{not} show that the original algorithm is perfectly generalizing.

\subsection{\boldmath Achieving the Same \texorpdfstring{$\eps$}{eps} via Subsampling: Proof of \texorpdfstring{\Cref{thm:apx-dp-main-same-eps}}{Theorem~\ref{thm:apx-dp-main-same-eps}}}

\begin{proof}[Proof of \Cref{thm:apx-dp-main-same-eps}]
Let $\eps'' = \frac{\eps^2}{\log(1/\delta) \sqrt{m \log(m/\delta)}}$.
We claim that $n_1\left(\eps'', \delta'\right) \leq n_1(\eps, \delta') \cdot O(\eps/\eps'')$; the lemma follows from this claim by simply plugging this bound into \Cref{thm:apx-dp-main}.

To see that this claim holds, by definition of $n_1$, there exists an $(\eps, \delta')$-item-level DP algorithm $\bA_0$ that is $\gamma$-useful for $\fT$ with sample complexity $n_1(\eps, \delta')$. Let $N \in \BN$ be a smallest number such that
\begin{align*}
\left({e^{\eps} - 1}\right) \frac{n_1(\eps, \delta')}{N} \leq {e^{\eps''} - 1}.
\end{align*}
Notice that we have $N \leq n_1(\eps, \delta') \cdot O(\eps/\eps'')$. Let $\bA$ be an algorithm on $N$ samples defined as follows: randomly sample (without replacement) $n_1(\eps, \delta')$ out of $N$ samples, then run $\bA_0$ on this subsample. By the amplification-by-subsampling theorem (\Cref{thm:amp-by-subsampling}), we have that $\bA$ is $(\eps'', \delta')$-item-level DP. It is also obvious that the output distribution of this algorithm is the same as that of $\bA_0$ on $n_1(\eps, \delta')$ examples, and this algorithm is thus also $\gamma$-useful for $\fT$. Thus, we have $n_1\left(\eps'', \delta'\right) \leq N \leq n_1(\eps, \delta') \cdot O(\eps/\eps'')$ as desired.
\end{proof}

\subsection{Proof of \texorpdfstring{\Cref{lem:low-prob}}{Lemma~\ref{lem:low-prob}}}
\label{app:proof-sam-perf-gen-low-prob}

\subsubsection{``Robust'' McDiarmid Inequality for Almost-Multiplicative-Lipschitz function}

We derive the following concentration inequality, which may be viewed as a strengthened McDiarmid’s inequality (\Cref{thm:mcdiarmid}) for the multiplicative-Lipschitz case. To state this, let us define an additional notation: for $\bx \in \MZ^n, \tx_i \in \MZ$, we write $\tbx^{(i)}$ to denote $\bx$ but with $x_i$ replaced by $\tx_i$.

\begin{lemma} \label{lem:func-concen-with-err}
Let $f: \MZ^n \to \R_{\geq 0}$  be any function and $\cD$ any distribution on $\MZ$. Assume that $\eps, \delta > 0$, and $\eps\sqrt{n \log(\nicefrac{1}{\delta})} > 0$ are sufficiently small. Then, for  $\eps' = O(\eps\sqrt{n \log(\nicefrac{1}{\delta})})$, we have
\begin{align*}
\E_{\bx, \bx' \sim \cD^n}[[f(\bx) - e^{\eps'} \cdot f(\bx')]_+] \leq O(\mu_{\cD^n}(f) \cdot \delta) + O\left(\sum_{i=1}^n \E_{\bx, \tbx \sim \cD^n}\left[f(\bx) - e^\eps \cdot f(\tbx^{(i)})\right]_+\right).
\end{align*}
\end{lemma}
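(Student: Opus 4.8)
The plan is to reduce the approximate-Lipschitz case to the exact multiplicative-Lipschitz case by ``repairing'' the function $f$ at the bad pairs of neighbors, then apply a standard Azuma/Doob-martingale argument to the repaired function, and finally account for the repair cost. The key observation is that the claimed inequality is an $L_1$-type bound on the hockey-stick-like quantity $[f(\bx) - e^{\eps'} f(\bx')]_+$ averaged over independent $\bx, \bx'$, and the right-hand side already contains exactly the sum over coordinates $i$ of the ``local violation'' $\E[f(\bx) - e^\eps f(\tbx^{(i)})]_+$. So intuitively we want to show: if $f$ were genuinely $\eps$-multiplicative-Lipschitz (no violations at all), then $\E[f(\bx) - e^{\eps'} f(\bx')]_+ \le O(\mu \cdot \delta)$ for $\eps' = O(\eps \sqrt{n \log(1/\delta)})$, and then handle general $f$ by a perturbation/repair argument paying the violation terms.

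First I would handle the exact multiplicative-Lipschitz case. Working with $g := \ln f$ (on the support where $f>0$; one has to be slightly careful when $f$ can be zero, but since changing a coordinate multiplies $f$ by a factor in $[e^{-\eps}, e^\eps]$, the zero set is a product-ish condition that can be treated separately or by a limiting argument with $f + \eta$), $g$ is an $\eps$-bounded-difference function. Rather than applying McDiarmid to $g$ directly (which controls $|g(\bx) - \mu(g)|$ but not the exponential moment we need), I would set up the Doob martingale $C_i = \E[\ln f(\bx) \mid x_1, \dots, x_i] - \E[\ln f(\bx) \mid x_1, \dots, x_{i-1}]$, so that $\sum_i C_i = \ln f(\bx) - \mu_{\cD^n}(\ln f)$, each $|C_i| \le \eps$, and $\E[C_i \mid C_1, \dots, C_{i-1}] = 0$. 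Then \Cref{thm:azuma} (with $\alpha = \eps$, $\gamma = 0$) gives $\Pr[|\ln f(\bx) - \mu(\ln f)| > \eps \sqrt{nG}] \le 2 e^{-G/2}$. Pairing two independent copies $\bx, \bx'$, we get that with probability $\ge 1 - 4 e^{-G/2}$ both $\ln f(\bx)$ and $\ln f(\bx')$ lie within $\eps\sqrt{nG}$ of the common mean $\mu(\ln f)$, hence $f(\bx) \le e^{2\eps\sqrt{nG}} f(\bx')$. Integrating the tail: $\E[f(\bx) - e^{\eps'} f(\bx')]_+ = \int_0^\infty \Pr[f(\bx) - e^{\eps'} f(\bx') > s]\, ds$, and choosing $\eps' = 2\eps\sqrt{n G_0}$ with $G_0 = \Theta(\log(1/\delta))$ makes the ``typical'' event contribute nothing, while the tail events where $\ln f(\bx)$ deviates by more than $\eps\sqrt{nG}$ for $G > G_0$ contribute at most something like $\mu(f) \cdot \sum_{G > G_0} e^{-G/2} \cdot (\text{size of excess})$, which after a routine computation is $O(\mu(f)\cdot \delta)$ provided $\eps\sqrt{n\log(1/\delta)}$ is small enough that the $e^{\eps\sqrt{nG}}$ growth is dominated by the $e^{-G/2}$ decay; here one uses $\mu_{\cD^n}(f) \ge e^{\mu(\ln f)}$ by Jensen (or rather handles the integral directly against $f(\bx')$ whose expectation is $\mu(f)$).

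Next I would handle general $f$ via repair. Define $\tilde f$ to be a modification of $f$ that is exactly $\eps$-multiplicative-Lipschitz and satisfies $\tilde f \le f$ pointwise, with $\E_{\bx,\tilde\bx}[f - e^\eps \tilde f(\tilde\bx^{(i)})]_+$ controlling how much mass is removed along coordinate $i$ — concretely one can define $\tilde f(\bx) = \inf_{\by} e^{\eps \|\bx - \by\|_0} f(\by)$ (an ``infimal convolution'' that is automatically $\eps$-multiplicative-Lipschitz and $\le f$), or do a more hands-on per-coordinate truncation. The point is that $f(\bx) - e^{\eps'}f(\bx') \le (\tilde f(\bx) - e^{\eps'}\tilde f(\bx')) + (f(\bx) - \tilde f(\bx))$, so $\E[f(\bx)-e^{\eps'}f(\bx')]_+ \le \E[\tilde f(\bx) - e^{\eps'}\tilde f(\bx')]_+ + \E[f(\bx) - \tilde f(\bx)]$. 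The first term is $O(\mu_{\cD^n}(\tilde f)\cdot\delta) \le O(\mu_{\cD^n}(f)\cdot\delta)$ by the Lipschitz case. The second term, $\E[f - \tilde f]$, must be bounded by $O(\sum_i \E[f(\bx) - e^\eps f(\tbx^{(i)})]_+)$; this is where I would need a telescoping/coupling argument showing that the total mass removed by repairing is, coordinate by coordinate, at most the sum of the one-step violation terms — essentially bounding $\E[f(\bx) - e^{\eps\|\bx-\by\|_0}f(\by)]_+$ by a sum along a path from $\bx$ to $\by$ flipping one coordinate at a time, using independence of coordinates to turn a path violation into a sum of the $\E[f(\bx) - e^\eps f(\tbx^{(i)})]_+$ quantities.

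The main obstacle I expect is the repair step — making precise that the cumulative cost of turning $f$ into an exactly multiplicative-Lipschitz function is controlled by the sum of the \emph{one-step} expected violations $\sum_i \E[f(\bx) - e^\eps f(\tbx^{(i)})]_+$, rather than by something larger involving multi-step deviations. The subtlety is that a single bad edge can cause a large discrepancy between far-apart points after composing many steps, so one has to be careful that the infimal-convolution (or iterative truncation) doesn't ``amplify'' violations; I would expect to need that the repair can be done one coordinate at a time with each step's cost bounded by the corresponding expected one-step violation (using that $\cD^n$ is a product measure, so conditioning on the other coordinates and averaging is legitimate), and that the steps compose additively rather than multiplicatively. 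A secondary technical nuisance is the case $f(\bx) = 0$, which breaks the $\ln f$ reduction; this is handled either by replacing $f$ with $f + \eta$ and taking $\eta \to 0$, or by noting the support of $\{f > 0\}$ behaves well under single-coordinate changes.
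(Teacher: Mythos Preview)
Your high-level plan---repair $f$ to an exactly multiplicative-Lipschitz function, apply an Azuma/Doob argument to the repaired function, and pay the repair cost in the violation terms---is exactly the paper's strategy, and your Step~1 (the exact-Lipschitz case) is essentially the paper's \Cref{lem:concen-martingale-bounded-ratio}. The gap is in the repair: the infimal convolution $\tilde f(\bx) = \inf_{\by} e^{\eps\|\bx-\by\|_0} f(\by)$ is too aggressive, and $\E[f-\tilde f]$ is \emph{not} controlled by $\sum_i \E[f(\bx)-e^\eps f(\tbx^{(i)})]_+$. A concrete counterexample: take $\cD$ uniform on $\{0,1\}$, $f\equiv 1$ except $f(\mathbf{0})=0$. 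Then $\tilde f\equiv 0$, so $\E[f-\tilde f]=1-2^{-n}$, whereas each one-step violation term $\E_{\bx,\tbx}[f(\bx)-e^\eps f(\tbx^{(i)})]_+$ equals $2^{-n}$ (the event $\tbx^{(i)}=\mathbf{0}$), giving a sum of only $n\cdot 2^{-n}$. Your telescoping-along-a-path idea does not rescue this: the sup over $\by$ (hence over paths) does not commute with the expectation that produces the one-step violation averages, and a single zero of $f$ propagates globally under the inf-convolution while remaining invisible to the averaged one-step quantities.

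The paper avoids this by repairing not $f$ but its \emph{Doob suffix-average martingale}. Extend $f$ to prefixes via $f(x_1,\dots,x_i):=\E_{X_{i+1},\dots,X_n}[f(x_1,\dots,x_i,X_{i+1},\dots,X_n)]$, and build $g$ level by level: $g(\emptyset)=\mu$, and for each fixed prefix $(x_1,\dots,x_{i-1})$ apply a \emph{scalar} correction (\Cref{lem:rv-correction-scalar}: clip $Z=f(x_1,\dots,x_{i-1},X_i)$ to $[e^{-\eps}\mu^*, e^\eps\mu^*]$ and re-center to mean $\mu^*=g(x_1,\dots,x_{i-1})$) to get $g(x_1,\dots,x_i)$. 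This makes $Y_i=g(x_1,\dots,x_i)$ a genuine bounded-ratio martingale, and---crucially---the repair cost at level $i$ is bounded (via convexity of $[\cdot]_+$ applied to the suffix expectation) by the one-step violation $\E[f(\bx)-e^\eps f(\tbx^{(i)})]_+$, so the total telescopes to the desired sum. Note $g$ need not satisfy $g\le f$, so the final decomposition bounds $|f-g|$ in both directions (\eqref{eq:propagation-diff}), not just $f-\tilde f$ as in your one-sided splitting. Your vague alternative ``per-coordinate truncation'' is in the right direction, but the missing idea is that the truncation must be done on the conditional-expectation filtration rather than on $f$ pointwise.
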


When the second term in the RHS (i.e., $O\left(\sum_{i=1}^n \E_{\bx, \tbx \sim \cD^n}\left[f(\bx) - e^\eps \cdot f(\tbx^{(i)})\right]_+\right)$) is zero, our lemma is (essentially) the same as McDiarmid’s inequality for multiplicative-Lipschitz functions, as proved, e.g., in \cite{CummingsLNRW16}. 
However, our bound is more robust, in the sense that it can handle the case where the multiplicative-Lipschitzness condition fails sometimes; this is crucial for the proof of \Cref{lem:low-prob} as we only assume approximate-DP.

We note that, while there are (additive) McDiarmid’s inequalities that does not require Lipschitzness everywhere~\cite{kutin2002extensions,Warnke16}, we are not aware of any version that works for us due to the regime of parameters we are in and the assumptions we have. This is why we prove \Cref{lem:func-concen-with-err} from first principles.
The remainder of this subsection is devoted to the proof of \Cref{lem:func-concen-with-err}.

\paragraph{Scalar Random Variable Adjustment for Bounded Ratio.}
For the rest of this section, we write $z \approx_{\eps} z'$ where $z, z' \in \R_{\geq 0}$ to indicate that $z \in [e^{-\eps} z', e^{\eps} z']$. We extend the notion similarly to $\R_{\geq 0}$-valued random variables: for a $\R_{\geq 0}$-valued random variable $Z$ and $z \in \R_{\geq 0}$, we write $Z \approx_{\eps} z'$ if $z \approx_{\eps} z'$ for each $z \in \supp(Z)$. Furthermore, for (possibly dependent) $\R_{\geq 0}$-valued random variables $Z, Z'$, we write $Z \approx_\eps Z'$ to denote $z \approx_\eps z'$ for all $(z, z') \in \supp((Z, Z'))$.

Suppose $Z$ is a $\R_{\geq 0}$-valued random variable such that any two values $z, \tz \in \supp(Z)$ satisfies $z \approx_\eps \tz$. Then, we also immediately have that their mean $\mu$ is in this range and thus $Z \approx_\eps \mu$. We start by showing a ``robust'' version of this statement, which asserts that, even if the former condition fails sometimes, we can still ``move'' the random variable a little bit so that the second condition holds (albeit with weaker $2\eps$ bound). The exact statement is presented below.

\begin{lemma} \label{lem:rv-correction-scalar}
Let $\eps, \mu^* \geq 0$. Let $Z$ be any $\R_{\geq 0}$-valued random variable and $\mu := \E[Z]$. Then, there exists a random variable $Z'$, which is a post-processing of $Z$, such that
\begin{enumerate}[(i),leftmargin=*]
\item $Z' \approx_{2\eps} \mu^*$.
\item $\E[Z'] = \mu^*$.
\item $\E_{(Z, Z')}[|Z - Z'|] \leq |\mu - \mu^*| + 2 (1 + e^{-\eps}) \cdot \E_{Z, \tZ}\left[\left[Z - e^{\eps} \cdot \tZ\right]_+\right]$,
where $(Z, Z')$ is the canonical coupling between $Z, Z'$ (from post-processing) and $\tZ$ is an i.i.d. copy of $Z$.
\end{enumerate}
\end{lemma}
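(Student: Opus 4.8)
The plan is to build $Z'$ by \emph{clipping} $Z$ to a small multiplicative window and then \emph{rescaling} to fix the mean, after first reducing to the case $\mu^* = \mu$. Write $q := \E_{Z,\tZ}[[Z - e^\eps\tZ]_+]$ for the quantity on the right-hand side of (iii). First I would dispose of the degenerate case $\mu = 0$: then $Z = 0$ almost surely, and the constant $Z' \equiv \mu^*$ satisfies (i)--(iii) trivially, so I may assume $\mu > 0$. Next I would show that it suffices to construct a post-processing $Z''$ of $Z$ with $Z'' \approx_{2\eps}\mu$, $\E[Z''] = \mu$, and $\E[|Z - Z''|] \le 2(1+e^{-\eps})q$: taking $Z' := (\mu^*/\mu)\,Z''$ gives (i) and (ii) at once, and since $\E[|Z'' - Z'|] = |1 - \mu^*/\mu|\,\E[Z''] = |\mu - \mu^*|$, the triangle inequality yields (iii). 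So I may assume $\mu^* = \mu$.

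For this core step I would set $Y := \clip_{e^{-\eps}\mu,\,e^{\eps}\mu}(Z)$, let $\mu_Y := \E[Y]$, and define $Z'' := (\mu/\mu_Y)\,Y$ (well-defined since $\mu_Y \ge e^{-\eps}\mu > 0$). Since $Y$ lies in $[e^{-\eps}\mu, e^{\eps}\mu]$ pointwise, so does $\mu_Y$, hence the scaling factor $\mu/\mu_Y \in [e^{-\eps}, e^{\eps}]$ and $Z'' \in [e^{-2\eps}\mu, e^{2\eps}\mu]$, which is (i); (ii) is immediate from the choice of scaling. It then remains to bound $\E[|Z - Z''|] \le \E[|Z - Y|] + \E[|Y - Z''|]$ by $2(1+e^{-\eps})q$.

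The main work --- and the step I expect to be the obstacle --- is bounding the two clipping tails by $q$. The upper tail is easy: $t \mapsto [Z - e^\eps t]_+$ is convex, so Jensen gives $\E_{\tZ}[[Z - e^\eps\tZ]_+] \ge [Z - e^\eps\mu]_+$, and averaging over $Z$ yields $\E[(Z - e^\eps\mu)_+] \le q$. The lower tail is subtler and does \emph{not} follow by applying Jensen directly to the definition of $q$; instead I would first use the symmetry $q = \E_{Z,\tZ}[[\tZ - e^\eps Z]_+]$ (relabel the i.i.d. copies), then apply Jensen to the convex map $t \mapsto [t - e^\eps Z]_+$ to get $\E[(\mu - e^\eps Z)_+] \le q$, i.e. $\E[(e^{-\eps}\mu - Z)_+] \le e^{-\eps}q$. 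Combining, $\E[|Z - Y|] = \E[(Z - e^\eps\mu)_+] + \E[(e^{-\eps}\mu - Z)_+] \le (1 + e^{-\eps})q$, and from $\mu_Y - \mu = \E[(e^{-\eps}\mu - Z)_+] - \E[(Z - e^\eps\mu)_+]$ also $|\mu_Y - \mu| \le q$; since $\E[|Y - Z''|] = |1 - \mu/\mu_Y|\,\mu_Y = |\mu_Y - \mu|$, this gives $\E[|Z - Z''|] \le (1 + e^{-\eps})q + q \le 2(1+e^{-\eps})q$, finishing the core step and hence the lemma. Apart from this lower-tail estimate, the only other point needing care is the $\mu = 0$ case, which is why it is handled at the outset.
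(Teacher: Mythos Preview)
Your argument is correct, and it differs from the paper's in a pleasant way. Both proofs start by clipping $Z$ to the window $[e^{-\eps}\mu,\,e^{\eps}\mu]$ and establishing the same key estimate $\E[|Z-\clip(Z)|]\le(1+e^{-\eps})q$ via convexity of $[\cdot]_+$; your upper- and lower-tail bounds via Jensen (using the i.i.d.\ symmetry $q=\E[[\tZ-e^\eps Z]_+]$ for the lower tail) match the paper's computation exactly. Where the approaches diverge is in how the mean is then corrected to $\mu^*$. The paper does a case split on whether $\mu^*\approx_\eps\mu$: if not, it takes $Z'\equiv\mu^*$; if so, it invokes the intermediate value theorem to find a second clipping threshold $\tau^*\in[e^{-\eps}\mu,e^{\eps}\mu]$ with $\E[\clip_{e^{-\eps}\mu,\tau^*}(\hat Z)]=\mu^*$. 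You instead first rescale by $\mu^*/\mu$ to reduce to $\mu^*=\mu$, and then correct the clipped mean by a second multiplicative rescaling $(\mu/\mu_Y)Y$. Your route is more uniform: it avoids the case analysis and the continuity argument entirely, at no cost in the final constants. The paper's route has the minor advantage that in the ``close'' case its $Z'$ actually lives in the narrower window $[e^{-\eps}\mu,e^{\eps}\mu]$, but this is not used downstream.
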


\begin{proof}
Let $\ell = e^{-\eps} \cdot \mu, r = e^{\eps} \cdot \mu$. We start by defining $\hZ := \clip_{\ell, r}(Z)$. This also gives a canonical coupling between $\hZ$ and $Z$, which then yields
\begin{align}
\E_{(Z, \hZ)}[|Z - \hZ|] 
&= \E_{Z}\left[|Z - \clip_{\ell, r}(Z)|\right] \nonumber \\
&= \E_{Z}\left[[Z - r]_+ + [\ell - Z]_+\right] \nonumber \\
&= \E_{Z}\left[\left[Z - e^{\eps} \cdot \mu\right]_+\right] + \E_{Z}\left[\left[e^{-\eps} \cdot \mu - Z\right]_+\right] \nonumber \\
&= \E_{Z}\left[\left[Z - e^{\eps} \cdot \E_{\tZ}[\tZ]\right]_+\right] + \E_{Z}\left[\left[e^{-\eps} \cdot \E_{\tZ}[\tZ] - Z\right]_+\right] \nonumber \\
&\leq \E_{Z, \tZ}\left[\left[Z - e^{\eps} \cdot \tZ\right]_+\right] + \E_{Z, \tZ}\left[\left[e^{-\eps} \cdot \tZ - Z\right]_+\right] \nonumber \\
&= (1 + e^{-\eps}) \cdot \E_{Z, \tZ}\left[\left[Z - e^{\eps} \tZ\right]_+\right], \label{eq:bounding-diff}
\end{align}
where the inequality follows from the convexity of $[\cdot]_+$.

Define $\hmu := \E_{\hZ}[\hZ]$. Note that we have
\begin{align} \label{eq:mean-bounding-diff}
|\hmu - \mu| \leq \E_{(Z, \hZ)}[|Z - \hZ|].
\end{align}
Next, consider two cases based on $\mu, \mu^*$.
\begin{itemize}[leftmargin=*]
\item Case I: $\mu > e^{\eps} \cdot \mu^*$ or $\mu < e^{-\eps} \cdot \mu^*$. We assume w.l.o.g. that $\mu > e^{\eps} \cdot \mu^*$; the case $\mu < e^{-\eps} \cdot \mu^*$ can be argued similarly. In this case, simply set $Z' = \mu^*$ always. Obviously, items (i) and (ii) hold. Moreover, we have
\begin{align*}
\E_{(Z, Z')}[|Z - Z'|] = \E_{Z}[|Z - \mu^*|] 
&\leq \E_{(Z, \hZ)}[|\hZ - \mu^*| + |\hZ - Z|] \\
&\overset{(\star)}{=} \E_{\hZ}[\hZ - \mu^*] + \E_{(Z, \hZ)}[|\hZ - Z|] \\
&= \hmu - \mu^* + \E_{(Z, \hZ)}[|\hZ - Z|] \\
&\overset{\eqref{eq:mean-bounding-diff}}{\leq} \mu - \mu^* + 2\E_{(Z, \hZ)}[|\hZ - Z|],
\end{align*}
where $(\star)$ follows from $\hZ \geq e^{-\eps} \mu \geq \mu^*$. Combining the above inequality with \eqref{eq:bounding-diff} yields the desired bound in item (iii).
\item Case II: $\mu^* \approx_\eps \mu$. In this case, we already have $\supp(\hZ) \subseteq [e^{-\eps} \cdot \mu, e^{\eps} \cdot \mu] \subseteq [e^{-2\eps} \mu^*, e^{2\eps} \mu^*]$. We assume w.l.o.g. that $\hmu \ge \mu^*$; the case $\hmu \le \mu^*$ can be handled similarly.
In this case, let $f: [\ell, r] \to \R$ be defined by $f(\tau) := \E_{\hZ}[\clip_{\ell, \tau}(\hZ)]$. Notice that $f$ is continuous, $f(\ell) = \ell \leq \mu^*$ and $f(r) = \hmu \geq \mu^*$. Thus, there exists $\tau^*$ such that $f(\tau^*) = \mu^*$. We then define $Z'$ by $Z' := \clip_{\ell, \tau^*}(\hZ).$ This immediately satisfies items (i) and (ii). Furthermore, this gives a natural coupling between $Z', \hZ$ (and also $Z$) such that
\begin{align*}
\E_{(\hZ, Z')}[|\hZ - Z'|] = \E_{(\hZ, Z')}[\hZ - Z'] = \hmu - \mu^* ~\le~ |\mu - \mu^*| + |\hmu - \mu|.
\end{align*}
Combining this with \eqref{eq:bounding-diff} and \eqref{eq:mean-bounding-diff} yields the desired bound in item (iii). \qedhere
\end{itemize}
\end{proof}

\paragraph{Concentration for Bounded-Ratio Martingales.}

Since we will be applying \Cref{thm:azuma} on the logarithm of the ratios of random variables, it is crucial to bound its expectation. We present a bound below. Note that similar statements have been used before in DP literature; we present the (simple) proof here for completeness.

\begin{lemma} \label{lem:rv-expectation-bound}
Let $Z$ be any $\R_{> 0}$-valued random variable such that $\E[Z] = 1$. For any sufficiently small $\eps > 0$, if $Z \approx_{\eps} 1$, then $-8 \eps^2 \leq \E\left[\ln Z\right] \leq 0$.
\end{lemma}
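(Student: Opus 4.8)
The plan is to prove the bound $-8\eps^2 \le \E[\ln Z] \le 0$ by treating the two inequalities separately, using concavity of $\ln$ for the upper bound and a careful second-order expansion for the lower bound.

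\textbf{Upper bound.} Since $\ln$ is concave, Jensen's inequality immediately gives $\E[\ln Z] \le \ln \E[Z] = \ln 1 = 0$. This requires no smallness assumption on $\eps$.

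\textbf{Lower bound.} The idea is to exploit that $Z \approx_\eps 1$ means $Z \in [e^{-\eps}, e^\eps]$, so $\ln Z$ takes values in a small interval $[-\eps, \eps]$ around $0$. Write $W := \ln Z$, so $W \in [-\eps,\eps]$. The natural move is to use a quadratic lower bound on $e^w$ valid on this interval: since $e^w \le 1 + w + c w^2$ for $|w| \le \eps$ with some absolute constant $c$ (for sufficiently small $\eps$ one can take, say, $c \le 1$, since $e^w = 1 + w + w^2/2 + \cdots$ and the tail is controlled on a bounded interval), we get $1 = \E[Z] = \E[e^W] \le 1 + \E[W] + c\,\E[W^2]$, hence $\E[W] \ge -c\,\E[W^2]$. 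Now bound $\E[W^2] \le \eps^2$ since $|W| \le \eps$ pointwise. Combining, $\E[\ln Z] = \E[W] \ge -c\eps^2$, and choosing the constant $c$ small enough (tracking the precise Taylor remainder on $[-\eps,\eps]$ for $\eps$ sufficiently small shows $c \le 8$ comfortably suffices, in fact any constant slightly above $1/2$ works) yields $\E[\ln Z] \ge -8\eps^2$.

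\textbf{Main obstacle.} There is no real obstacle here; this is a routine estimate. The only mild care needed is in pinning down the inequality $e^w \le 1 + w + c w^2$ on $[-\eps,\eps]$ with an explicit constant $c$ and verifying that, for $\eps$ below the implicit absolute constant in the hypothesis, $c$ can be absorbed into the stated constant $8$. One clean way: for $|w| \le \eps \le 1$, $e^w - 1 - w = \sum_{k \ge 2} w^k/k! \le \tfrac{w^2}{2}\sum_{k\ge 0} \eps^k / 1 \le \tfrac{w^2}{2} \cdot \tfrac{1}{1-\eps} \le w^2$, so $c = 1$ works, giving $\E[\ln Z] \ge -\eps^2 \ge -8\eps^2$. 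I would present it with this explicit chain so the constant is transparent.
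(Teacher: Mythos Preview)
Your proof is correct and follows essentially the same approach as the paper: both use Jensen for the upper bound and a second-order Taylor estimate combined with $\E[Z]=1$ for the lower bound. The only cosmetic difference is that the paper expands $\ln x$ around $x=1$ (using $\ln x \ge x-1-2(x-1)^2$ on $[e^{-\eps},e^{\eps}]$, giving $\E[\ln Z]\ge -2\E[(Z-1)^2]\ge -2(e^\eps-1)^2\ge -8\eps^2$), whereas you expand $e^w$ around $w=0$; your route even yields the sharper constant $-\eps^2$.
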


\begin{proof}
By the concavity of $\ln(\cdot)$, we get $\E\left[\ln Z\right] \leq \ln \E[Z] = 0$.

Note also that for $\eps \leq 1$, we have $\ln(x) \geq x - 1 - 2(x-1)^2$ for all $x \in [e^{-\eps}, e^\eps]$. This implies that
\begin{align*}
\E[\ln Z] \geq \E[Z - 1 - 2(Z-1)^2] = -2\E[(Z-1)^2] \geq -2(e^\eps-1)^2 \geq -8\eps^2. & & \qedhere
\end{align*}
\end{proof}

We are now ready to give a concentration inequality for martingales with bounded ratios. Note that the lemma below is stated in an ``expected deviation from $[e^{-\eps'} \cdot \mu, e^{\eps'} \cdot \mu]$'' form since it will be more convenient to use in a subsequent step.

\begin{lemma} \label{lem:concen-martingale-bounded-ratio}
Assume that $\eps, \delta > 0$, and $\eps\sqrt{n \log(\nicefrac{1}{\delta})} > 0$ are sufficiently small. Let $Y_1, \dots, Y_n$ be a $\R_{\geq 0}$-valued martingale such that $Y_i \approx_{\eps} Y_{i - 1}$ almost surely, and $\mu := \E[Y_1]$. Then, we have
\begin{align*}
\E[[Y_n - e^{\eps'} \cdot \mu]_+ + [\mu - e^{\eps'} \cdot Y_n]_+] \leq O(\mu \cdot \delta),
\end{align*}
where $\eps' = O(\eps \sqrt{n \log(1/\delta)})$.
\end{lemma}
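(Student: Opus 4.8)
The plan is to reduce the statement to a martingale concentration argument in the \emph{logarithmic} scale. Write $Y_n = \mu \cdot \prod_{i=2}^n R_i$ where $R_i := Y_i / Y_{i-1}$; by the martingale hypothesis each $R_i \approx_\eps 1$, and since $\E[Y_i \mid Y_1, \dots, Y_{i-1}] = Y_{i-1}$ we have $\E[R_i \mid Y_1, \dots, Y_{i-1}] = 1$. Apply \Cref{lem:rv-expectation-bound} conditionally to obtain $-8\eps^2 \le \E[\ln R_i \mid Y_1, \dots, Y_{i-1}] \le 0$. Setting $C_i := \ln R_i$ (and $C_1 := 0$, say), the sequence $C_2, \dots, C_n$ satisfies the hypotheses of \Cref{thm:azuma} with $\alpha = \eps$ and $\gamma = -8\eps^2$: indeed $|C_i| \le \eps$ almost surely and the conditional means lie in $[-8\eps^2, 0]$. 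Then $\sum_i C_i = \ln(Y_n/\mu)$, so \Cref{thm:azuma} with $G = \Theta(\log(1/\delta))$ gives, with probability at least $1 - 2\exp(-G/2) \ge 1 - \delta^{\Omega(1)}$, that
\begin{align*}
- 8 n\eps^2 - \eps\sqrt{nG} \;\le\; \ln(Y_n/\mu) \;\le\; \eps\sqrt{nG}.
\end{align*}
Since $\eps\sqrt{n\log(1/\delta)}$ is assumed sufficiently small, $n\eps^2 = (\eps\sqrt{n})^2 \le \eps\sqrt{n}\cdot\eps\sqrt{n} \ll \eps\sqrt{nG}$, so both sides are $O(\eps\sqrt{n\log(1/\delta)}) =: \eps'$ in absolute value. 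Hence $Y_n \in [e^{-\eps'}\mu, e^{\eps'}\mu]$ except with probability $\delta^{\Omega(1)}$.

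It remains to convert this high-probability two-sided bound into the claimed \emph{expectation} bound $\E[[Y_n - e^{\eps'}\mu]_+ + [\mu - e^{\eps'} Y_n]_+] \le O(\mu\delta)$. On the event where $Y_n \in [e^{-\eps'}\mu, e^{\eps'}\mu]$ (or merely $Y_n \le e^{\eps'}\mu$, resp. $Y_n \ge e^{-\eps'}\mu$) both clipped quantities vanish, so only the complementary event contributes. The difficulty is that $Y_n$ is only bounded within a single step-ratio band at a time ($Y_n \le e^{n\eps}\mu$ trivially, which is far too weak to multiply against a probability $\delta^{\Omega(1)}$), so a crude ``bad-probability times worst-case-value'' estimate does not immediately close. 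The fix is to choose the constant in $G = c\log(1/\delta)$ large enough and use a layered tail bound: for each threshold $k$, \Cref{thm:azuma} applied with $G_k = \Theta(k^2/(n\eps^2))$ (valid as long as $\eps\sqrt{nG_k}$ stays in the sufficiently-small regime, i.e.\ $k$ not too large) shows $\Pr[\,|\ln(Y_n/\mu)| > \eps' + k\,]$ decays like $\exp(-\Omega(k^2/(n\eps^2)))$; for $k$ beyond that regime we fall back on the deterministic bound $|\ln(Y_n/\mu)| \le (n-1)\eps$. Then
\begin{align*}
\E\big[[Y_n - e^{\eps'}\mu]_+\big] \;\le\; \mu \sum_{k \ge 1} (e^{\eps' + k} - e^{\eps' + k - 1}) \cdot \Pr[\ln(Y_n/\mu) > \eps' + k - 1],
\end{align*}
and the geometric growth $e^k$ is dominated by the Gaussian-type decay $\exp(-\Omega(k^2/(n\eps^2)))$ as soon as the constant in $G$ (hence in $\eps'$) is taken large enough, while the contribution from the deterministic-bound regime is at most $\mu e^{n\eps}\cdot 2\exp(-\Omega((n\eps)^2/(n\eps^2))) = \mu e^{n\eps}\exp(-\Omega(n))$, which is $O(\mu\delta)$ in the relevant parameter regime. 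Summing the series yields $O(\mu\delta)$; the symmetric term $[\mu - e^{\eps'}Y_n]_+$ is handled identically (it is bounded by $\mu$ pointwise, so the tail sum is even easier).

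The main obstacle I anticipate is precisely this last conversion: getting the expectation bound $O(\mu\delta)$ rather than merely $O(\mu\delta^{\Omega(1)})$ forces care in the constants — one must pick $G = c\log(1/\delta)$ with $c$ large (absorbing the $e^{\eps'}$ factor and the geometric series), and must verify that \Cref{thm:azuma} can be invoked at the larger deviation scales $\eps' + k$ without violating the ``sufficiently small'' hypothesis, falling back to group-privacy-style deterministic bounds (here just $Y_i \approx_\eps Y_{i-1}$ iterated) once $k$ exceeds that range. Everything else — the decomposition into step ratios, the conditional application of \Cref{lem:rv-expectation-bound}, and the invocation of \Cref{thm:azuma} — is routine.
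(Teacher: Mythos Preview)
Your approach is essentially the paper's: pass to log-ratios $C_i = \ln(Y_i/Y_{i-1})$, apply \Cref{lem:rv-expectation-bound} conditionally to get $-8\eps^2 \le \E[C_i \mid \cdot] \le 0$, invoke \Cref{thm:azuma}, and then convert the tail bound into the expectation bound. Two small points are worth flagging.

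First, you write $Y_n = \mu \cdot \prod_{i \ge 2} R_i$, which silently assumes $Y_1 = \mu$. That is not given; only $\E[Y_1] = \mu$ is. The paper handles this by adopting the convention $Y_0 := \mu$ and including $C_1 = \ln(Y_1/\mu)$ in the sum (the hypothesis $Y_i \approx_\eps Y_{i-1}$ is understood to include $i=1$, which is how the lemma is used downstream). With that fix your argument goes through unchanged.

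Second, your concern that \Cref{thm:azuma} might fail outside a ``sufficiently small'' regime, forcing a fallback to the deterministic bound $|\ln(Y_n/\mu)| \le n\eps$, is unnecessary: \Cref{thm:azuma} is stated for \emph{every} $G > 0$ with no smallness hypothesis. The paper exploits this to avoid your layered sum entirely. For the lower tail it simply uses $[\mu - e^{\eps'}Y_n]_+ \le \mu$ pointwise, so $\E[[\mu - e^{\eps'}Y_n]_+] \le \mu \cdot \Pr[Y_n < e^{-\eps'}\mu] \le \mu\delta$. For the upper tail it writes $\E[[Y_n - e^{\eps'}\mu]_+] = \mu \int_{\tau}^{\infty} \Pr[Y_n \ge e^{\eps\sqrt{nG}}\mu] \cdot \tfrac{\eps\sqrt{n}}{2\sqrt{G}} e^{\eps\sqrt{nG}}\, dG$ (change of variables $y = \mu e^{\eps\sqrt{nG}}$, with $\eps' = \eps\sqrt{n\tau}$, $\tau = 100\log(1/\delta)$), bounds the integrand by $2\exp(-G/4)$, and integrates to $O(\mu\delta)$. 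This is the continuous analogue of your discrete sum, but sidesteps all the regime bookkeeping.
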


\begin{proof}
Let $\tau = 100\log(1/\delta)$. Consider $C_1, \dots, C_n$ defined by $C_i := \ln(Y_i / Y_{i - 1})$, where we use the convention $Y_0 := \mu$. By our assumption, we have $|C_i| \leq \eps$. Furthermore, \Cref{lem:rv-expectation-bound} implies that $-8\eps^2 \leq \E[Y_i \mid  Y_{i-1}=y_{i-1}, \dots, Y_1=y_1] \leq 0$ for all $y_{i-1},\dots,y_1$. Let $\gamma := -8\eps^2$. Applying \Cref{thm:azuma}, the following holds for all $G > 0$:
\begin{align*}
\Pr\left[n\gamma - \eps \sqrt{n G} \leq \sum_{i \in [k]} C_i \leq \eps \sqrt{n G} \right] \geq 1 - 2\exp\left(-G / 2\right).
\end{align*}
This is equivalent to
\begin{align} \label{eq:concen-multiplicative-martingale}
\Pr\left[e^{n\gamma - \eps \sqrt{n G}} \cdot \mu \leq Y_k \leq e^{\eps \sqrt{n G}} \cdot \mu\right] \geq 1 - 2\exp\left(-G / 2\right).
\end{align}

Thus, we have%
\begin{align*}
\E\left[[\mu - e^{\eps \sqrt{n \cdot \tau}} \cdot Y_n]_+\right] \leq \mu \cdot \Pr[Y_n \leq e^{-\eps \sqrt{n \cdot \tau}} \cdot \mu] \leq \mu \cdot \Pr[Y_n \leq e^{n \gamma - \eps \sqrt{n \cdot \tau} / 2} \mu] \overset{\eqref{eq:concen-multiplicative-martingale}}{\leq} \mu \cdot \delta,
\end{align*}
where the second inequality follows from the assumption that $\eps\sqrt{n \log(\nicefrac{1}{\delta})}$ is sufficiently small.

Furthermore, we have
\begin{align*}
\E\left[[Y_n - e^{\eps \sqrt{n \cdot \tau}} \cdot \mu]_+\right] 
&= \int_{0}^{\infty} \Pr[Y_n - e^{\eps \sqrt{n \cdot \tau}} \cdot \mu \geq y] dy \\
&= \mu \cdot \int_{0}^{\infty} \Pr[Y_n - e^{\eps \sqrt{n \cdot \tau}} \cdot \mu \geq \mu \cdot y] dy \\
&= \mu \cdot \int_{e^{\eps \sqrt{n \cdot \tau}}}^{\infty} \Pr[Y_n \geq \mu \cdot y]~dy \\
&= \mu \cdot \int_{\tau}^{\infty} \Pr[Y_n \geq e^{\eps \sqrt{n G}} \cdot \mu] \cdot \frac{\sqrt{n} \eps}{2\sqrt{G}} \exp(\eps \sqrt{G n})~dG \\
&\overset{\eqref{eq:concen-multiplicative-martingale}}{\leq}  \mu \cdot \int_{\tau}^{\infty}2\exp(-G/2) \cdot \frac{\sqrt{n} \eps}{2\sqrt{G}} \exp(\eps \sqrt{G n})~dG \\
&\leq \mu \cdot \int_{\tau}^{\infty}2\exp(-G/4)~dG \\
&= \mu \cdot 8\exp(-\tau/4) \\
&\leq \mu \cdot \delta,
\end{align*}
where the second and third inequalities are from our choice of $\tau$.

Combining the previous two inequalities yields the desired bound for $\eps' = \eps\sqrt{n\tau}$.
\end{proof}

\paragraph{Putting Things Together: Proof of \Cref{lem:func-concen-with-err}.}

Our overall strategy is similar to the proof for the always-bounded case, which is to set up a martingale from suffix averages and then apply \Cref{lem:concen-martingale-bounded-ratio}. The main modification is that, instead of using this argument directly on $f$ (and its suffix averages), we use it on a different function, which is the result of applying \Cref{lem:rv-correction-scalar} to $f$.

\begin{proof}[Proof of \Cref{lem:func-concen-with-err}]
For convenience, let $\mu := \mu_{\cD^n}(f)$.
Let us first extend the function $f$ to include prefixes (i.e., $f: \MZ^{\leq n} \to \R_{\geq 0}$) naturally as follows:
\begin{align*}
f(x_1, \dots, x_i) = \E_{X_{i+1}, \dots, X_{n} \sim \cD}[f(x_1, \dots, x_i, X_{i + 1}, \dots, X_{n})].
\end{align*}
Next, we construct a function $g: \MZ^{\leq n} \to \R_{\geq 0}$ as follows:
\begin{itemize}[leftmargin=*]
\item $g(\emptyset) = f(\emptyset)$ (which is equal to $\mu$)
\item For $i = 1, \dots, n$:
\begin{itemize}
\item For all $(x_1, \dots, x_{i - 1}) \in \MZ^{i - 1}$:
\begin{itemize}
\item Apply \Cref{lem:rv-correction-scalar} with $\mu^* := g(x_1, \dots, x_{i - 1})$ and $Z = f(x_1, \dots, x_{i - 1}, X_i)$, where $X_i \sim \MD$ to get a random variable $Z'$.
\item The coupling between $(Z, Z')$ is naturally also a coupling between $(x_i, Z')$ for $x_i \in \MZ$. Let $g(x_1, \dots, x_i) = Z'$ based on this coupling for all $x_i \in \MZ$.
\end{itemize}
\end{itemize}
\end{itemize}
For brevity, let $\rho = 2(1 + e^{-\eps})$.
We have
\begin{align*}
&\E_{\bx \sim \cD^n}[|g(\bx) - f(\bx)|] \\
&= \E_{\bx \sim \cD^{n}} \E_{\tx_n \sim \cD^n}[|g(\bx_{\leq n - 1}, \tx_n) - f(\bx_{\leq n - 1}, \tx_n)|] \\
&\leq \E_{\bx \sim \cD^n} \left[|g(\bx_{\leq n - 1}) - f(\bx_{\leq n - 1})| + \rho \cdot \E_{\tx_n, \tx'_n \sim \cD}\left[f(\bx_{\leq n - 1}, \tx_n) - e^{\eps} \cdot f(\bx_{\leq n - 1}, \tx'_n)\right]_+\right] \\
&= \E_{\bx_{\leq n - 1} \sim \cD^{n - 1}} \left[|g(\bx_{\leq n - 1}) - f(\bx_{\leq n - 1})|\right] + \rho \cdot \E_{\bx, \bx' \sim \cD^n}\left[f(\bx_{\leq n - 1}, x_n) - e^{\eps} \cdot f(\bx_{\leq n - 1}, x'_n)\right]_+,
\end{align*}
where the inequality is from~\Cref{lem:rv-correction-scalar}(iii).

By repeatedly applying the above argument (to the first term), we arrive at
\begin{align*}
\E_{\bx \sim \cD^n}[|g(\bx) - f(\bx)|] &\leq \rho \cdot \sum_{i=1}^n \E_{\bx, \bx' \sim \cD^n}\left[f(\bx_{\leq i - 1}, x_i) - e^{\eps} \cdot f(\bx_{\leq i - 1}, x'_i)\right]_+.
\end{align*}
Recall from the definition that $f(\bx_{\leq i - 1}, x_i) = \E_{\bx_{> i} \sim \cD^{n - i}}[f(\bx_{\leq i - 1}, x_i, \bx_{> i})]$ and $f(\bx_{\leq i - 1}, x'_i) = \E_{\bx_{> i} \sim \cD^{n - i}}[f(\bx_{\leq i - 1}, x'_i, \bx_{> i})]$. Thus, by the convexity of $[\cdot]_+$, we further have
\begin{align} \label{eq:propagation-diff}
\E_{\bx \sim \cD^n}[|g(\bx) - f(\bx)|] &\leq \rho \cdot \sum_{i=1}^n \E_{\bx, \tbx \sim \cD^n}\left[f(\bx) - e^\eps \cdot f(\tbx^{(i)})\right]_+.
\end{align}
Next, let $Y_0, Y_1, \dots, Y_n$ be a sequence of random variables  defined by sampling $x_1, \dots, x_n \sim \cD^n$ and letting $Y_i := g(x_1,\dots,x_i)$. By \Cref{lem:rv-correction-scalar}(ii), we have that $Y_1, \dots, Y_n$ is a martingale.  \Cref{lem:rv-correction-scalar}(i) further implies that $Y_{i - 1} \approx_{2\eps} Y_{i}$. Thus, we may apply \Cref{lem:concen-martingale-bounded-ratio} to arrive at
\begin{align*}
\E[[Y_n - e^{\eps'/2} \cdot \mu]_+ + [\mu - e^{\eps'/2} \cdot Y_n]_+] \leq O(\mu \delta),
\end{align*}
for $\eps' = O(\eps\sqrt{n \log(1/\delta)})$.
Note that this is equivalent to
\begin{align} \label{eq:concen-modified-func}
\E_{\bx \sim \cD^n}[[g(\bx) - e^{\eps'/2} \cdot \mu]_+ + [\mu - e^{\eps'/2} \cdot g(\bx)]_+] \leq O(\mu\delta).
\end{align}
We can now derive the bound on the desired quantity as follows:\
\begin{align*}
&\E_{\bx, \bx' \sim \cD^n}\left[[f(\bx) - e^{\eps'} f(\bx')]_+\right] \\
&= \E_{\bx, \bx'  \sim \cD^n}\left[[(f(\bx) - g(\bx)) + (g(\bx) - e^{\eps'/2} \mu) + e^{\eps'/2}(\mu - e^{\eps'/2} g(\bx')) + e^{\eps'} (g(\bx') - f(\bx'))]_+\right] \\
&\leq \E[f(\bx) - g(\bx)]_+ + \E[g(\bx) - e^{\eps'/2} \mu]_+ + e^{\eps'/2} \E[\mu - e^{\eps'/2} g(\bx')]_+ + \E[g(\bx') - f(\bx')]_+ \\
&\overset{\eqref{eq:propagation-diff},\eqref{eq:concen-modified-func}}{\leq} O(\mu \delta) + O\left(\sum_{i=1}^n \E_{\bx, \tbx \sim \cD^n}\left[f(\bx) - e^\eps \cdot f(\tbx^{(i)})\right]_+\right). & \qedhere
\end{align*}
\end{proof}

\subsubsection{Putting Things Together: Proof of \Cref{lem:low-prob}}

\Cref{lem:low-prob} is now a simple consequence of \Cref{lem:func-concen-with-err}.

\begin{proof}[Proof of \Cref{lem:low-prob}]
For every $o \in \MO$ and $\bx \in \MZ^n$, let $f^o(\bx) := \Pr[\bA(\bx) = o]$. Applying \Cref{lem:func-concen-with-err} to $f^o$, we get
\begin{align*}
&\E_{\bx, \bx' \sim \cD^n}[[f^o(\bx) - e^{\eps'} \cdot f^o(\bx')]_+] \\ &\leq O(\mu_{\cD^n}(f^o) \cdot \delta) + O\left(\sum_{i=1}^n \E_{\bx, \tbx \sim \cD^n}\left[f^o(\bx) - e^\eps \cdot f^o(\tbx^{(i)})\right]_+\right),
\end{align*}
where $\eps' = O(\eps\sqrt{n \log(1/\delta)})$.

Summing this up over all $o \in \MO$, we get
\begin{align*}
\E_{\bx, \bx' \sim \cD^n}[\hsd{\eps'}{\bA(\bx)}{\bA(\bx')}] \leq O(\delta) + O\left(\sum_{i=1}^n \E_{\bx, \tbx \sim \cD^n}\left[\hsd{\eps}{\bA(\bx)}{\bA(\tbx^{(i)})}\right]\right) \leq O(n\delta),
\end{align*}
where the second inequality follows from the assumption that $\bA$ is $(\eps, \delta)$-item-level DP, which implies $\hsd{\eps}{\bA(\bx)}{\bA(\tbx^{(i)})} \leq \delta$.
\end{proof}

\subsection{Privacy Analysis of \Cref{alg:stab}}
\label{app:privacy-analysis}

Here we prove the privacy guarantee (\Cref{lem:approx-dp-privacy}) of \Cref{alg:stab}. This mostly follows the proof for the analogous algorithm in \cite{user-level-icml23}.

We start with the following observation.

\begin{obs} \label{obs:stable-set-neighbors}
For user-level neighboring datasets $\bx, \bx'$, and all $r_1 \in \{0, \dots, n - 1\}$, if $\cX^{r_1}_{\stable}(\bx') \ne \emptyset$, then $\cX^{r_1 + 1}_{\stable}(\bx) \ne \emptyset$.
\end{obs}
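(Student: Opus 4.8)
The plan is to produce an explicit witness for $\cX^{r_1+1}_{\stable}(\bx)$ out of a witness for $\cX^{r_1}_{\stable}(\bx')$. Fix any $S' \in \cX^{r_1}_{\stable}(\bx')$, so that $|S'| = r_1$ and $\bA$ is $(4\kappa - r_1, \oeps, \odelta)$-LDDP at $\bx'_{-S'}$, and let $i \in [n]$ be an index witnessing $\bx \asymp \bx'$, i.e. $\bx_{-i} = \bx'_{-i}$. I would take $S \supseteq S'$ of size $r_1 + 1$ chosen so that $i \in S$: concretely $S := S' \cup \{i\}$ if $i \notin S'$, and $S := S' \cup \{j\}$ for an arbitrary $j \in [n] \smallsetminus S'$ (which exists since $r_1 \le n-1$) if $i \in S'$. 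The claim is then that $S \in \cX^{r_1+1}_{\stable}(\bx)$.

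To check this I must show $\bA(\bx_{-(S \cup T_1)}) \approx_{\oeps, \odelta} \bA(\bx_{-(S \cup T_2)})$ for every $T_1, T_2 \subseteq [n] \smallsetminus S$ with $|T_1| = |T_2| = 4\kappa - r_1 - 1$. Two elementary observations do it. First, since $i \in S \subseteq S \cup T_\ell$, user $i$ is deleted on both sides, so $\bx_{-(S \cup T_\ell)} = \bx'_{-(S \cup T_\ell)}$ for $\ell \in \{1,2\}$ (the two datasets agree away from user $i$). Second, let $w$ denote the single element of $S \smallsetminus S'$ and put $T_\ell' := \{w\} \cup T_\ell$; since $T_\ell \subseteq [n] \smallsetminus S$ is disjoint from $S$ (hence from $\{w\}$ and from $S'$) and $w \notin S'$, we get $T_\ell' \subseteq [n] \smallsetminus S'$, $|T_\ell'| = 1 + (4\kappa - r_1 - 1) = 4\kappa - r_1$, and $S \cup T_\ell = S' \cup T_\ell'$. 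Therefore $\bx_{-(S \cup T_\ell)} = \bx'_{-(S' \cup T_\ell')}$, and applying the $(4\kappa - r_1, \oeps, \odelta)$-LDDP property of $\bA$ at $\bx'_{-S'}$ to the pair $T_1', T_2'$ yields exactly $\bA(\bx_{-(S \cup T_1)}) \approx_{\oeps, \odelta} \bA(\bx_{-(S \cup T_2)})$. Hence $S \in \cX^{r_1+1}_{\stable}(\bx)$, which proves the observation.

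I do not expect any real obstacle here: this is purely a combinatorial bookkeeping argument (and it mirrors the analogous step in \cite{user-level-icml23}). The only mild subtleties are the case split on whether $i \in S'$ — needed so that $|S|$ is exactly $r_1 + 1$ while still forcing $i \in S$ — and keeping track of cardinalities and disjointness of the index sets; one should also note that in the regime where the statement is used one has $r_1 \le 2\kappa < 4\kappa$, so $4\kappa - r_1 - 1 \ge 0$ and the phrase ``$(4\kappa - r_1 - 1, \oeps, \odelta)$-LDDP'' is meaningful.
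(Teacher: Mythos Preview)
Your proof is correct and follows essentially the same approach as the paper: the paper simply writes ``Select $S$ such that $|S| = r_1 + 1$ and $S \supseteq S' \cup \{i\}$,'' and declares the conclusion obvious from the definition. Your case split on whether $i \in S'$ and your explicit verification via the sets $T'_\ell = \{w\} \cup T_\ell$ just unpack that one-line construction in full detail.
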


\begin{proof}
Let $i \in [n]$ be such that $\bx_{-i} = \bx'_{-i}$, and let $\bx'_{-S'}$ be an element of $\cX^{r_1}_{\stable}(\bx')$. Select $S$ such that $|S| = r_1 + 1$ and $S \supseteq S' \cup \{i\}$. It is obvious from definition that $\bx_S\in\cX^{r_1 + 1}_{\stable}(\bx)$.
\end{proof}

\begin{proof}[Proof of \Cref{lem:approx-dp-privacy}]
Let $\bx, \bx'$ be user-level neighboring datasets, and let $\cA$ be a shorthand for $\DelStab_{\eps, \delta, \bA}$. First, we follow the proof of \cite[Theorem 3.3]{user-level-icml23}: %
\begin{align}
\textstyle\Pr[\cA(\bx) = \perp] 
&\textstyle= \sum_{r_1=0}^{2\kappa} \ind[\cA(\bx) = \perp \mid R_1 = r_1] \cdot \Pr[R_1 = r_1] \nonumber \\
&\textstyle= \sum_{r_1=0}^{2\kappa} \ind[\cX^{r_1}_{\stable}(\bx) = \emptyset] \cdot \Pr[R_1 = r_1] \nonumber \\
&\textstyle\leq \Pr[R_1 = 0] + \sum_{r_1=1}^{2\kappa} \ind[\cX^{r_1}_{\stable}(\bx) = \emptyset] \cdot \Pr[R_1 = r_1] \nonumber \\
&\textstyle\leq \odelta + \sum_{r_1=1}^{2\kappa} \ind[\cX^{r_1}_{\stable}(\bx) = \emptyset] \cdot e^{\oeps} \cdot \Pr[R_1 = r_1 - 1] \nonumber \\
&\textstyle\leq \odelta + \sum_{r_1=1}^{2\kappa} \ind[\cX^{r_1 - 1}_{\stable}(\bx') = \emptyset] \cdot e^{\oeps} \cdot \Pr[R_1 = r_1 - 1] \nonumber \\
&\textstyle\leq \odelta + e^{\oeps} \cdot \sum_{r_1=0}^{2\kappa} \ind[\cA(\bx') = \perp \mid R_1 = r_1] \cdot \Pr[R_1 = r_1] \nonumber \\
&\textstyle= \odelta + e^{\oeps} \cdot \Pr[\cA(\bx') = \perp], \label{eq:perp-dp-bound}
\end{align}
and for any $U_0 \subseteq \MO$, we have
\begin{align}
\textstyle\Pr[\cA(\bx) \in U_0]  
&\textstyle \leq \sum_{r_1=0}^{2\kappa} \Pr[\cA(\bx) \in U_0 \mid R_1 = r_1] \cdot \Pr[R_1 = r_1] \nonumber \\
&\textstyle= \sum_{r_1=0}^{2\kappa - 1} \Pr[\cA(\bx) \in U_0 \mid R_1 = r_1] \cdot \Pr[R_1 = r_1]~+~ \Pr[R_1 = 2\kappa] \nonumber\\
&\textstyle\leq \odelta + e^{\oeps} \cdot \sum_{r_1=0}^{2\kappa - 1} \Pr[\cA(\bx) \in U_0 \mid R_1 = r_1] \cdot \Pr[R_1 = r_1 + 1],
\label{eq:expand-real-output-set}
\end{align}
To bound the term $\Pr[\cA(\bx) \in U_0 \mid R_1 = r_1]$ for $r_1 < 2\kappa$, observe that if it is non-zero, then it must be that $\cA(\bx) \ne \perp$ or equivalently that $\cX^{r_1}_{\stable}(\bx) \ne \emptyset$.
In this case, we have\footnote{Here we write $\oS := [n] \smallsetminus S$.}
\begin{align*}
\Pr[\cA(\bx) \in U_0 \mid R_1 = r_1] 
&= \E_{S \sim \cX^{r_1}_{\stable}(\bx), J \sim \binom{\oS}{n - 4\kappa}}\left[\Pr[\bA(\bx_{J}) \in U_0]\right].
\end{align*}
Since $\cX^{r_1}_{\stable}(\bx) \ne \emptyset$, \Cref{obs:stable-set-neighbors} also implies that\footnote{Again we write $\oS' := [n] \smallsetminus S'$.} $\cX^{r_1 + 1}_{\stable}(\bx') \ne \emptyset$. Thus, we have
\begin{align*}
\Pr[\cA(\bx') \in U_0 \mid R_1 = r_1 + 1] 
&= \E_{S' \sim \cX^{r_1 + 1}_{\stable}(\bx'), J' \sim \binom{\oS'}{n - 4\kappa}}\left[\Pr[\bA(\bx_{J'}) \in U_0]\right].
\end{align*}
Consider an arbitrary pair $S \in \cX^{r_1}_{\stable}(\bx), S' \in \cX^{r_1 + 1}_{\stable}(\bx')$. Let $i \in [n]$ be such that $\bx_{-i} = \bx'_{-i}$. Note that $|(\oS \cap \oS') \smallsetminus \{i\}| \geq n - 4\kappa$. Let $J^*$ be any subset of $(\oS \cap \oS')$ of size $n - 4\kappa$. From $S \in \cX^{r_1}_{\stable}(\bx)$, we have
\begin{align*}
\Pr[\bA(\bx_{J}) \in U_0] \leq e^{\oeps}\Pr[\bA(\bx_{J^*}) \in U_0] + \odelta,
\end{align*}
for all $J \in \binom{\oS}{n - 4\kappa}$,
and, from $S' \in \cX^{r_1 + 1}_{\stable}(\bx')$, we have
\begin{align*}
\Pr[\bA(\bx_{J^*}) \in U_0] \leq e^{\oeps}\Pr[\bA(\bx_{J'}) \in U_0] + \odelta,
\end{align*}
for all $J' \in \binom{\oS'}{n - 4\kappa}$.

Therefore, by arbitrary coupling of $S, S'$, we can conclude that
\begin{align*}
\Pr[\cA(\bx) \in U_0 \mid R_1 = r_1]  \leq e^{2\oeps} \Pr[\cA(\bx') \in U_0 \mid R_1 = r_1 + 1]  + (e^{\oeps} + 1)\odelta.
\end{align*}
Plugging this back to \eqref{eq:expand-real-output-set}, we get
\begin{align}
&\textstyle\Pr[\cA(\bx) \in U_0] \nonumber \\
&\leq \odelta + e^{\oeps} \cdot \sum_{r_1=0}^{2\kappa - 1} \left(e^{2\oeps} \Pr[\cA(\bx') \in U_0 \mid R_1 = r_1 + 1]  + (e^{\oeps} + 1)\odelta\right) \cdot \Pr[R_1 = r_1 + 1] \nonumber \\
&\textstyle\leq (e^{2\oeps} + e^{\oeps} + 1) \odelta + e^{3\oeps} \cdot \sum_{r_1=0}^{2\kappa} \Pr[\cA(\bx') \in U_0 \mid R_1 = r_1] \cdot \Pr[R_1 = r_1] \nonumber \\
&\textstyle= (e^{2\oeps} + e^{\oeps} + 1) \odelta + e^{3\oeps} \cdot \Pr[\bA(\bx') \in U_0]. \label{eq:real-output-dp-bound}
\end{align}

Now, consider any set $U \subseteq \MO \cup \{\perp\}$ of outcomes. Let $U_{\MO} = U \cap \MO$ and $U_{\perp} = U \cap \{\perp\}$. Then, we have
\begin{align*}
\Pr[\cA(\bx) \in U]
&= \Pr[\cA(\bx) \in U_{\MO}] + \Pr[\cA(\bx) \in U_{\perp}] \\
&\overset{\eqref{eq:real-output-dp-bound},\eqref{eq:perp-dp-bound}}{\leq}  \left((e^{2\oeps} + e^{\oeps} + 1) \odelta + e^{3\oeps} \cdot \Pr[\bA(\bx') \in U_{\MO}]\right) + \left(\odelta + e^{\oeps} \cdot \Pr[\cA(\bx') = U_\perp]\right) \\
&\leq (e^{2\oeps} + e^{\oeps} + 2)\odelta + e^{3\oeps} \Pr[\cA(\bx') \in U] \\
&\leq \delta + e^{\eps} \cdot \Pr[\cA(\bx') \in U].
\end{align*}
Therefore, the algorithm is $(\eps, \delta)$-user-level DP as desired.
\end{proof}

\nc{\agn}{\mathrm{agn}}
\nc{\VC}{\mathrm{VC}}
\nc{\argmin}{\mathrm{argmin}}
\nc{\fH}{\mathfrak{H}}
\nc{\fP}{\mathfrak{P}}
\nc{\cP}{\mathcal{P}}
\nc{\fQ}{\mathfrak{Q}}
\nc{\cQ}{\mathcal{Q}}
\nc{\dlearn}{\mathrm{dlearn}}
\nc{\gauss}{\mathrm{Gauss}}

\section{Missing Details from \texorpdfstring{\Cref{sec:pure-dp}}{Section~\ref{sec:pure-dp}}}
\label{app:pure-dp-missing}

Below we give the proof of \Cref{lem:concen-pac}.

\begin{proof}[Proof of \Cref{lem:concen-pac}]
Consider any $h \in \{0, 1\}^{\MX}$. Since $\bz_i$'s are drawn independently from $\MD$, we have that $\ind[\bz_i \text{ is not realizable by } h]$'s are independent Bernoulli random variables with success probability
\begin{align*}
p_h = 1 - (1 - \Err_{\MD}(h))^m.
\end{align*}
Consider the two cases:
\begin{itemize}[leftmargin=*]
\item If $\Err_{\cD}(h) \leq 0.01\alpha$, then we have
\begin{align*}
p_h \leq 1 - (1 - 0.01\alpha)^m \leq 0.01\alpha m.
\end{align*}
Thus, applying \Cref{thm:chernoff}, we can conclude that
\begin{align*}
\Pr[\scr_h(\bz) \geq 0.05\alpha n m] \leq \exp\left(-\Theta\left(n \alpha m\right)\right) = \exp\left(-\Theta\left(\kappa \cdot \size(\MP)\right)\right).
\end{align*}
Thus, when we select $\kappa$ to be sufficiently large, we have
\begin{align*}
\Pr[\scr_h(\bz) \geq 0.05\alpha n m] \leq \exp(-10 \size(\MP)) \leq \frac{0.01}{\exp(\size(\MP))} \leq \frac{0.01}{|\MH|}.
\end{align*}
\item If $\Err_{\cD}(h) > \alpha$, then we have
\begin{align*}
p_h \geq 1 - (1 - \alpha)^m \geq 1 - e^{-\alpha m} \geq 0.2\alpha m,
\end{align*}
where in the second inequality we use the fact that $\alpha m \leq 1$.

Again, applying \Cref{thm:chernoff}, we can conclude that
\begin{align*}
\Pr[\scr_h(\bz) \leq 0.2\alpha n m] \leq \exp\left(-\Theta\left(n \alpha m\right)\right) = \exp\left(-\Theta\left(\kappa \cdot \size(\MP)\right)\right).
\end{align*}
Thus, when we select $\kappa$ to be sufficiently large, we have
\begin{align*}
\Pr[\scr_h(\bz) \leq 0.2\alpha n m] \leq \exp(-10 \size(\MP)) \leq \frac{0.01}{\exp(\size(\MP))} \leq \frac{0.01}{|\MH|}.
\end{align*}
\end{itemize}

Finally, taking a union bound over all $h \in \MH$ yields the desired result.
\end{proof}

It is also worth noting that the scoring function can be written in the form:
$$\scr_h(\bz) := \sum_{i \in [n]} \clip_{0, 1}\left(\left|\{(x, y) \in \bz_i \mid h(x) \ne y\}\right|\right),$$
which is similar to what we present below for other tasks.

Finally, we remark that the standard techniques can boost the success probability from 2/3 to an arbitrary probability $1 - \beta$, with a multiplicative factor of $\log(1/\beta)$ cost.

\section{Additional Results for Pure-DP}
\label{app:pure-dp}

\subsection{Algorithm: Pairwise Score Exponential Mechanism}

In this section, we give a user-level DP version of \emph{pairwise score} exponential mechanism (EM)~\cite{BunKSW19}. In the proceeding subsections, we will show how to apply this in multiple settings, including agnostic learning, private hypothesis selection, and distribution learning.

Suppose there is a candidate set $\cH$. Furthermore, for every pair $H, H' \in \cH$ of candidates, there is a ``comparison'' function $\psi_{H, H'}: \cZ \to [-1, 1]$. %
For every pair $H, H'$ of candidates and any $\bx \in \cZ^{*}$, let 
\begin{align*}
\pscr^{\psi}_{H, H'}(\bx) = \sum_{x \in \bx} \psi_{H, H'}(x).
\end{align*}
We then define $\scr^\psi$ for each candidate $H$ as
\begin{align*}
\scr^\psi_H(\bx) := \max_{H' \in \cH \smallsetminus \{H\}} \pscr^\psi_{H, H'}(\bx).
\end{align*}
Traditionally, the item-level DP algorithm for pairwise scoring function is to use EM on the above scoring function $\scr^\psi_H$ (where $\bx$ denote the entire input)~\cite{BunKSW19}; it is not hard to see that the $\scr^\psi_H$ has (item-level) sensitivity of at most $2$.

\paragraph{User-Level DP Algorithm.} We now describe our user-level DP algorithm $\PWEM$. The algorithm computes:
\begin{align*}
\pscr^{\psi, \tau}_{H, H'}(\bx) &:= \sum_{i \in [n]} \clip_{-\tau,\tau}\left(\pscr^{\psi}_{H, H'}(\bx_i)\right), \\
\scr^{\psi, \tau}_H(\bx) &:= \max_{H' \in \cH \smallsetminus \{H\}} \pscr^{\psi, \tau}_{H, H'}(\bx).
\end{align*}
Then, it simply runs EM using the score $\scr^{\psi, \tau}_H$ with sensitivity $\Delta = 2\tau$. %

To state the guarantee of the algorithm, we need several additional distributional-based definitions of the score:
\begin{align*}
\pscr^{\psi}_{H, H'}(\cD) &:= \E_{x \sim \cD}\left[\psi_{H, H'}(x)\right], \\
\scr^{\psi}_{H}(\cD) &:= \max_{H' \in \cH \smallsetminus \{H\}} \pscr^{\psi}_{H, H'}(\cD).
\end{align*}
Note that these distributional scores are normalized, so they are in $[-1, 1]$.

The guarantee of our algorithm can now be stated as follows:
\begin{theorem} \label{thm:pairwise-exp-user-level}
Let $\eps, \alpha, \beta \in (0, 1/2]$. Assume further that there exists $H^*$ such that $\scr^{\psi}_{H^*}(\cD) \leq 0.1\alpha$.
There exists an $\eps$-DP algorithm that with probability $1 - \beta$ (over the randomness of the algorithm and the input $\bx \sim (\cD^m)^n$) outputs $H \in \cH$ such that $\scr^{\psi}_{H}(\cD) \leq 0.5\alpha$ as long as
\begin{align*}
n \geq \tTheta\left(\frac{\log(|\cH|/\beta)}{\eps} + \frac{\log(|\cH|/\beta)}{\alpha \eps \sqrt{m}} + \frac{\log(|\cH|/\beta)}{\alpha^2 m}\right).
\end{align*}
\end{theorem}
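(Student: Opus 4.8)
The algorithm is already fixed: run the pure-DP exponential mechanism (Lemma~\ref{lem:pure-selection}) with the clipped scoring functions $\scr^{\psi,\tau}_H$ for a clipping threshold $\tau$ to be chosen, so all that remains is to exhibit a $\tau$ for which both privacy and utility hold. I would take $\tau := c\big(\alpha m + \sqrt{m\log(n|\cH|/\beta)}\big)$ for a sufficiently large absolute constant $c$. Privacy is immediate: replacing one user's block $\bx_i$ changes each $\pscr^{\psi,\tau}_{H,H'}(\bx)=\sum_j\clip_{-\tau,\tau}(\pscr^\psi_{H,H'}(\bx_j))$ by at most $2\tau$, and a pointwise maximum over $H'$ inherits this, so $\scr^{\psi,\tau}_H$ has user-level sensitivity $\Delta=2\tau$ and Lemma~\ref{lem:pure-selection} gives $\eps$-DP; moreover it returns some $H$ with $\scr^{\psi,\tau}_H(\bx)\le\min_{H'\in\cH}\scr^{\psi,\tau}_{H'}(\bx)+O\!\big(\tau\log(|\cH|/\beta)/\eps\big)$.

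\textbf{Utility via two concentration events.} Over $\bx\sim(\cD^m)^n$ I would establish, each with probability $\ge 1-\beta/2$: (A) $\scr^{\psi,\tau}_{H^*}(\bx)\le 0.15\,\alpha mn$, and (B) every $H$ with $\scr^\psi_H(\cD)>0.5\alpha$ has $\scr^{\psi,\tau}_H(\bx)>0.3\,\alpha mn$. For (A): for each $H'$ the per-user quantities $\pscr^\psi_{H^*,H'}(\bx_i)$ are i.i.d., bounded in $[-m,m]$, with mean $m\cdot\pscr^\psi_{H^*,H'}(\cD)\le 0.1\alpha m\ll\tau$ and variance $\le m$; since the mean never exceeds $\sim\tau$, these variables essentially never cross $\tau$ from above, so the clip acts only on the (harmless, because negative) lower tail, and a short case analysis on the sign of $\pscr^\psi_{H^*,H'}(\cD)$ yields $\pscr^{\psi,\tau}_{H^*,H'}(\bx)\le[nm\cdot\pscr^\psi_{H^*,H'}(\cD)]_+ + \tO(\tau\sqrt n)\le 0.1\alpha mn+\tO(\tau\sqrt n)$, which is $\le 0.15\alpha mn$ once $n=\tOmega(\log(|\cH|/\beta)/(\alpha^2 m))$; a union bound over $H'$ completes (A). For (B): fix the witness $H'$ with $\pscr^\psi_{H,H'}(\cD)>0.5\alpha$ and split on whether $m\cdot\pscr^\psi_{H,H'}(\cD)\le\tau$. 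If so, the per-user scores stay below $\tau$, so the clip is essentially inactive and $\pscr^{\psi,\tau}_{H,H'}(\bx)\approx\sum_i\pscr^\psi_{H,H'}(\bx_i)\ge 0.5\alpha mn-\tO(\sqrt{nm\log(|\cH|/\beta)})$; otherwise almost every per-user score is truncated to $\tau$, so $\pscr^{\psi,\tau}_{H,H'}(\bx)\ge\tfrac12\tau n-\tO(\tau\sqrt n)\ge 0.3\alpha mn$ using $\tau\ge c\alpha m$. In both sub-cases the bound beats $0.3\alpha mn$ once $n=\tOmega(\log(|\cH|/\beta)/(\alpha^2 m)+\log(|\cH|/\beta))$; a union bound over the $\le|\cH|^2$ pairs completes (B). Each step uses only a Chernoff/Hoeffding bound for sums of i.i.d.\ $[-\tau,\tau]$- (or $[-m,m]$-) valued variables.

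\textbf{Putting it together.} On the intersection of (A) and (B), the exponential mechanism outputs $H$ with $\scr^{\psi,\tau}_H(\bx)\le\scr^{\psi,\tau}_{H^*}(\bx)+O(\tau\log(|\cH|/\beta)/\eps)\le 0.15\alpha mn+O(\tau\log(|\cH|/\beta)/\eps)$, which is $<0.3\alpha mn$ as soon as $n=\tOmega(\tau\log(|\cH|/\beta)/(\alpha m\eps))=\tOmega\!\big(\log(|\cH|/\beta)/\eps+\log(|\cH|/\beta)/(\alpha\eps\sqrt m)\big)$; by (B) such $H$ cannot be bad, i.e.\ $\scr^\psi_H(\cD)\le 0.5\alpha$, which is the claim. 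Taking the maximum of the three lower bounds on $n$ yields exactly $\tTheta\big(\log(|\cH|/\beta)/\eps+\log(|\cH|/\beta)/(\alpha\eps\sqrt m)+\log(|\cH|/\beta)/(\alpha^2 m)\big)$; the $\log(n|\cH|/\beta)$ appearing inside $\tau$ only inflates this by iterated logarithms, since $n$ is polynomially bounded in $|\cH|/\beta,1/\eps,1/\alpha$, and is absorbed by $\tTheta$.

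\textbf{Main obstacle.} The delicate point is the choice of $\tau$ together with the accompanying bias/variance bookkeeping: $\tau$ must be small enough that $\Delta=2\tau$ keeps the exponential-mechanism error at the $\sqrt m$-improved level, yet large enough that a ``good'' user's pairwise score against a well-separated hypothesis is not truncated below the $\Theta(\alpha m)$ signal. One then has to verify that the three error sources --- clipping bias, sampling fluctuation of the clipped sums, and exponential-mechanism error --- contribute \emph{precisely} the three terms of the stated bound, with constants arranged so that events (A) and (B) leave a genuine gap for the mechanism to exploit; the case analysis by the magnitude (and sign) of $\pscr^\psi_{H,H'}(\cD)$ in establishing (A) and (B) is where most of the technical work lies.
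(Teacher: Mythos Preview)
Your plan matches the paper's approach: the same algorithm (exponential mechanism on the clipped pairwise score with $\tau \approx \alpha m + \sqrt{m}\cdot\text{polylog}$), the same privacy argument via sensitivity $2\tau$, and the same utility structure via concentration of the clipped scores sandwiching the EM error. The only notable difference is in how the concentration events are established. The paper avoids your case analysis on the size and sign of $\mu := \pscr^\psi_{H,H'}(\cD)$ by first reducing, via monotonicity of $\clip$ in the $\psi$-values, to the boundary value $\mu = 0.5\alpha$ (resp.\ $0.1\alpha$); it then applies McDiarmid to the clipped sum viewed as a $2$-bounded-difference function of all $nm$ \emph{items}, and bounds the clipping bias on $\E_{\bx_1\sim\cD^m}[\clip_{-\tau,\tau}(\pscr^\psi_{H,H'}(\bx_1))]$ by Cauchy--Schwarz against the tail $\Pr[\pscr^\psi_{H,H'}(\bx_1)>\tau]\le O(\alpha^2)$ (this is why their $\tau$ carries $\sqrt{m\log(1/\alpha)}$ rather than your $\sqrt{m\log(n|\cH|/\beta)}$). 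Your case analysis works in principle but is loose as sketched: in (B) the split at $\mu m=\tau$ leaves a boundary regime ($\mu m$ within $O(\sqrt m)$ of $\tau$) where neither branch applies cleanly, and in (A) the lower clip is not ``harmless'' for an \emph{upper} bound since it raises values, so the $\mu<0$ sub-case needs more than a sign check. Both issues are easy to repair --- split at $\tau/2$ instead, or simply invoke the monotonicity reduction --- so your route is sound, just longer than the paper's.
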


\begin{proof}
We use $\PWEM$ with parameter $\tau = C\left(\alpha m + \sqrt{m \log\left(1/\alpha\right)}\right)$ and assume that $n \geq C'\left(\frac{\tau \cdot 
 \log(|\cH|/\beta)}{\eps \alpha m} + \frac{\log(|\cH|/\beta)}{\alpha^2 m}\right) = \tTheta\left(\frac{\log(|\cH|/\beta)}{\eps} + \frac{\log(|\cH|/\beta)}{\alpha \eps \sqrt{m}} + \frac{\log(|\cH|/\beta)}{\alpha^2 m}\right)$, where $C, C'$ are sufficiently large constants.
 The privacy guarantee follows directly from that of EM, since the (user-level) sensitivity is at most $\Delta$ (due to clipping).

To analyze the utility, we will consider following three events:
\begin{itemize}
\item $\cE_1$: $\scr^{\psi, \tau}_H(\bx) \leq \scr^{\psi, \tau}_{H^*}(\bx) + 0.1\alpha n m$.
\item $\cE_2$: for all $H, H' \in \cH$: if $\pscr^{\psi}_{H, H'}(\cD) \leq 0.1\alpha$, then $\pscr^{\psi}_{H, H'}(\bx) \leq 0.2\alpha n m$. 
\item $\cE_3$: for all $H, H' \in \cH$: if $\pscr^{\psi}_{H, H'}(\cD) \geq 0.5\alpha$, then $\pscr^{\psi}_{H, H'}(\bx) \geq 0.4\alpha n m$. 
\end{itemize}
When all events hold, $\cE_1$ implies that $\scr^{\psi, \tau}_H(\bx) \leq \scr^{\psi, \tau}_{H^*}(\bx) + 0.1\alpha n m$. Observe that $\cE_2$ implies that $\scr^{\psi, \tau}_{H^*}(\bx) \leq 0.2\alpha nm$. Combining these two inequalities, we get $\scr^{\psi, \tau}_H(\bx) \leq 0.3\alpha nm$; then, $\cE_3$ implies that $\pscr^{\psi}_{H, H'}(\cD) \leq 0.5\alpha$ for all $H' \in \cD$. In turn, this means that $\scr^{\psi}_{H}(\cD) \leq 0.5\alpha$.

Therefore, it suffices to show that each of $\Pr[\cE_1], \Pr[\cE_2], \Pr[\cE_3]$ is at least $1 - \beta/3$. 

\textbf{Bounding $\Pr[\cE_1]$.}
To bound the probability of $\cE_1$, recall from \Cref{lem:pure-selection} that, with probability $1 - \beta/3$, the guarantee of EM ensures that
\begin{align*}
\scr^{\psi, \tau}_H(\bx) \geq \scr^{\psi, \tau}_{H*}(\bx) + O\left(\frac{\log(|\cH|/\beta)}{\eps}\right) \cdot \Delta.
\end{align*}
From our assumption on $n$, when $C'$ is sufficiently large, then we have $O\left(\frac{\log(|\cH|/\beta)}{\eps}\right) \cdot \Delta = O\left(\frac{\log(|\cH|/\beta)}{\eps}  \cdot \tau\right) \leq 0.1\alpha n m$ as desired.

\textbf{Bounding $\Pr[\cE_2]$ and $\Pr[\cE_3]$.} Since the proofs for both cases are analogous, we only show the full argument for $\Pr[\cE_3]$. Let us fix $H, H' \in \cH$ such that $\pscr^{\psi}_{H, H'}(\cD) \geq 0.5\alpha$. We may assume w.l.o.g. that\footnote{Otherwise, we may keep increasing $\psi_{H, H'}(z)$ for different values of $z$ until $\pscr^{\psi}_{H, H'}(\cD) = 0.5\alpha$; this operation does not decrease the probability that $\pscr^{\psi}_{H, H'}(\bx) < 0.4\alpha n m$.} $\pscr^{\psi}_{H, H'}(\cD) = 0.5\alpha := \mu$.

Notice that $\pscr^{\psi, \tau}_{H, H'}(\bx)$ is a $2$-bounded difference function. As a result, McDiarmid's inequality (\Cref{thm:mcdiarmid}) implies that
\begin{align} \label{eq:outer-concen}
\Pr_{\bx \sim \cD^{nm}}[|\pscr^{\psi, \tau}_{H, H'}(\bx) - \mu_{\cD^{nm}}(\pscr^{\psi, \tau}_{H, H'})| > 0.05\alpha nm] \leq 2\exp\left(-0.025 \alpha^2 nm\right) \leq \frac{\beta}{3|\cH|^2},
\end{align}
where the second inequality is due to our choice of $n$ when $C'$ is sufficiently large.

To compute $\mu_{\cD^{nm}}(\pscr^{\psi, \tau}_{H, H'})$, observe further that 
\begin{align} \label{eq:outer-inner-mean}
\mu_{\cD^{nm}}(\pscr^{\psi, \tau}_{H, H'}) = n \cdot \mu_{\cD^m}(\clip_{-\tau,\tau} \circ \pscr^{\psi}_{H, H'}).
\end{align} Now observe once again that $\pscr^{\psi}_{H, H'}: \cZ^{m} \to \R$ is also a $2$-bounded difference function and $\mu_{\cD^{m}}(\pscr^{\psi}_{H, H'}) = \mu m$. Therefore, McDiarmid's inequality (\Cref{thm:mcdiarmid}) yields
\begin{align} \label{eq:inner-sum-tail}
\Pr_{\bx \sim \cD^{m}}[\pscr^{\psi}_{H, H'}(\bx) > \tau] \leq 2\exp\left(-0.5(\tau - \mu m)^2 / m\right) \leq  10^{-4} \alpha^2,
\end{align}
where the second inequality is due to our choice of $\tau$ when $C$ is sufficiently large.

Finally, note that
\begin{align}
&\mu_{\cD^{m}}(\clip_{-\tau,\tau} \circ \pscr^{\psi}_{H, H'}) \nonumber \\ &= \E_{\bx \sim \cD^m}[\clip_{-\tau,\tau}(\pscr^{\psi}_{H, H'}(\bx))]  \nonumber \\ \nonumber
&\ge \E_{\bx \sim \cD^m}[\clip_{-\infty,\tau}(\pscr^{\psi}_{H, H'}(\bx))] \\ \nonumber
&= \E_{\bx \sim \cD^m}[\pscr^{\psi}_{H, H'}(\bx)] + \E_{\bx \sim \cD^m}[(\tau - \pscr^{\psi}_{H, H'}(\bx)) \cdot \bone[\pscr^{\psi}_{H, H'}(\bx) > \tau]] \\ \nonumber
&\geq \mu m - \E_{\bx \sim \cD^m}[\pscr^{\psi}_{H, H'}(\bx) \cdot \bone[\pscr^{\psi}_{H, H'}(\bx) > \tau]] \\ \nonumber
&\ge \mu m - \sqrt{\E_{\bx \sim \cD^m}[\pscr^{\psi}_{H, H'}(\bx)^2]} \cdot \sqrt{\E_{\bx \sim \cD^m}[\bone[\pscr^{\psi}_{H, H'}(\bx) > \tau]]} \\ \nonumber
&\geq \mu m - \sqrt{m^2} \cdot \sqrt{\Pr_{\bx \sim \cD^m}[\pscr^{\psi}_{H, H'}(\bx) > \tau]} \\ 
&\overset{\eqref{eq:inner-sum-tail}}{\ge} \mu m - 0.01 \alpha m = 0.49\alpha m, \label{eq:mean-bound-inner}
\end{align}
where the third-to-last inequality follows from Cauchy--Schwarz.

Combining \eqref{eq:outer-concen}, \eqref{eq:outer-inner-mean}, and \eqref{eq:mean-bound-inner}, we can conclude that
\begin{align*}
\Pr_{\bx \sim \cD^{nm}}[\pscr^{\psi, \tau}_{H, H'}(\bx) < 0.4\alpha nm] \leq \frac{\beta}{3|\cH|^2}.
\end{align*}

Taking a union bound over $H, H' \in \cH$ yields $\Pr[\cE_3] \geq 1 - \beta/3$ as desired.
\end{proof}

\subsection{Lower Bound: User-level DP Fano's Inequality}

As we often demonstrate below that our bounds are (nearly) tight, it will be useful to have a generic method for providing such a lower bound. Here we observe that it is simple to extend the DP Fano's inequality of Acharya et al.~\cite{AcharyaSZ21} to the user-level DP setting. We start by recalling Acharya et al.'s (item-level) DP Fano's inequality\footnote{The particular version we use is a slight restatement of \cite[Corollary 4]{AcharyaSZ21}.}:

\begin{theorem}[{Item-Level DP Fano's Inequality~\cite{AcharyaSZ21}}] \label{thm:item-dp-fano}
Let $\fT$ be any task. Suppose that there exist $\cD_1, \dots, \cD_W$ such that, for all distinct $i, j \in [W]$,
\begin{itemize}
\item $\Psi_{\fT}(\cD_i) \cap \Psi_{\fT}(\cD_j) = \emptyset$,
\item $d_{\KL}(\cD_i, \cD_j) \leq \beta$, and
\item $d_{\TV}(\cD_i, \cD_j) \leq \gamma$.
\end{itemize}
Then, $n_1^{\fT}(\eps, \delta = 0) \geq \Omega\left(\frac{\log W}{\gamma \eps} + \frac{\log W}{\beta}\right)$.
\end{theorem}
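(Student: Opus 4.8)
The stated bound has the form $\Omega(a+b)=\Omega(\max\{a,b\})$, so the plan is to prove $n_1^{\fT}(\eps,0)\gtrsim \log W/\beta$ and $n_1^{\fT}(\eps,0)\gtrsim \log W/(\gamma\eps)$ by two separate reductions. The first is a purely information-theoretic (Fano) argument that ignores privacy and uses only the pairwise KL bounds; the second is a packing/group-privacy argument that uses pure DP together with the pairwise total variation bounds, essentially the standard DP--Fano blow-up. Taking the maximum (equivalently, sum) of the two gives the claim.

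For the $\log W/\beta$ term, I would take any $\tfrac23$-useful algorithm $\bA$ with $n$ samples and convert it into a test for the $W$-ary hypothesis testing problem ``which $\cD_i$ produced the data''. Since the sets $\Psi_{\fT}(\cD_i)$ are pairwise disjoint, the estimator $\hat\theta(\bx)$ that outputs the unique $i$ with $\bA(\bx)\in\Psi_{\fT}(\cD_i)$ (and an arbitrary index otherwise) errs with probability at most $\tfrac13$ when $\bx\sim\cD_i^{n}$. Placing a uniform prior $\theta\sim\Unif([W])$ and $X\mid(\theta=i)\sim\cD_i^{n}$, Fano's inequality applied to $\hat\theta(X)$ gives $I(\theta;X)\ge \tfrac23\log W-1$, while the standard mutual-information bound combined with convexity of KL in its second argument and tensorization (\Cref{lem:distrib}(ii)) gives $I(\theta;X)\le \tfrac1{W^2}\sum_{i,j}n\,d_{\KL}(\cD_i\,\|\,\cD_j)\le n\beta$. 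Comparing the two estimates yields $n=\Omega(\log W/\beta)$ (the degenerate small-$W$ case being trivial).

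For the $\log W/(\gamma\eps)$ term, I would additionally assume $\bA$ is $\eps$-DP, write $S_j:=\Psi_{\fT}(\cD_j)$, and run the usual packing argument. For each $j$, couple $\bx^{(1)}\sim\cD_1^{n}$ with $\bx^{(j)}\sim\cD_j^{n}$ coordinatewise by the maximal coupling, so the Hamming distance $D_j:=\|\bx^{(1)}-\bx^{(j)}\|_0$ has $\E[D_j]=n\,d_{\TV}(\cD_1,\cD_j)\le\gamma n$; by Markov the event $G_j:=\{D_j\le 10\gamma n\}$ has probability at least $\tfrac9{10}$. On a realized pair in $G_j$, group privacy (\Cref{lem:group-privacy} with $\delta=0$) bounds $\Pr[\bA(\bx^{(j)})\in S_j]$ by $e^{10\eps\gamma n}\Pr[\bA(\bx^{(1)})\in S_j]$ pointwise; taking expectations over $G_j$ and then dropping the indicator of $G_j$ (legitimate since the integrand is nonnegative) gives
\[
\Pr_{\bx^{(j)}\sim\cD_j^{n}}\!\big[\bA(\bx^{(j)})\in S_j\big]\;\le\; e^{10\eps\gamma n}\,\Pr_{\bx^{(1)}\sim\cD_1^{n}}\!\big[\bA(\bx^{(1)})\in S_j\big]+\tfrac1{10}.
\]
Usefulness makes the left-hand side at least $\tfrac23$, so $\Pr_{\bx^{(1)}}[\bA(\bx^{(1)})\in S_j]\ge\tfrac12 e^{-10\eps\gamma n}$ for every $j$; summing over $j\in[W]$ and using that the $S_j$ are disjoint (so these probabilities sum to at most $1$) forces $\tfrac W2 e^{-10\eps\gamma n}\le 1$, i.e.\ $n=\Omega(\log W/(\gamma\eps))$.

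I expect the only delicate point to be in the packing step: the Hamming distance between $\bx^{(1)}$ and $\bx^{(j)}$ is random and unbounded, so group privacy cannot be invoked with a deterministic multiplicative factor. The fix -- conditioning on the high-probability event that the distance is $O(\gamma n)$, applying group privacy there, and charging the complement as additive slack -- is routine but must be arranged so the additive loss stays comfortably below the $\tfrac23$ usefulness threshold, which is why a generous constant such as $10$ appears in the definition of $G_j$. The information-theoretic half is entirely standard.
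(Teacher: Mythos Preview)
The paper does not prove this theorem; it is quoted from \cite{AcharyaSZ21} (as a slight restatement of their Corollary~4) and used as a black box in deriving \Cref{lem:user-dp-fano}. So there is no ``paper's own proof'' to compare against here.

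That said, your proposal is correct and is essentially the standard derivation underlying the cited result: split the bound into a non-private Fano term and a DP packing term, prove each separately, and add. The Fano half is routine (your mutual-information upper bound $I(\theta;X)\le n\beta$ via convexity of KL in the second argument and tensorization is the right one-line justification). The packing half is also correct; your handling of the random Hamming distance --- coupling coordinatewise via the maximal coupling, using Markov to localize to $D_j\le 10\gamma n$, applying group privacy on that event, and absorbing the complementary probability as an additive $1/10$ that stays safely below the $2/3$ usefulness margin --- is exactly the standard maneuver and matches how \cite{AcharyaSZ21} argue. Nothing needs to change.
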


The bound for the user-level case can be derived as follows.

\begin{lemma}[User-Level DP Fano's Inequality] \label{lem:user-dp-fano}
Let $\fT$ be any task. Suppose that there exist $\cD_1, \dots, \cD_W$ such that, for all distinct $i, j \in [W]$,
\begin{itemize}
\item $\Psi_{\fT}(\cD_i) \cap \Psi_{\fT}(\cD_j) = \emptyset$, and,
\item $d_{\KL}(\cD_i, \cD_j) \leq \beta$.
\end{itemize}
Then, $n_m^{\fT}(\eps, \delta = 0) \geq \Omega\left(\frac{\log W}{\eps} + \frac{\log W}{\eps\sqrt{m \beta}} + \frac{\log W}{m \beta}\right)$.
\end{lemma}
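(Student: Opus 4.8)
The plan is to reduce to the item-level DP Fano's inequality (\Cref{thm:item-dp-fano}) by ``packing'' each user's block of $m$ samples into a single sample drawn from a product distribution over the enlarged universe $\MZ^m$. Concretely, let $\fT'$ be the task whose sample universe is $\MZ' := \MZ^m$ and whose set of correct answers is defined by $\Psi_{\fT'}(\cD^m) := \Psi_{\fT}(\cD)$ for all $\cD$ over $\MZ$ (and arbitrarily otherwise). I first claim that $n_m^{\fT}(\eps, \delta = 0) \geq n_1^{\fT'}(\eps, \delta = 0)$. Indeed, given any $(\eps, 0)$-user-level DP algorithm $\bA$ that is $2/3$-useful for $\fT$ with user complexity $n$, let $\bA'$ be the algorithm on $(\MZ^m)^n$ that interprets its input $(w_1, \dots, w_n)$ as the per-user datasets $\bx_1, \dots, \bx_n$ and runs $\bA$. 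Changing a single coordinate $w_i$ is exactly a user-level neighbor change, so $\bA'$ is $(\eps, 0)$-item-level DP; and if $w_i \sim \cD^m$ i.i.d.\ then the concatenation is distributed as $\cD^{nm}$, so $\bA'$ is $2/3$-useful for $\fT'$ with sample complexity $n$. Hence $n \geq n_1^{\fT'}(\eps, 0)$.

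Next I would apply \Cref{thm:item-dp-fano} to the family $\cD_1^m, \dots, \cD_W^m$ over $\MZ^m$. The solution sets $\Psi_{\fT'}(\cD_i^m) = \Psi_{\fT}(\cD_i)$ are pairwise disjoint by hypothesis. By tensorization of KL divergence (\Cref{lem:distrib}(ii)), $d_{\KL}(\cD_i^m~\|~\cD_j^m) = m \cdot d_{\KL}(\cD_i~\|~\cD_j) \leq m\beta$, so the KL parameter is $\beta' := m\beta$. By Pinsker's inequality (\Cref{lem:distrib}(i)) together with the trivial bound $d_{\TV} \leq 1$, we get $d_{\TV}(\cD_i^m, \cD_j^m) \leq \min\{1, \sqrt{m\beta/2}\} =: \gamma'$. \Cref{thm:item-dp-fano} then yields
\begin{align*}
n_1^{\fT'}(\eps, 0) \geq \Omega\left(\frac{\log W}{\gamma' \eps} + \frac{\log W}{\beta'}\right) = \Omega\left(\frac{\log W}{\min\{1, \sqrt{m\beta/2}\} \cdot \eps} + \frac{\log W}{m\beta}\right).
\end{align*}

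Finally I would simplify the first term using $1/\min\{1, \sqrt{m\beta/2}\} = \max\{1, \sqrt{2/(m\beta)}\} \geq \tfrac{1}{2}\left(1 + \sqrt{2/(m\beta)}\right)$, so that the displayed bound is $\Omega\left(\frac{\log W}{\eps} + \frac{\log W}{\eps\sqrt{m\beta}} + \frac{\log W}{m\beta}\right)$; combining with $n_m^{\fT}(\eps, 0) \geq n_1^{\fT'}(\eps, 0)$ completes the proof. There is no genuinely hard step here; the only points requiring care are (i) verifying that the packing preserves \emph{both} the DP neighbor relation and the i.i.d.\ structure needed for usefulness, and (ii) retaining the trivial cap $d_{\TV} \leq 1$ so that the $\frac{\log W}{\eps}$ term is not lost when $m\beta$ is large.
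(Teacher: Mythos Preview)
Your proposal is correct and follows essentially the same approach as the paper: define $\cD'_i := \cD_i^m$, pass to the item-level task $\fT'$ over $\MZ^m$, bound KL by tensorization and TV by Pinsker capped at $1$, invoke \Cref{thm:item-dp-fano}, and then identify $n_m^{\fT}(\eps,0)$ with $n_1^{\fT'}(\eps,0)$. The paper in fact asserts the equality $n_m^{\fT}(\eps,\delta) = n_1^{\fT'}(\eps,\delta)$ rather than just your inequality, but your direction is all that is needed for the lower bound.
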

\begin{proof}
Let $\cD'_i = (\cD_i)^m$ for all $i \in [W]$; note that $d_{\KL}(\cD'_i~\|~\cD'_j) \leq m\beta$. By Pinsker's inequality (\Cref{lem:distrib}(i)), $d_{\TV}(\cD'_i~\|~\cD'_j) \leq O(\sqrt{m\beta})$; furthermore, we also have the trivial bound $d_{\TV}(\cD'_i~\|~\cD'_j) \leq 1$. Finally, define a task $\fT'$ by $\Psi_{\fT'}(\cD'_i) := \Psi_{\fT}(\cD_i)$ for all $i \in [W]$. Now, applying \Cref{thm:item-dp-fano}, 
\begin{align*}
n_1^{\fT'}(\eps, \delta = 0) \geq \Omega\left(\frac{\log W}{\min\{1, \sqrt{m\beta}\} \cdot \eps} + \frac{\log W}{m \beta}\right) = \Omega\left(\frac{\log W}{\eps} + \frac{\log W}{\eps\sqrt{m \beta}} + \frac{\log W}{m \beta}\right).
\end{align*}
Finally, observing that $n_m^{\fT}(\eps, \delta) = n_1^{\fT'}(\eps, \delta)$ yields our final bound.
\end{proof}

\subsection{Applications}

Our user-level DP EM with pairwise score given above has a wide variety of applications. We now give a few examples.

\subsubsection{Agnostic PAC Learning}

We start with \emph{agnostic PAC learning}. The setting is exactly the same as in PAC learning (described in the introduction; see \Cref{thm:pac-pure-main}) except that we do not assume that $\cD$ is realizable by some concept in $\MC$. Instead, the task $\agn(\MC; \alpha)$ seeks an output $c: \MX \to \{0, 1\}$ such that $\Err_{\cD}(c) \leq \min_{c' \in \MC} \Err_{\cD}(c') + \alpha$.

\begin{theorem} \label{thm:pure-agn-pac}
Let $\MC$ be any concept class with probabilistic representation dimension $d$ (i.e., $\prdim(\MC) = d$) and let $d_{\VC}$ be its VC dimension. Then, for any sufficiently small $\alpha, \eps > 0$ and for all $m \in \BN$,
\begin{align*}
n_m^{\agn(\MC; \alpha)}(\eps, \delta = 0) \leq \tO\left(\frac{d}{\eps} + \frac{d}{\alpha \eps \sqrt{m}} + \frac{d}{\alpha^2 m}\right),
\end{align*}
and
\begin{align*}
n_m^{\agn(\MC; \alpha)}(\eps, \delta = 0) \geq \tOmega\left(\frac{d}{\eps} + \frac{d_{\VC}}{\alpha \eps \sqrt{m}} + \frac{d_{\VC}}{\alpha^2 m}\right).
\end{align*}
\end{theorem}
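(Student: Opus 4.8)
The plan is to prove the upper bound with the pairwise‑score exponential mechanism of \Cref{thm:pairwise-exp-user-level} (specialized to $0/1$ loss) and the lower bound by combining the realizable PAC lower bound with the user‑level DP Fano inequality of \Cref{lem:user-dp-fano}.

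\textbf{Upper bound.} First I would use \Cref{lem:prdim-acc-boost} to get a $(0.01\alpha, 1/4)$‑probabilistic representation $\MP$ of $\MC$ with $\size(\MP) = \tO(d)$ (the $\log(1/\alpha)$ factor being absorbed). Sample $\cH \sim \MP$ and run $\PWEM$ on candidate set $\cH$ with the comparison functions $\psi_{h,h'}(x,y) := \ind[h(x)\ne y] - \ind[h'(x)\ne y] \in [-1,1]$, so that $\pscr^{\psi}_{h,h'}(\cD) = \Err_{\cD}(h) - \Err_{\cD}(h')$ and hence $\scr^{\psi}_{h}(\cD) = \Err_{\cD}(h) - \min_{h'\in\cH}\Err_{\cD}(h')$. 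Taking $H^* := \argmin_{h\in\cH}\Err_{\cD}(h)$ gives $\scr^{\psi}_{H^*}(\cD)\le 0 \le 0.1\alpha$, so the hypothesis of \Cref{thm:pairwise-exp-user-level} is met. With $|\cH| = 2^{\tO(d)}$ and $\beta$ a constant, that theorem then outputs, with probability $2/3$, some $H$ with $\scr^{\psi}_{H}(\cD)\le 0.5\alpha$, i.e. $\Err_{\cD}(H) \le \min_{h'\in\cH}\Err_{\cD}(h') + 0.5\alpha$, as long as $n \ge \tTheta(d/\eps + d/(\alpha\eps\sqrt m) + d/(\alpha^2 m))$. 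Finally, since $\MP$ is a $(0.01\alpha,1/4)$‑representation, with probability $3/4$ over $\cH\sim\MP$ there is $h^*\in\cH$ with $\Pr_x[c^*(x)\ne h^*(x)]\le 0.01\alpha$ for an optimal $c^*\in\MC$, hence $\min_{h'\in\cH}\Err_{\cD}(h') \le \Err_{\cD}(h^*) \le \Opt + 0.01\alpha$; combining with the previous line gives $\Err_{\cD}(H)\le \Opt + \alpha$ (after adjusting the internal constants of $\PWEM$). A union bound over the two events, plus the standard probability boosting, completes the upper bound.

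\textbf{Lower bound.} The $\tOmega(d/\eps)$ term is immediate: a realizable distribution is in particular an agnostic one with $\Opt = 0$, so any algorithm for $\agn(\MC;\alpha)$ solves $\PAC(\MC;\alpha)$, whence $n_m^{\agn(\MC;\alpha)}(\eps,0) \ge n_m^{\PAC(\MC;\alpha)}(\eps,0) \ge \tOmega(d/\eps)$ by \cite{GhaziKM21} (the bound cited in \Cref{thm:pac-pure-main}, which holds for every $m$). For the remaining two terms I would set up a packing for \Cref{lem:user-dp-fano}: fix $x_1,\dots,x_{d_{\VC}}$ shattered by $\MC$, and, via \Cref{thm:ecc}, codewords $u_1,\dots,u_W\in\{0,1\}^{d_{\VC}}$ with $W = 2^{\Omega(d_{\VC})}$ and pairwise Hamming distance $\Omega(d_{\VC})$. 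For each codeword $u$, let $\cD_u$ have marginal uniform on $\{x_1,\dots,x_{d_{\VC}}\}$ and conditional label $y = u_k$ with probability $\tfrac12 + C\alpha$ given $x = x_k$, for a large absolute constant $C$. Because $\MC$ shatters the set, $\Opt_{\cD_u} = \tfrac12 - C\alpha$, attained by the concept agreeing with $u$; and any $\alpha$‑accurate output disagrees with $u$ on at most $\tfrac{d_{\VC}}{2C}$ of the shattered points, so for $C$ large enough relative to the code‑distance constant the sets $\Psi_{\agn(\MC;\alpha)}(\cD_u)$ are pairwise disjoint. A short computation gives $d_{\KL}(\cD_u \,\|\, \cD_{u'}) = O(\alpha^2) =: \beta$. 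Plugging $\log W = \Omega(d_{\VC})$ and $\beta = O(\alpha^2)$ into \Cref{lem:user-dp-fano} yields $n_m^{\agn(\MC;\alpha)}(\eps,0) \ge \Omega(d_{\VC}/\eps + d_{\VC}/(\eps\alpha\sqrt m) + d_{\VC}/(\alpha^2 m))$, and combining this with the $\tOmega(d/\eps)$ bound (using $d = \prdim(\MC) \ge d_{\VC}$) gives the stated lower bound.

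\textbf{Main obstacle.} The computations are routine; the delicate point is calibrating the bias $C\alpha$ in the lower‑bound packing: it must be large enough (as a multiple of $\alpha$) that an $\alpha$‑accurate hypothesis is forced to lie within less than half the code distance of the true codeword — so the solution sets are genuinely disjoint — yet still $\Theta(\alpha)$ so that the per‑example KL divergence stays $O(\alpha^2)$, which is precisely the $\beta$ that produces the $1/(\alpha^2 m)$ and $1/(\alpha\eps\sqrt m)$ terms. On the upper‑bound side, the one subtlety is that $\PWEM$ controls $\scr^{\psi}_H(\cD) = \Err_{\cD}(H) - \min_{h'\in\cH}\Err_{\cD}(h')$ rather than $\Err_{\cD}(H) - \Opt$, which is exactly why the representation accuracy $0.01\alpha$ (not merely $1/4$) must be built into the choice of $\MP$.
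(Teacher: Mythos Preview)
Your proposal is correct and follows essentially the same approach as the paper: the upper bound uses $\PWEM$ on a $(0.01\alpha,1/4)$-probabilistic representation with the error-difference comparison function, and the lower bound combines the $\tOmega(d/\eps)$ realizable bound from \cite{GhaziKM21} with a biased-label packing over a shattered set (codewords from \Cref{thm:ecc}) fed into \Cref{lem:user-dp-fano}. The paper makes the bias explicit as $4\alpha/\kappa$ (where $\kappa$ is the code-distance constant) and bounds $d_{\KL}$ via $d_{\chi^2}$, but these are exactly the calibrations you describe.
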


Observe that our bounds are (up to polylogarithmic factors) the same except for the $d$-vs-$d_{\VC}$ in the last two terms. We remark that the ratio $d/d_{\VC}$ is not always bounded; e.g., for thresholds, $d_{\VC} = 2$ while $d = \Theta(\log |\MZ|)$. A more careful argument can show that $d$ in the last two terms in the upper bound can be replaced by the (maximum) VC dimension of the probabilistic representation (which is potentially smaller); this actually closes the gap in the particular case of thresholds. However, in the general case, the VC dimension of $\MC$ and the VC dimension of its probabilistic represetation are not necessarily equal. It remains an interesting open question to close this gap.

We now prove \Cref{thm:pure-agn-pac}. The algorithm is similar to that in the proof of \Cref{thm:pac-pure-main}, except that we use the pairwise scoring function to compare the errors of the two hypotheses.

\begin{proof}[Proof of \Cref{thm:pure-agn-pac}]
\textbf{Algorithm.} 
For any two hypotheses $c, c': \MX \to \{0,1\}$, let the comparison function between two concepts be $\psi_{c,c'}((x, y)) = \ind[c(x) = y] - \ind[c'(x) = y]$.

Let $\MP$ denote a $(0.01\alpha, 1/4)$-PR of $\MC$; by \Cref{lem:prdim-acc-boost}, there exists such $\MP$ with $\size(\MP) \leq \tO\left(d \cdot \log(1/\alpha)\right)$.
Our algorithm works as follows: Sample $\MH \sim \MP$ and then run the $\eps$-DP pairwise scoring EM (\Cref{thm:pairwise-exp-user-level}) on candidate set $\cH$ with the comparison function $\psi_{c,c'}$ defined above.
The privacy guarantee follows from that of \Cref{thm:pairwise-exp-user-level}. 

As for the utility, first observe that 
\begin{align} \label{eq:pw-err}
\pscr^{\psi}_{c, c'}(\cD) = \Err_{\cD}(c) - \Err_{\cD}(c').
\end{align}
This means that, if we let $h^* \in \argmin_{h \in \cH} \Err_{\cD}(h)$, then $\scr^{\psi}_{h^*}(\cD) \leq 0$. Thus, \Cref{thm:pairwise-exp-user-level} ensures that, w.p. 0.99, if $n \geq \tTheta\left(\frac{\size(\MP)}{\eps} + \frac{\size(\MP)}{\alpha \eps \sqrt{m}} + \frac{\size(\MP)}{\alpha^2 m}\right) = \tTheta\left(\frac{d}{\eps} + \frac{d}{\alpha \eps \sqrt{m}} + \frac{d}{\alpha^2 m}\right)$, then the output $h$ satisfies $\scr^{\psi}_{h}(\cD) \leq 0.5\alpha$. By \eqref{eq:pw-err}, this also means that $\Err_{\cD}(h) \leq 0.5\alpha + \Err_{\cD}(h^*)$. Finally, by the guarantee of $\MP$, we have that $\Err_{\cD}(h^*) \leq 0.01\alpha + \min_{c' \in \MC} \Err_{\cD}(c')$ with probability 3/4. By a union bound, we can conclude that w.p. more than 2/3, we have $\Err_{\cD}(h) < \alpha + \min_{c' \in \MC} \Err_{\cD}(c')$.

\textbf{Lower Bound.} The lower bound $\Omega(d/\eps)$ was shown in \cite{GhaziKM21} (and holds even for realizable PAC learning). 
To derive the lower bound $\Omega\left(\frac{d_{\VC}}{\alpha \eps \sqrt{m}} + \frac{d_{\VC}}{\alpha^2 m}\right)$, we will use DP Fano's inequality (\Cref{lem:user-dp-fano}). To do so, recall that $\MC$ having VC dimension $d_{\VC}$ means that there exist $\{x_1, \dots, x_{d_{\VC}}\}$ that is shattered by $\MZ$. From \Cref{thm:ecc}, there exist $u_1, \dots, u_{W} \in \{0, 1\}^{d_{\VC}}$ where $W = 2^{\Omega(d_{\VC})}$ such that $\|u_i - u_j\|_0 \geq \kappa \cdot d_{\VC}$ for some constant $\kappa$. For any sufficiently small $\alpha < \kappa / 8$, we can define the distribution $\cD_i$ for $i \in [W]$ by
\begin{align*}
\cD_i((x_\ell, y)) &= \frac{1}{2d_{\VC}}\cdot\left(1 + \frac{4\alpha}{\kappa} \cdot (2\ind[(u_{i})_\ell = y] - 1)\right),
\end{align*}
for all $\ell \in [W]$ and $y \in \{0, 1\}$.
Next, notice that, for any $i \in [W]$, we may select $c_i$ such that $\Err_{\cD}(c_i) = \frac{1}{2}\left(1 - \frac{4\alpha}{\kappa}\right)$.

Furthermore, for any $i \ne j \in [W]$ and any hypothesis $h: \MX \to \{0, 1\}$, we have 
\begin{align*}
\Err_{\cD_i}(h) + \Err_{\cD_j}(h) \geq \left(1 - \frac{4\alpha}{\kappa}\right) + \frac{4\alpha}{\kappa} \cdot \frac{\kappa \cdot d_{\VC}}{d_{\VC}} =  \left(1 - \frac{4\alpha}{\kappa}\right) + 4\alpha.
\end{align*}
This means that $\Err_{\cD_i}(h) > \frac{1}{2}\left(1 - \frac{4\alpha}{\kappa}\right) + 2\alpha$ or $\Err_{\cD_j}(h) > \frac{1}{2}\left(1 - \frac{4\alpha}{\kappa}\right) + 2\alpha$. In other words, we have $\Psi_{\agn(\MC; \alpha)}(\cD_i) \cap \Psi_{\agn(\MC; \alpha)}(\cD_j) = \emptyset$. Finally, we have
\begin{align*}
d_{\KL}(\cD_i~\|~\cD_j) &\leq d_{\chi^2}(\cD_i~\|~\cD_j) \\
&= \sum_{\ell \in [d_{\VC}], y \in \{0, 1\}} \frac{\left(\cD_i((x, y)) - \cD_j((x, y))\right)^2}{\cD_j((x, y))} \\
&\leq  \sum_{\ell \in [d_{\VC}], y \in \{0, 1\}} \frac{O(\alpha / d_{\VC})^2}{\nicefrac{1}{4d_{\VC}}} \\
&= \sum_{\ell \in [d_{\VC}]} O(\alpha^2 / d_{\VC}) \\
&\leq O(\alpha^2).
\end{align*}
Plugging this into \Cref{lem:user-dp-fano}, we get that $n_m^{\agn(\MC; \alpha)} \geq \Omega\left(\frac{d_{\VC}}{\eps \alpha \sqrt{m}} + \frac{d_{\VC}}{m \alpha^2}\right)$ as desired.
\end{proof}

\subsubsection{Private Hypothesis Selection}

In \emph{Hypothesis Selection problem}, we are given a set $\fP$ of hypotheses, where each $\MP \in \fP$ is a distribution over some domain $\MZ$. The goal is, for the underlying distribution $\cD$, to output $\cP^* \in \fP$ that is the closest (in TV distance) to $\cD$. We state below the guarantee one can get from our approach:

\begin{theorem} \label{thm:hypothesis-selection}
Let $\alpha, \beta \in (0, 0.1)$. Then, for any $\eps > 0$, there exists an $\eps$-user-level DP algorithm for Hypothesis Selection that, when each user has $m$ samples and $\min_{\cP' \in \fP} d_{\TV}(\cP', \cD) \leq 0.1\alpha$, with probability $1 - \beta$, outputs $\cP \in \fP$ such that $d_{\TV}(\cP, \cD) \leq \alpha$ as long as
\begin{align*}
n \geq \tTheta\left(\frac{\log(|\fP|/\beta)}{\eps} + \frac{\log(|\fP|/\beta)}{\alpha \eps \sqrt{m}} + \frac{\log(|\fP|/\beta)}{\alpha^2 m}\right).
\end{align*}
\end{theorem}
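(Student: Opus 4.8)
The plan is to realize hypothesis selection as an instance of the clipped pairwise-score exponential mechanism $\PWEM$ and invoke \Cref{thm:pairwise-exp-user-level} with candidate set $\cH = \fP$, taking the comparison functions to come from Scheffé sets. Concretely, for an ordered pair $\cP, \cP' \in \fP$ let $A_{\cP, \cP'} := \{z \in \cZ : \cP'(z) > \cP(z)\}$ be the set on which the \emph{second} hypothesis has the larger density, and set
\[ \psi_{\cP, \cP'}(z) \;:=\; \ind[z \in A_{\cP, \cP'}] - \cP(A_{\cP, \cP'}). \]
Since $\cP(A_{\cP,\cP'}) \in [0,1]$ this is $[-1,1]$-valued as \Cref{thm:pairwise-exp-user-level} demands. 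Two elementary identities drive everything: $\pscr^{\psi}_{\cP,\cP'}(\cD) = \cD(A_{\cP,\cP'}) - \cP(A_{\cP,\cP'})$, and $\cP'(A_{\cP,\cP'}) - \cP(A_{\cP,\cP'}) = d_{\TV}(\cP, \cP')$ since $A_{\cP,\cP'}$ is exactly the Scheffé set where $\cP' > \cP$. In particular $|\pscr^{\psi}_{\cP,\cP'}(\cD)| \le d_{\TV}(\cP, \cD)$ for every $\cP'$, hence $\scr^{\psi}_{\cP}(\cD) \le d_{\TV}(\cP, \cD)$. The algorithm then just runs $\PWEM$ with these comparison functions, and its $\eps$-user-level DP is immediate from \Cref{thm:pairwise-exp-user-level}.

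Next I would check the two hypotheses of \Cref{thm:pairwise-exp-user-level}. Let $\cP^* \in \fP$ achieve $\eta := \min_{\cP' \in \fP} d_{\TV}(\cP', \cD) \le 0.1\alpha$; then the displayed bound gives $\scr^{\psi}_{\cP^*}(\cD) \le d_{\TV}(\cP^*, \cD) = \eta \le 0.1\alpha$, so the ``realizability'' assumption holds with $H^* = \cP^*$. Applying \Cref{thm:pairwise-exp-user-level} with $|\cH| = |\fP|$ then yields, under the stated bound on $n$, an output $\cP$ with $\scr^{\psi}_{\cP}(\cD) \le 0.5\alpha$ with probability $1-\beta$. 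To turn this into a total-variation guarantee I would specialize to $\cP' = \cP^*$ (WLOG $\cP \ne \cP^*$, else the conclusion is trivial): writing $A := A_{\cP,\cP^*}$ we get $\cD(A) - \cP(A) \le 0.5\alpha$, so
\[ d_{\TV}(\cP, \cP^*) = \cP^*(A) - \cP(A) = \big(\cP^*(A) - \cD(A)\big) + \big(\cD(A) - \cP(A)\big) \le d_{\TV}(\cP^*, \cD) + 0.5\alpha \le 0.6\alpha, \]
and the triangle inequality gives $d_{\TV}(\cP, \cD) \le 0.6\alpha + 0.1\alpha = 0.7\alpha \le \alpha$. The sample-complexity bound is precisely the one supplied by \Cref{thm:pairwise-exp-user-level} with $\cH = \fP$, which is exactly the statement to be proved.

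The only genuine design choice — and the main obstacle — is picking the comparison function so that it is simultaneously $[-1,1]$-valued, has $\pscr^{\psi}_{\cP,\cP'}(\cD)$ small for \emph{all} $\cP'$ whenever $\cP$ is $0.1\alpha$-close to $\cD$ (so the $0.1\alpha$ threshold of \Cref{thm:pairwise-exp-user-level} is met), and conversely forces $\cP$ to be within $\alpha$ of $\cD$ whenever $\scr^{\psi}_{\cP}(\cD) \le 0.5\alpha$. The asymmetric Scheffé choice above — comparing the $\cD$-versus-$\cP$ discrepancy on the set where the \emph{opponent} density dominates — is what makes all three hold with slack in the constants; a symmetric or midpoint-shifted variant records a non-positive (hence non-binding) score when $\cP$ is bad and therefore fails to detect it. All of the probabilistic content (concentration of the clipped per-user sums, the choice of clipping threshold $\tau$, and the resulting $1/\sqrt m$ and $1/m$ savings) is already packaged inside \Cref{thm:pairwise-exp-user-level}, so beyond verifying these three structural properties no further estimates are needed.
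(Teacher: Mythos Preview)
Your proposal is correct and follows essentially the same approach as the paper: both apply \Cref{thm:pairwise-exp-user-level} with the Scheff\'{e}-based comparison function (the paper uses the complementary set $\MW_{\cP,\cP'}=\{z:\cP(z)>\cP'(z)\}$ with the sign of $\psi$ flipped, which gives the same $\pscr$ up to the measure of the tie set and leads to the identical chain of inequalities). If anything, you are slightly more explicit than the paper in writing down $\psi$ and in carrying the constants through the final triangle-inequality step.
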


For $m = 1$, the above theorem essentially match the item-level DP sample complexity bound from~\cite{BunKSW19}. The proof also proceeds similarly as theirs, except that we use the user-level DP EM rather than the item-level DP one.

\begin{proof}[Proof \Cref{thm:hypothesis-selection}]
We use the so-called \emph{Scheff\'{e} score} similar to~\cite{BunKSW19}: For every $\cP, \cP' \in \fP$, we define $\MW_{\cP, \cP'}$ as the set $\{z \in \MZ \mid \cP(z) > \cP'(z)\}$.  
We then use the $\eps$-user-level DP pairwise scoring EM (\Cref{thm:pairwise-exp-user-level}). The privacy guarantee follows immediately from the theorem.

For the utility analysis, let $\cP^* \in \argmin_{\cP \in \fP} d_{\TV}(\cP, \cD)$. Recall from our assumption that $d_{\TV}(\cP^*, \cD) \leq 0.1\alpha$. Furthermore, observe that
\begin{align*}
\pscr^{\psi}_{\cP^*, \cP}(\cD) = \cP^*(\MW_{\cP^*, \cP}) - \cD(\MW_{\cP^*, \cP}) \leq d_{\TV}(\cD, \cP^*) \leq 0.1\alpha. 
\end{align*}
Moreover, for every $\cP \in \fP$, we have
\begin{align*}
\pscr^{\psi}_{\cP, \cP^*}(\cD) &=  \cP(\MW_{\cP, \cP^*}) - \cD(\MW_{\cP, \cP^*}) \\
&= (\cP(\MW_{\cP, \cP^*}) - \cP^*(\MW_{\cP, \cP^*})) - (\cD(\MW_{\cP, \cP^*}) - \cP^*(\MW_{\cP, \cP^*})) \\
&= d_{\TV}(\cP, \cP^*) - (\cD(\MW_{\cP, \cP^*}) - \cP^*(\MW_{\cP, \cP^*})) \\
&\geq d_{\TV}(\cP, \cP^*) - d_{\TV}(\cP^*, \cD) \\
&\geq d_{\TV}(\cP, \cP^*) - 0.1\alpha.
\end{align*}
Thus, applying the utility guarantee from \Cref{thm:pairwise-exp-user-level}, we can conclude that, with probability $1 - \beta$, the algorithm outputs $\cP$ such that $d_{\TV}(\cP, \cD) \leq 0.5\alpha + 0.1\alpha < \alpha$ as desired.
\end{proof}

\subsubsection{Distribution Learning}
\label{sec:dist-learning}

Private hypothesis selection has a number of applications, arguably the most prominent one being \emph{distribution learning}. In distribution learning, there is a family $\fP$ of distributions. The underlying distribution $\cD$ comes from this family. The task $\dlearn(\fP; \alpha)$ is to output a distribution $\cQ$ such that $d_{\TV}(\cQ, \cD) \leq \alpha$, where $\alpha > 0$ denotes  the accuracy parameter.

For convenient, let us defined \emph{packing} and \emph{covering} of a family of distributions:
\begin{itemize}
\item A family $\fQ$ of distributions is an \emph{$\alpha$-cover} of a family $\fP$ of distributions under distance measure $d$ iff, for all $\cP \in \fP$, there exists $\cQ \in \fQ$ such that $d(\cQ, \cP) \leq \alpha$. 
\item A family $\fQ$ of distributions is an \emph{$\alpha$-packing} of a family $\fP$ of distributions under distance measure $d$ iff, $\fQ \subseteq \fP$ and for all distinct $\cQ, \cQ' \in \fQ$, we have $d(\cQ, \cQ') \geq \alpha$.
\end{itemize}
The size of the cover / packing is defined as $\size(\fQ) := \log(|\fQ|)$. The diameter of $\fQ$ under distance $d$ is defined as $\max_{\cQ, \cQ' \in \fQ} d(\cQ, \cQ')$.

The following convenient lemma directly follows from \Cref{thm:hypothesis-selection} and \Cref{lem:user-dp-fano}.

\begin{theorem}[Distribution Learning---Arbitrary Family] \label{thm:dlearn}
Let $\fP$ be any familiy of distributions, and let $\alpha, \beta, \eps > 0$ be sufficiently small. If there exists an $0.1\alpha$-cover of $\fP$ under the TV distance of size $L$, then
\begin{align*}
n_m^{\dlearn(\fP; \alpha)}(\eps, \delta = 0) \leq \tO\left(\frac{L}{\eps} + \frac{L}{\alpha \eps \sqrt{m}} + \frac{L}{\alpha^2 m}\right).
\end{align*}
Furthermore, if there exists a family $\fQ$ of size $L$ that is an $3\alpha$-packing of $\fP$ under the TV distance and has diameter at most $\beta$ under the KL-divergence, then
\begin{align*}
n_m^{\dlearn(\fP; \alpha)}(\eps, \delta = 0) \geq \tOmega\left(\frac{L}{\eps} + \frac{L}{\eps \sqrt{m \beta}} + \frac{L}{m \beta}\right).
\end{align*}
\end{theorem}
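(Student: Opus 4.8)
The plan is to derive both bounds as direct corollaries of the results already established for Hypothesis Selection and for the user-level DP Fano inequality.

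\textbf{Upper bound.} First I would reduce $\dlearn(\fP;\alpha)$ to Hypothesis Selection over the cover. Let $\fQ$ be the given $0.1\alpha$-cover of $\fP$ under $d_{\TV}$ with $\size(\fQ)=L$. Since the underlying distribution $\cD$ lies in $\fP$, the cover property supplies some $\cQ^\ast\in\fQ$ with $d_{\TV}(\cQ^\ast,\cD)\le 0.1\alpha$, so the hypothesis $\fP$ of \Cref{thm:hypothesis-selection} can be taken to be $\fQ$ and its hypothesis on the existence of a near-optimal candidate is met. Running that algorithm with failure probability a fixed constant (say $\beta=1/3$, matching the $\gamma=2/3$ usefulness convention) is $\eps$-user-level DP and, with probability $2/3$, returns $\cQ\in\fQ$ with $d_{\TV}(\cQ,\cD)\le\alpha$, provided $n\ge\tTheta\!\left(\frac{\log(|\fQ|/\beta)}{\eps}+\frac{\log(|\fQ|/\beta)}{\alpha\eps\sqrt{m}}+\frac{\log(|\fQ|/\beta)}{\alpha^2 m}\right)$. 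Substituting $\log|\fQ|=L$ and absorbing the constant $\log(1/\beta)$ into $\tO(\cdot)$ gives the stated bound on $n_m^{\dlearn(\fP;\alpha)}(\eps,\delta=0)$.

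\textbf{Lower bound.} Next I would apply the user-level DP Fano inequality \Cref{lem:user-dp-fano} with the task $\fT=\dlearn(\fP;\alpha)$ and the given $3\alpha$-packing $\fQ=\{\cD_1,\dots,\cD_W\}\subseteq\fP$, where $\log W=\size(\fQ)=L$ and $d_{\KL}(\cD_i\,\|\,\cD_j)\le\beta$ for all distinct $i,j$. The KL hypothesis of \Cref{lem:user-dp-fano} is handed to us directly. For the disjointness hypothesis, note that $\Psi_{\dlearn(\fP;\alpha)}(\cD_i)$ is exactly the set of distributions within TV distance $\alpha$ of $\cD_i$; if some $\cQ$ lay in both $\Psi_{\dlearn(\fP;\alpha)}(\cD_i)$ and $\Psi_{\dlearn(\fP;\alpha)}(\cD_j)$, the triangle inequality for $d_{\TV}$ would force $d_{\TV}(\cD_i,\cD_j)\le 2\alpha<3\alpha$, contradicting the packing condition. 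Hence \Cref{lem:user-dp-fano} yields $n_m^{\dlearn(\fP;\alpha)}(\eps,\delta=0)\ge\Omega\!\left(\frac{\log W}{\eps}+\frac{\log W}{\eps\sqrt{m\beta}}+\frac{\log W}{m\beta}\right)$, which becomes the claimed $\tOmega(L/\eps+L/(\eps\sqrt{m\beta})+L/(m\beta))$ after plugging in $\log W=L$.

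\textbf{Main obstacle.} I do not expect a genuine obstacle: the theorem is essentially a repackaging of \Cref{thm:hypothesis-selection} and \Cref{lem:user-dp-fano}. The only two points that need care are (a) checking the disjointness of the solution sets, which is where the slack between $3\alpha$ (packing separation) and $2\alpha$ (two balls of radius $\alpha$) is used, and (b) reconciling the $\log(|\fQ|/\beta)$ dependence of Hypothesis Selection with the clean $\tO(L/\eps+\cdots)$ in the statement by fixing $\beta$ to an absolute constant and folding the resulting logarithmic factor into $\tO(\cdot)$; this is also why the upper bound is stated without a $\beta$-dependence while the lower bound, which genuinely uses the KL-diameter $\beta$, retains it.
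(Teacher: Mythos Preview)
Your proposal is correct and matches the paper's approach exactly: the paper states that the theorem ``directly follows from \Cref{thm:hypothesis-selection} and \Cref{lem:user-dp-fano}'' without spelling out details, and you have supplied precisely those details, including the triangle-inequality check that the $3\alpha$-packing yields disjoint solution sets and the observation that fixing the failure probability to a constant absorbs the $\log(1/\beta)$ term into $\tO(\cdot)$.
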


In theorems below, we abuse the notations $\tTheta, \tO$ and also use them to suppress terms that are polylogarithmic in $1/\alpha, d, R$, and $\kappa$.

\paragraph{Discrete Distributions.} First is for the task of discrete distribution learning on domain $[k]$ (denoted by $\mathrm{DD}k(\alpha)$), which has been studied before in \cite{LiuSYK020}. In this case, $\fP$ consists of all distributions over the domain $[k]$. We have the following theorem:

\begin{theorem}[Distribution Learning---Discrete Distributions] \label{thm:discrete-dist}
For any sufficiently small $\alpha, \eps > 0$ and for all $m \in \BN$, we have $$\textstyle n_m^{\mathrm{DD}k(\alpha)}(\eps, \delta=0) = \tTheta\left(\frac{k}{\eps} + \frac{k}{\alpha \eps \sqrt{m}} + \frac{k}{\alpha^2 m}\right).$$
\end{theorem}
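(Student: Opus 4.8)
The plan is to derive both the upper and the lower bound as immediate consequences of \Cref{thm:dlearn}, instantiated with $\fP$ being the family of \emph{all} distributions on $[k]$ (so that $\dlearn(\fP;\alpha)$ is exactly $\mathrm{DD}k(\alpha)$). Thus everything reduces to (a) exhibiting a small $0.1\alpha$-cover of $\fP$ in total variation distance, and (b) exhibiting a large $3\alpha$-packing of $\fP$ of small KL-diameter.

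For the upper bound, I would use the textbook grid cover of the simplex: round every coordinate of a distribution down to the nearest multiple of $0.01\alpha/k$ and renormalize. This produces a family $\fQ$ with $|\fQ| \le (O(1/\alpha))^{k}$ whose $\ell_1$ (hence twice-TV) rounding error is at most $0.01\alpha$, so $\fQ$ is an $0.1\alpha$-cover and $L := \size(\fQ) = \log|\fQ| = \tO(k)$. Plugging $L = \tO(k)$ into the upper-bound half of \Cref{thm:dlearn} gives $n_m^{\mathrm{DD}k(\alpha)}(\eps,0) \le \tO\!\left(\tfrac{k}{\eps} + \tfrac{k}{\alpha\eps\sqrt m} + \tfrac{k}{\alpha^2 m}\right)$.

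For the lower bound, I would construct the packing from a constant-rate constant-distance binary code. Using \Cref{thm:ecc}, fix $u_1,\dots,u_W \in \{0,1\}^{\lfloor k/2\rfloor}$ with $W = 2^{\Omega(k)}$ and $\|u_i-u_j\|_0 \ge \Omega(k)$ for $i \ne j$. Group the $k$ elements into blocks $\{(\ell,0),(\ell,1)\}_{\ell \le \lfloor k/2\rfloor}$ and define, for a small constant $c$,
\begin{align*}
\cD_i((\ell,b)) \;=\; \frac1k\Bigl(1 + c\alpha\bigl(2\,\ind[(u_i)_\ell = b]-1\bigr)\Bigr),
\end{align*}
which is a valid distribution since each block carries total mass $2/k$. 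A one-line computation gives $d_{\TV}(\cD_i,\cD_j) = \Theta\!\left(\alpha\,\|u_i-u_j\|_0/k\right)$, which is $\ge 3\alpha$ for a suitable choice of $c$; hence $\{\cD_i\}$ is a $3\alpha$-packing of $\fP$. By \Cref{lem:distrib}(iii), $d_{\KL}(\cD_i\,\|\,\cD_j) \le d_{\chi^2}(\cD_i\,\|\,\cD_j) = \sum_{\ell,b}\tfrac{(\cD_i - \cD_j)^2}{\cD_j} = O(\alpha^2)$, so the KL-diameter is $\beta = O(\alpha^2)$. Feeding $L = \log W = \Omega(k)$ and $\beta = O(\alpha^2)$ into the lower-bound half of \Cref{thm:dlearn} yields $n_m^{\mathrm{DD}k(\alpha)}(\eps,0) \ge \tOmega\!\left(\tfrac{k}{\eps} + \tfrac{k}{\eps\sqrt{m\alpha^2}} + \tfrac{k}{m\alpha^2}\right) = \tOmega\!\left(\tfrac{k}{\eps} + \tfrac{k}{\alpha\eps\sqrt m} + \tfrac{k}{\alpha^2 m}\right)$, matching the upper bound.

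There is no serious obstacle here; the work is bookkeeping. The one place to be a little careful is the calibration of the constant $c$ in the packing: it must be large enough (relative to the code's relative distance) that the pairwise TV distance clears the $3\alpha$ threshold, yet the resulting $\chi^2$-divergence — and hence $\beta$, which governs the second and third terms of the lower bound — must still be $O(\alpha^2)$; and we need $c\alpha \le 1$ so the $\cD_i$'s remain genuine distributions, which holds since $\alpha$ is assumed sufficiently small.
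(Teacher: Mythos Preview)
Your proposal is correct and follows essentially the same route as the paper: both bounds are obtained by instantiating \Cref{thm:dlearn}, using the standard grid cover of the simplex for the upper bound and a code-based perturbation of the uniform distribution (via \Cref{thm:ecc}) for the lower bound, with the same $O(\alpha^2)$ KL-diameter computation through $\chi^2$. The only cosmetic difference is that the paper fixes explicit constants in the packing while you leave $c$ to be calibrated; your closing paragraph correctly identifies the (mild) constraint this places on $c$.
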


\begin{proof}
\textbf{Upper bound.} It is simple to see that there exists an $0.1\alpha$-cover of the family of size $O(k\cdot\log(k/\alpha))$: We simply let the cover contain all distributions whose probability mass at each point is a multiple of $\lfloor 10k / \alpha \rfloor$. Plugging this into \Cref{thm:dlearn} yields the desired upper bound.

\textbf{Lower bound.} As for the lower bound, we may use a construction similar to that of \cite{AcharyaSZ21} (and also similar to that in the proof of \Cref{thm:pure-agn-pac}). Assume w.l.o.g. that $k$ is even.
From \Cref{thm:ecc}, there exist $u_1, \dots, u_{W} \in \{0, 1\}^{k/2}$ where $W = 2^{\Omega(k)}$ such that $\|u_i - u_j\|_0 \geq \kappa \cdot k$ for some constant $\kappa$. For any sufficiently small $\alpha < \kappa / 16$, we can define the distribution $\cD_i$ for $i \in [W]$ by
\begin{align*}
\cD_i(2\ell - 1) &= \frac{1}{2k}\left(1 + \frac{8\alpha}{\kappa} \cdot \ind[(u_i)_\ell = 0]\right) \\
\cD_i(2\ell) &= \frac{1}{2k}\left(1 - \frac{8\alpha}{\kappa} \cdot \ind[(u_i)_\ell = 0]\right),
\end{align*}
for all $\ell \in [k/2]$. By a similar calculation as in the proof of \Cref{thm:pure-agn-pac}, we have that it is an $(3\alpha)$-packing under TV distance and its diameter under KL-divergence is at most $O(\alpha^2)$. Plugging this into \Cref{thm:dlearn} then gives the lower bound.
\end{proof}

Interestingly, the user complexity matches those achieved via \emph{approximate-DP} algorithms from \cite{LiuSYK020,NarayananME22}, except for the first term (i.e., $k/\eps$) whereas the best known $(\eps, \delta)$-DP algorithm of \cite{NarayananME22} works even with $\log(1/\delta)/\eps$ instead. In other words, when $m$ is intermediate, the user complexity between the pure-DP and approximate-DP cases are the same. Only for large $m$ that approximate-DP helps.

\paragraph{Product Distributions.} In this case, $\fP$ is a product distribution over the domain $[k]^d$ (denoted by $\mathrm{PD}(k, d;\alpha)$). We have the following theorem:
\begin{theorem}[Distribution Learning---Product Distributions] \label{thm:prod-dist}
For any sufficiently small $\alpha, \eps > 0$ and for all $m \in \BN$, we have $$ n_m^{\mathrm{PD}(k, d;\alpha)}(\eps, \delta=0) = \tTheta\left(\frac{kd}{\eps} + \frac{kd}{\alpha \eps \sqrt{m}} + \frac{kd}{\alpha^2 m}\right).$$
\end{theorem}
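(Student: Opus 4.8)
The plan is to derive both bounds by instantiating the generic distribution-learning reductions of \Cref{thm:dlearn} with a suitable cover and packing of the family $\mathrm{PD}(k,d)$ of product distributions over $[k]^d$. For the upper bound it suffices to exhibit a $0.1\alpha$-cover of $\mathrm{PD}(k,d)$ under total variation of size $L = \tO(kd)$ (recall that $\size$ denotes the logarithm of the cardinality). First cover the set of all distributions over $[k]$ to TV accuracy $\tfrac{\alpha}{100 d}$ by rounding each of the $k$ probability masses to a multiple of $\tfrac{\alpha}{100 dk}$; this needs only $(O(dk/\alpha))^{k}$ distributions. The family $\fQ$ of products whose $d$ marginals each come from this per-coordinate cover then has $\log|\fQ| \le O(kd\log(dk/\alpha)) = \tO(kd)$, and by subadditivity of TV over products, $d_{\TV}(\prod_j P_j, \prod_j Q_j) \le \sum_j d_{\TV}(P_j, Q_j)$, it is a $0.1\alpha$-cover of $\mathrm{PD}(k,d)$. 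Plugging $L = \tO(kd)$ into the upper bound of \Cref{thm:dlearn} yields the claimed bound.

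For the lower bound I would construct a $3\alpha$-packing $\fQ = \{\cD_1,\dots,\cD_W\} \subseteq \mathrm{PD}(k,d)$ under TV with $W = 2^{\Omega(kd)}$ and KL-diameter $O(\alpha^2)$, so that \Cref{thm:dlearn} with $L = \log W = \Omega(kd)$ and $\beta = O(\alpha^2)$ gives the matching lower bound. Assume $k$ is even, take a constant-rate constant-distance code $u_1,\dots,u_W \in \{0,1\}^{kd/2}$ from \Cref{thm:ecc} with pairwise Hamming distance $\Theta(kd)$, index its bits by pairs $(j,\ell)$ with $j \in [d]$ and $\ell \in [k/2]$, and set $\eta := c\alpha/\sqrt{d}$ for a constant $c$. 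Let $\cD_i := \cD_i^{(1)} \times \cdots \times \cD_i^{(d)}$, where $\cD_i^{(j)}(2\ell - 1) = \tfrac{1}{k}(1 + \eta\, w_{i,j,\ell})$ and $\cD_i^{(j)}(2\ell) = \tfrac{1}{k}(1 - \eta\, w_{i,j,\ell})$ with $w_{i,j,\ell} := 2(u_i)_{j,\ell} - 1 \in \{-1,+1\}$. The KL bound is the routine part: since $\chi^2$ tensorizes, $d_{\KL}(\cD_i \| \cD_{i'}) \le d_{\chi^2}(\cD_i \| \cD_{i'}) = \prod_{j} (1 + d_{\chi^2}(\cD_i^{(j)} \| \cD_{i'}^{(j)})) - 1 \le \exp(\sum_j d_{\chi^2}(\cD_i^{(j)} \| \cD_{i'}^{(j)})) - 1$, and because a disagreeing pair in coordinate $j$ contributes $O(\eta^2/k)$ we get $\sum_j d_{\chi^2}(\cD_i^{(j)} \| \cD_{i'}^{(j)}) = O(\tfrac{\eta^2}{k}\|u_i - u_{i'}\|_0) = O(\eta^2 d) = O(\alpha^2)$.

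The main obstacle is the TV lower bound $d_{\TV}(\cD_i, \cD_{i'}) \ge 3\alpha$, since TV does not tensorize and the crude estimate $d_{\TV} \ge H^2(\cD_i, \cD_{i'}) = \Theta(\eta^2 d) = \Theta(\alpha^2)$ is off by a quadratic factor. I would handle this by a conditioning/data-processing argument: sampling from $\cD_i^{(j)}$ is equivalent to picking a pair index $L_j \in [k/2]$ uniformly — with the same law for every $i$ — and then an element of that pair from $\Ber(\tfrac{1 + \eta w_{i,j,L_j}}{2})$; and since the conditional supports of $\cD_i$ for distinct choices $(L_1,\dots,L_d)$ are disjoint, $d_{\TV}(\cD_i,\cD_{i'}) = \E_{(L_j)}[d_{\TV}(\prod_j \Ber(\tfrac{1+\eta w_{i,j,L_j}}{2}),\, \prod_j \Ber(\tfrac{1+\eta w_{i',j,L_j}}{2}))]$. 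Conditioned on $(L_j)$, the two products of Bernoullis disagree on exactly $S := |\{j : (u_i)_{j,L_j} \ne (u_{i'})_{j,L_j}\}|$ coordinates, where $S$ is a sum of independent Bernoullis with mean $\tfrac{2}{k}\|u_i - u_{i'}\|_0 = \Theta(d)$ and $S \le d$, so a Markov argument gives $S \ge \Omega(d)$ with constant probability; on that event the data-processing inequality applied to the ``number of ones among the $S$ disagreeing coordinates'' statistic — distributed as $\Bin(S,\tfrac{1+\eta}{2})$ versus $\Bin(S,\tfrac{1-\eta}{2})$ — yields $d_{\TV} \ge \Omega(\min(1, \eta\sqrt{S})) = \Omega(\eta\sqrt{d}) = \Omega(\alpha)$, and choosing the constant $c$ in $\eta$ large enough (which does not affect the $O(\alpha^2)$ KL bound) makes this at least $3\alpha$. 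Feeding $W = 2^{\Omega(kd)}$ and $\beta = O(\alpha^2)$ into \Cref{thm:dlearn} then gives $n_m^{\mathrm{PD}(k,d;\alpha)}(\eps, 0) \ge \tOmega(\tfrac{kd}{\eps} + \tfrac{kd}{\alpha\eps\sqrt{m}} + \tfrac{kd}{\alpha^2 m})$, matching the upper bound up to polylogarithmic factors; this construction also subsumes the single-coordinate case, which coincides with the lower bound in the proof of \Cref{thm:discrete-dist}.
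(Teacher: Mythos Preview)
Your proposal is correct and follows the same approach as the paper: both the upper and lower bounds are obtained by instantiating \Cref{thm:dlearn} with a $0.1\alpha$-cover of size $\tO(kd)$ and a $3\alpha$-packing of size $\Omega(kd)$ with KL-diameter $O(\alpha^2)$; the paper simply cites \cite{BunKSW19} and \cite{AcharyaSZ21} for these two constructions, whereas you spell them out explicitly. One small imprecision: in your TV lower bound the statistic ``number of ones among the $S$ disagreeing coordinates'' is not literally $\Bin(S,\tfrac{1\pm\eta}{2})$ since the signs $w_{i,j,L_j}$ may vary---you need to flip the bit on coordinates with $w_{i,j,L_j}=-1$ before summing---but with that adjustment the data-processing argument goes through exactly as you describe.
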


\begin{proof}
\textbf{Upper Bound.}
Similar to before, we simply take $\fQ$ to be a $0.1\alpha$-cover (under TV distance) of the product  distributions, which is known to have size at most $O(kd\cdot\log(kd/\alpha))$~\cite[Lemma 6.2]{BunKSW19}.

\textbf{Lower Bound.} Acharya et al. \cite[Proof of Theorem 11]{AcharyaSZ21} showed that there exists a family $\fQ$ of product distributions over $[k]^d$ that is a $(3\alpha)$-packing under TV distance and its diameter under KL-divergence is at most $O(\alpha^2)$ of size $\Omega(kd)$. \Cref{thm:dlearn} then gives the lower bound.
\end{proof}

\paragraph{Gaussian Distributions: Known Covariance.} Next, we consider Gaussian distributions $\cN(\mu, \Sigma)$, where $\mu \in \R^d, \Sigma \in \R^{d \times d}$, under the assumption that $\|\mu\| \leq R$ and $\Sigma = I$\footnote{Or equivalently that $\Sigma$ is known; since such an assumption is sufficient to rotate the example to isotropic positions.}.
Let $\gauss(R, d; \alpha)$ denote this task, where $\alpha$ is again the (TV) accuracy. For this problem, we have:

\begin{theorem}[Distribution Learning---Gaussian Distributions with Known Covariance] \label{thm:gaussian-known-cov}
For any sufficiently small $\alpha, \eps > 0$ and for all $m \in \BN$, we have $$ n_m^{\gauss(R, d; \alpha)}(\eps, \delta=0) = \tTheta\left(\frac{d}{\eps} + \frac{d}{\alpha \eps \sqrt{m}} + \frac{d}{\alpha^2 m}\right).$$
\end{theorem}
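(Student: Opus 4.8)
The plan is to derive both the upper and lower bounds as immediate corollaries of the generic distribution-learning result \Cref{thm:dlearn}, exactly as was done for \Cref{thm:discrete-dist,thm:prod-dist}. So everything reduces to two standard constructions for the family $\fP = \{\cN(\mu, I) : \mu \in \R^d, \|\mu\| \le R\}$: a small $0.1\alpha$-cover of $\fP$ under TV distance (for the upper bound), and a large $3\alpha$-packing of $\fP$ under TV distance with KL-diameter $O(\alpha^2)$ (for the lower bound). The two facts I will lean on are the standard sandwich $\Omega(\min\{1,\|\mu - \mu'\|\}) \le d_{\TV}(\cN(\mu, I), \cN(\mu', I)) \le \tfrac12\|\mu - \mu'\|$ and the exact identity $d_{\KL}(\cN(\mu,I)\,\|\,\cN(\mu',I)) = \tfrac12\|\mu-\mu'\|^2$.

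For the upper bound, I would take $\fQ$ to be the image under $\mu \mapsto \cN(\mu, I)$ of a Euclidean $(0.2\alpha)$-net of the radius-$R$ ball. A volumetric argument gives such a net of cardinality $(O(R/\alpha))^d$, hence $\size(\fQ) = \tO(d)$ (treating $\log(R/\alpha)$ as polylogarithmic, which $\tO$ is allowed to hide), and the upper half of the sandwich shows it is a $0.1\alpha$-cover of $\fP$ in TV distance. Feeding $L = \size(\fQ) = \tO(d)$ into the upper-bound half of \Cref{thm:dlearn} yields $n_m^{\gauss(R, d; \alpha)}(\eps, \delta = 0) \le \tO(d/\eps + d/(\alpha\eps\sqrt m) + d/(\alpha^2 m))$.

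For the lower bound, I would reuse the error-correcting-code construction from \Cref{thm:ecc} (just as in the proofs of \Cref{thm:pure-agn-pac,thm:discrete-dist}): pick $u_1, \dots, u_W \in \{0,1\}^d$ with $W = 2^{\Omega(d)}$ and pairwise Hamming distance $\|u_i - u_j\|_0 \ge \kappa d$, then set $\mu_i := \tfrac{c\alpha}{\sqrt d}\, u_i$ for a sufficiently large constant $c$. Then $\|\mu_i - \mu_j\|^2 = \tfrac{c^2\alpha^2}{d}\|u_i - u_j\|_0 = \Theta(\alpha^2)$, so the lower half of the sandwich makes $\{\cN(\mu_i, I)\}_{i\in[W]}$ a $3\alpha$-packing in TV (for small $\alpha$), the KL identity makes its KL-diameter $O(\alpha^2)$, and $\|\mu_i\| \le c\alpha \le R$ (for small $\alpha$) keeps every $\cN(\mu_i,I)$ inside $\fP$. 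Applying the lower-bound half of \Cref{thm:dlearn} with $L = \log W = \Omega(d)$ and $\beta = O(\alpha^2)$ gives the matching $\tOmega(d/\eps + d/(\alpha\eps\sqrt m) + d/(\alpha^2 m))$, completing the proof.

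I do not anticipate any real obstacle; the only thing to watch is calibrating the constant $c$ so that the packing is simultaneously $\ge 3\alpha$-separated in TV, of KL-diameter $O(\alpha^2)$, and contained in the radius-$R$ ball. These are all compatible because both $d_{\TV}$ and $\sqrt{d_{\KL}}$ between isotropic Gaussians scale like $\|\mu_i - \mu_j\|$ up to constants, so a single choice of scale $\|\mu_i - \mu_j\| = \Theta(\alpha)$ works. (I would also remark that $\fP$ is continuous, but the cover used in the upper bound is finite, so \Cref{thm:hypothesis-selection}, and hence \Cref{thm:dlearn}, applies verbatim.)
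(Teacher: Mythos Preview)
Your proposal is correct and follows essentially the same route as the paper: both invoke \Cref{thm:dlearn} with a small TV-cover of size $\tO(d)$ for the upper bound and a $3\alpha$-packing of size $\Omega(d)$ with KL-diameter $O(\alpha^2)$ for the lower bound. The only difference is that the paper cites these two constructions from prior work (\cite[Lemma~6.7]{BunKSW19} for the cover, \cite[Proof of Theorem~12]{AcharyaSZ21} for the packing), whereas you build them explicitly via a volumetric net and the code construction of \Cref{thm:ecc}; these are precisely the constructions underlying the cited results.
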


\begin{proof}
\textbf{Upper Bound.}
It is known that this family of distribution admits a $(0.1\alpha)$-cover (under TV distance) of size $O(d \cdot \log(dR/\alpha))$~\cite[Lemma 6.7]{BunKSW19}. This, together with \Cref{thm:dlearn}, gives the upper bound.

\textbf{Lower Bound.} Again, Acharya et al. \cite[Proof of Theorem 12]{AcharyaSZ21} showed that there exists a family $\fQ$ of isotropic Gaussian distributions with $\|\mu\| \leq O(\sqrt{\log(1/\alpha)})$ that is a $(3\alpha)$-packing under TV distance and its diameter under KL-divergence is at most $O(\alpha^2)$ of size $\Omega(d)$. \Cref{thm:dlearn} then gives the lower bound.
\end{proof}

We note that this problem has already been (implicitly) studied under user-level approximate-DP in~\cite{NarayananME22},\footnote{Actually, Narayanan et al.~\cite{NarayananME22} studied the mean estimation problem, but it is not hard to see that an algorithm for mean estimation also provides an algorithm for learning Gaussian distributions.} who showed that $n_m^{\gauss(R, d; \alpha)}(\eps, \delta > 0) \leq \tO\left(\frac{d}{\alpha \eps \sqrt{m}} + \frac{d}{\alpha^2 m}\right)$. Again, it is perhaps surprising that our pure-DP bound nearly matches this result, except that we have the extra first term (i.e., $d/\eps$).

\paragraph{Gaussian Distributions: Unknown Bounded Covariance.}
Finally, we consider the case where the covariance is also unknown but is assumed to satisfied $I \preceq \Sigma \preceq \kappa I$. We use $\gauss(R, d, \kappa; \alpha)$ to denote this task.

\begin{theorem}[Distribution Learning---Gaussian Distributions with Unknown Bounded Covariance] \label{thm:gaussian-bounded-cov}
Let $\kappa > 1$ be a constant.
For any sufficiently small $\alpha, \eps > 0$ and for all $m \in \BN$, we have $$ n_m^{\gauss(R, d, \kappa; \alpha)}(\eps, \delta=0) = \tTheta_\kappa\left(\frac{d^2}{\eps} + \frac{d^2}{\alpha \eps \sqrt{m}} + \frac{d^2}{\alpha^2 m}\right).$$
\end{theorem}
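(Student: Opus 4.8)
The plan is to follow the recipe of \Cref{thm:discrete-dist,thm:prod-dist,thm:gaussian-known-cov}: obtain the upper bound from a small TV-cover via the first half of \Cref{thm:dlearn}, and the matching lower bound from a TV-packing with small KL-diameter via its second half.

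\textbf{Upper bound.} I would first invoke a covering bound for the family $\fP$ of Gaussians $\cN(\mu,\Sigma)$ with $\|\mu\| \le R$ and $I \preceq \Sigma \preceq \kappa I$: since $\kappa$ is a constant, this family admits a $(0.1\alpha)$-cover under TV distance of size $\tO_\kappa(d^2)$ (obtained by whitening and then discretizing $\mu$ on a grid of radius $R$ and $\Sigma$ on a grid of PSD matrices with spectrum in $[1,\kappa]$; see \cite{BunKSW19}). Plugging $L = \tO_\kappa(d^2)$ into the upper-bound half of \Cref{thm:dlearn} immediately gives $n_m^{\gauss(R,d,\kappa;\alpha)}(\eps,\delta=0) \le \tO_\kappa\left(\frac{d^2}{\eps} + \frac{d^2}{\alpha\eps\sqrt m} + \frac{d^2}{\alpha^2 m}\right)$.

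\textbf{Lower bound.} For the lower bound I would produce a family $\fQ$ of $2^{\Omega(d^2)}$ mean-zero Gaussians with covariances in $[I,\kappa I]$ that is a $(3\alpha)$-packing under TV distance and has diameter $O_\kappa(\alpha^2)$ under KL-divergence; the lower-bound half of \Cref{thm:dlearn} with $L = \Omega(d^2)$ and $\beta = O_\kappa(\alpha^2)$ then yields the stated bound. Concretely, fix $c_0 := (1+\kappa)/2 \in (1,\kappa)$ and set $\Sigma_v := c_0 I + \alpha A_v$, where $\{A_v\}$ is a family of $2^{\Omega(d^2)}$ symmetric matrices supported on the off-diagonal, whose nonzero entries have magnitude $\Theta_\kappa(1/d)$ and which are pairwise $\Theta_\kappa(1)$-separated in Frobenius norm; such a family comes from a constant-rate, constant-distance binary code over the $\binom{d}{2}$ off-diagonal coordinates (\Cref{thm:ecc} with ground set of size $\binom{d}{2}$) after rescaling. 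For $\alpha$ small, $\|\alpha A_v\|_{\mathrm{op}} \le \alpha\|A_v\|_F = O_\kappa(\alpha) < \min(c_0 - 1, \kappa - c_0)$, so $I \preceq \Sigma_v \preceq \kappa I$. A standard computation for Gaussians with nearly equal covariances then gives, for $v \ne v'$, $d_{\TV}(\cN(0,\Sigma_v),\cN(0,\Sigma_{v'})) = \Theta_\kappa(\alpha\|A_v - A_{v'}\|_F) \ge 3\alpha$ (after fixing the rescaling constant) and $d_{\KL}(\cN(0,\Sigma_v)~\|~\cN(0,\Sigma_{v'})) = O_\kappa(\alpha^2\|A_v - A_{v'}\|_F^2) = O_\kappa(\alpha^2)$. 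This is essentially the packing behind the item-level $\Omega(d^2/(\eps\alpha))$ lower bound for Gaussian covariance estimation, so the construction of \cite{AcharyaSZ21} (see also \cite{KamathLSU19}) may be cited directly.

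\textbf{Main obstacle.} The only delicate point is checking the three packing requirements at once: $\|\alpha A_v\|_{\mathrm{op}}$ must be small enough to keep $\Sigma_v$ inside $[I,\kappa I]$; the $A_v$'s must be far in Frobenius norm so that the TV distances do not collapse (this is exactly where the error-correcting-code step enters, now over $\binom{d}{2}$ coordinates rather than $d$); and the KL-diameter must still be $O_\kappa(\alpha^2)$. Since packings with precisely these properties are already established in the item-level covariance-estimation literature, the real content of this step is citing the right construction and normalizing it as \Cref{thm:dlearn} expects; the cover side is routine. Combining the two ingredients with \Cref{thm:dlearn} finishes the proof.
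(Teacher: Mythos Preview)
Your proposal is correct and follows essentially the same approach as the paper: both parts are obtained by plugging, respectively, a $(0.1\alpha)$-TV-cover of size $\tO_\kappa(d^2)$ (from \cite{BunKSW19}) and a $3\alpha$-TV-packing of size $\Omega(d^2)$ with KL-diameter $O(\alpha^2)$ into \Cref{thm:dlearn}. The only cosmetic difference is that the paper cites the packing construction from \cite{DMR20} directly, whereas you sketch it yourself via an error-correcting code over the $\binom{d}{2}$ off-diagonal covariance entries and attribute it to \cite{AcharyaSZ21,KamathLSU19}; either route yields the same parameters.
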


\begin{proof}
\textbf{Upper Bound.}
This follows from \Cref{thm:dlearn} and a known upper bound of $O(d^2 \log(d \kappa / \alpha) + d \cdot \log(dR/\alpha))$ on the size of a $(0.1\alpha)$-cover (under TV distance) of the family~\cite[Lemma 6.8]{BunKSW19}.

\textbf{Lower Bound.} Devroye et al.~\cite{DMR20} showed\footnote{Specifically, this is shown in~\cite[Proposition 3.1]{DMR20}; their notation ``$m$'' denote the number of non-zero (non-diagonal) entries of the covariance matrix which can be set to $\Omega(d^2)$ for our purpose.} that, for $\kappa \geq 1 + O(\alpha)$, there exists a $3\alpha$-packing under TV distance whose diameter under the KL-divergence is at most $O(\alpha^2)$ of size $\Omega(d^2)$. This, together with~\Cref{thm:dlearn}, gives the lower bound.
\end{proof}

We remark that it is simple to see that Gaussian with unbounded mean or unbounded covariance cannot be learned with pure (even user-level) DP using a finite number of examples.

\end{document}